\newcommand{\Abstract}{
\begin{abstract}
We present several new sufficient conditions for uniform boundedness of the reduced correlations and free energy of an abstract polymer system in a complex multidisc around zero fugacity. They resolve a discrepancy between two incomparable and previously known extensions of Dobrushin's  classic condition. All conditions arise from an extension of the tree-operator approach introduced by Fernández and Procacci combined with a novel family of partition schemes of the spanning subgraph complex of a cluster. The key technique is the increased transfer of structural information from the partition scheme to a tree-operator on an enhanced space.
\end{abstract}
}
\newcommand{\TitleFull}{Sufficient conditions for uniform bounds in abstract polymer systems and explorative partition schemes}
\newcommand{\TitleShort}{SCUBs and partition schemes}
\newcommand{\TitlePDF}{SCUBs\ and\ partition\ schemes} 
\newcommand{\AuthorsFull}{Temmel Christoph (math@temmel.me)\thanks{The author acknowledges the support of the VIDI project ``Phase transitions, Euclidean fields and random fractals'', NWO 639.032.916}}
\newcommand{\AuthorsShort}{Temmel}
\newcommand{\Keywords}{Keywords:
cluster expansion,
abstract polymer system,
partition scheme,
tree-operator,
hardcore gas}
\newcommand{\Head}{
 \maketitle
 \Abstract{}
 \Keywords{}\\\MSC{}\\
 
 \noindent This is an extended version of~\cite{Temmel__SufficientConditionsForUniformBoundsInAbstractPolymerSystemsAndExplorativePartitionSchemes__JSP_2014}.
 \tableofcontents
 \listoffigures
}
\newtheorem{Thm}{Theorem}
\newtheorem{Lem}[Thm]{Lemma}
\newtheorem{Prop}[Thm]{Proposition}
\newtheorem{Alg}[Thm]{Algorithm}
\newtheorem{Que}[Thm]{Question}
\newtheorem{Exam}[Thm]{Example}
\theoremstyle{remark}
\newtheorem*{Rem}{Remark}
\newcommand{\Then}{\,\Rightarrow\,}
\newcommand{\Iff}{\,\Leftrightarrow\,}
\newcommand{\ForAll}{\forall\,}
\newcommand{\Exists}{\exists\,}
\newcommand{\NExists}{\nexists\,}
\newcommand{\Formally}[1]{\stackrel{F}{#1}}
\newcommand{\NatNum}{\mathbb{N}}
\newcommand{\NatNumZero}{\NatNum_0} 
\newcommand{\ComplexNum}{\mathbb{C}}
\newcommand{\Set}[1]{{\{#1\}}}
\newcommand{\BigSet}[1]{{\left\{#1\right\}}}
\newcommand{\IntRange}[1]{{[#1]}}
\newcommand{\Cardinality}[1]{{|#1|}}
\newcommand{\Modulus}[1]{|#1|}
\newcommand{\Iverson}[1]{{[#1]}}
\newcommand{\Identity}[1]{{\text{id}_{#1}}}
\newcommand{\Factorial}[1]{#1!}
\newcommand{\FiniteSubsetOf}{\Subset}
\DeclareMathOperator{\Support}{supp}
\DeclareMathOperator{\InteriorOf}{Int}
\newcommand{\Tree}{\mathbb{T}}
\newcommand{\TreePath}[2]{{P(#1,#2)}}
\newcommand{\Parent}[1]{{\mathfrak{p}(#1)}}
\newcommand{\Confluent}{{\curlywedge}}
\newcommand{\Edges}[1]{E(#1)}
\newcommand{\Vertices}[1]{V(#1)}
\newcommand{\GraphMetricSymbol}{d}
\newcommand{\GraphMetricOf}[3]{\GraphMetricSymbol_{#1}(#2,#3)}
\newcommand{\ConnectedTo}{\leftrightarrow}
\newcommand{\AdjacentTo}{\backsim}
\newcommand{\SubGraph}{\le}
\newcommand{\ActionRemove}{REMOVE}
\newcommand{\ActionSelect}{SELECT}
\newcommand{\AdmissibleEdgesTree}[1]{\mathcal{A}_{#1}(\Tree)}
\newcommand{\Cluster}[1]{G(#1)}
\newcommand{\CompatibleTo}{{\,\not\approx\,}}
\newcommand{\ConflictingEdgesTree}[1]{\mathcal{C}_{#1}(\Tree)}
\newcommand{\EscapePairSet}{\PolymerSet_{\star}}
\newcommand{\EscapeInjection}{\mathfrak{i}}
\newcommand{\EscapePair}{{(\Polymer,\EscapePolymer)}}
\newcommand{\EscapePolymer}{\varepsilon}
\newcommand{\Exhausts}{\nearrow}
\newcommand{\ExplorationAlgorithm}[1]{\mathcal{E}_{#1}}
\newcommand{\FiniteVolume}{\Lambda}
\newcommand{\FreeEnergy}[1]{F_{#1}}
\newcommand{\IncompatibleTo}{{\,\approx\,}}
\newcommand{\IncompatiblePolymers}[1]{\mathcal{I}(#1)}
\newcommand{\IncompatibleOtherPolymers}[1]{\mathcal{I}^\star(#1)}
\newcommand{\LabelSpace}{\mathcal{L}}
\newcommand{\Leop}{\phi} 
\newcommand{\LeopGeneric}[1]{\Leop_{#1}^{\text{gen}}}
\newcommand{\LeopKP}[1]{\Leop_{#1}^{\text{KP}}}
\newcommand{\LeopDob}[1]{\Leop_{#1}^{\text{Dob}}}
\newcommand{\LeopFP}[1]{\Leop_{#1}^{\text{FP}}}
\newcommand{\LeopRed}[1]{\Leop_{#1}^{\text{red}}}
\newcommand{\LeopRet}[1]{\Leop_{#1}^{\text{ret}}}
\newcommand{\LeopRetMulti}[1]{\Leop_{#1}^{\text{m-ret}}}
\newcommand{\LeopSup}[1]{\Leop_{#1}^{\text{sup}}}
\newcommand{\LeopMulti}[1]{\Leop_{#1}^{\text{mul}}}
\newcommand{\LeopSynthetic}[2]{\Leop_{#2}^{\text{syn}(#1)}}
\newcommand{\LeopMixing}[2]{\Leop_{#2}^{\text{mix}(#1)}}
\newcommand{\OPPR}[2]{\varphi_{#1}^{#2}} 
\newcommand{\ParameterDisc}[1]{\mathcal{R}_{#1}}
\newcommand{\PartitionFunction}[1]{\Xi_{#1}}
\newcommand{\PDPS}{\ParameterDisc{\PolymerSet}}
\newcommand{\PinnedConnectedExpression}[2]{
 \frac%
  {\partial\log\PartitionFunction{#1}}
  {\partial z_{#2}}
}
\newcommand{\PinnedSeries}[1]{{\Psi_{#1}}}
\newcommand{\PinnedSeriesEscaping}[1]{{\Psi_{#1}^{\star}}}
\newcommand{\Polymer}{\gamma}
\newcommand{\PolymerOther}{\xi}
\newcommand{\PolymerSet}{\mathcal{P}}
\newcommand{\PolymersMinusEscapeSet}{\PolymerSet\setminus\Set{\EscapePolymer}}
\newcommand{\PowerFracDual}[2]{{\PowerFracTriple{#1}{\phantom{}}{#2}}}
\newcommand{\PowerFracDualDMinusOne}{\PowerFracDual{(D-1)}{D}}
\newcommand{\PowerFracTriple}[3]{{\frac{#1^{#1}#2^{#2}}{#3^{#3}}}}
\newcommand{\ReducedCorrelation}[2]{\Phi_{#1}^{#2}}
\newcommand{\ReturningClass}[2]{[#1]_{(#2)}}
\newcommand{\ReturningEquivalent}[1]{\sim_{(#1)}}
\newcommand{\ReturningEquivalentKMinusOne}[1]{\ReturningEquivalent{k-1}}
\newcommand{\ReturningEquivalentK}{\ReturningEquivalent{k}}
\newcommand{\ReturningEquivalentKPlusOne}{\ReturningEquivalent{k+1}}
\newcommand{\ReturningStatus}{s}
\newcommand{\ReturningSame}{\mathbf{S}}
\newcommand{\ReturningDifferent}{\mathbf{D}}
\newcommand{\RTVarDepth}{t}
\newcommand{\RTVarLabel}{\tau}
\newcommand{\RTVarSize}{\tau}
\newcommand{\RTLevel}[2]{L_{#2}(#1)}
\newcommand{\RTNonRootVertices}[1]{W(#1)}
\newcommand{\RTPermutationCount}[1]{\mathfrak{p}(#1)}
\newcommand{\RTKSubtreeAt}[2]{{#1}_k^{#2}}
\newcommand{\RTsDepth}[1]{\mathfrak{T}^{=}_{#1}}
\newcommand{\RTsDepthLessThan}[1]{\mathfrak{T}^{\le}_{#1}}
\newcommand{\RTsDepthFinite}{\mathfrak{T}^{<}_{\infty}}
\newcommand{\RTsSize}[1]{\mathcal{T}_{#1}}
\newcommand{\RTsSizeFinite}{\RTsSize{\infty}}
\newcommand{\Stars}{\mathcal{S}}
\newcommand{\StarAt}[2]{S_{#1}^{#2}}
\newcommand{\StarLeaves}[1]{L(#1)}
\newcommand{\StarPermutations}[1]{\Cardinality{\StarLeaves{#1}}!}
\newcommand{\SchemeAnonymous}{S}
\newcommand{\SchemeGeneric}{Gen}
\newcommand{\SchemePenrose}{Pen}
\newcommand{\SchemeReturning}{Ret}
\newcommand{\SchemeSynthetic}{Syn}
\newcommand{\SingletonTrees}[1]{\mathcal{T}_{#1}}
\newcommand{\SpanningSubgraphs}[1]{\mathcal{C}_{#1}}
\newcommand{\SpanningTrees}[1]{\mathcal{T}_{#1}}
\newcommand{\StepStyle}[1]{\textbf{(#1)}}
\newcommand{\StepPenCousins}{\StepStyle{pen-cousins}}
\newcommand{\StepPenElders}{\StepStyle{pen-elders}}
\newcommand{\StepPenParent}{\StepStyle{pen-parent}}
\newcommand{\StepPenUncles}{\StepStyle{pen-uncles}}
\newcommand{\StepGenBoundary}{\StepStyle{gen-boundary}}
\newcommand{\StepGenIgnored}{\StepStyle{gen-ignored}}
\newcommand{\StepGenParent}{\StepStyle{gen-parent}}
\newcommand{\StepGenUncles}{\StepStyle{gen-uncles}}
\newcommand{\StepGenCousins}{\StepStyle{gen-cousins}}
\newcommand{\StepGreedyBoundary}{\StepStyle{greedy-boundary}}
\newcommand{\StepGreedyIgnored}{\StepStyle{greed-ignored}}
\newcommand{\StepGreedyParent}{\StepStyle{greedy-parent}}
\newcommand{\StepGreedyUncles}{\StepStyle{greedy-uncles}}
\newcommand{\StepGreedyCousins}{\StepStyle{greedy-cousins}}
\newcommand{\StepRetSameBoundary}{\StepStyle{ret-same-boundary}}
\newcommand{\StepRetSameIgnored}{\StepStyle{ret-same-ignored}}
\newcommand{\StepRetSameParent}{\StepStyle{ret-same-parent}}
\newcommand{\StepRetSameUncles}{\StepStyle{ret-same-uncles}}
\newcommand{\StepRetSameCousins}{\StepStyle{ret-same-cousins}}
\newcommand{\StepRetDiffBoundary}{\StepStyle{ret-diff-boundary}}
\newcommand{\StepRetDiffIgnored}{\StepStyle{ret-diff-ignored}}
\newcommand{\StepRetDiffParent}{\StepStyle{ret-diff-parent}}
\newcommand{\StepRetDiffUnclesDifferent}{\StepStyle{ret-diff-uncles-diff}}
\newcommand{\StepRetDiffUnclesSame}{\StepStyle{ret-diff-uncles-same}}
\newcommand{\StepRetDiffCousins}{\StepStyle{ret-diff-cousins}}
\newcommand{\StepSynGreedyBoundary}{\StepStyle{syn-greedy-boundary}}
\newcommand{\StepSynGreedyIgnored}{\StepStyle{syn-greedy-ignored}}
\newcommand{\StepSynGreedyParent}{\StepStyle{syn-greedy-parent}}
\newcommand{\StepSynGreedyUncles}{\StepStyle{syn-greedy-uncles}}
\newcommand{\StepSynGreedyCousins}{\StepStyle{syn-greedy-cousins}}
\newcommand{\StepSynRetBoundary}{\StepStyle{syn-ret-boundary}}
\newcommand{\StepSynRetIgnored}{\StepStyle{syn-ret-ignored}}
\newcommand{\StepSynRetParent}{\StepStyle{syn-ret-parent}}
\newcommand{\StepSynRetUnclesReturning}{\StepStyle{syn-ret-uncles-ret}}
\newcommand{\StepSynRetUnclesGreedy}{\StepStyle{syn-ret-uncles-greedy}}
\newcommand{\StepSynRetCousins}{\StepStyle{syn-ret-cousins}}
\newcommand{\BehaviourGreedy}{\mathbf{G}}
\newcommand{\BehaviourReturning}{\mathbf{R}}
\newcommand{\BehaviourSetIndexedBy}[1]{\Set{\BehaviourGreedy,\BehaviourReturning}^{#1}}
\newcommand{\MixingVector}{\vec{g}}
\newcommand{\SyntheticVector}{\vec{g}}
\newcommand{\IncompatiblesBehaviourReturningIn}[2]{\operatorname{In}^{#1}_{#2}}
\newcommand{\IncompatiblesBehaviourReturningOut}[2]{\operatorname{Out}^{#1}_{#2}}
\newcommand{\Ursell}[1]{\mathfrak{u}(#1)}
\tikzstyle{vertex}  = [circle, draw, minimum size = 17pt, inner sep = 0pt]
\tikzstyle{vertexlong}  = [ellipse, draw, inner sep = 2pt]
\tikzstyle{marked} = [very thick, color = red,densely dotted]
\newcommand{\selfedge}[2]{\draw[out=#2-30, in=#2+30, looseness=7] (#1) to (#1);}
\newcommand{\selfedgemark}[2]{\draw[out=#2-30, in=#2+30, looseness=7, marked] (#1) to (#1);}
\newcommand{\drawnode}[2]{\node[vertex] (#1) at (#2) {$#1$};}
\newcommand{\drawnodebullet}[2]{\node[inner sep = 1pt] (#1) at (#2) {$\bullet$};}
\newcommand{\drawnodeempty}[2]{\node (#1) at (#2) {};}
\newcommand{\MxyLeopComparison}[2]{
\xymatrix@C=#1@R=#2{
  *+{}
  &*+{}
  &*+{}
  &*+{\LeopFP{\Polymer}(\vec{\mu})}
  &*+{}
  \\*+{\LeopKP{\Polymer}(\vec{\mu})}
     \ar@{}[r]|-{\begin{turn}{0}$\le$\end{turn}}
  &*+{\LeopDob{\Polymer}(\vec{\mu})}
     \ar@{}[urr]|-{\begin{turn}{30}$\le$\end{turn}}
  &*+{}
  &*+{}
  &*+{}
  &*+{\LeopMixing{\MixingVector}{\Polymer}}"
    \ar@{}[ull]|-{
      \begin{turn}{-30}
      $\stackrel{\MixingVector=\vec{\BehaviourGreedy}}{\le}$
      \end{turn}
    }
  \\*+{}
  &*+{}
  &*+{\LeopRed{\Polymer}(\vec{\mu})}
    \ar@{}[ul]|-{\begin{turn}{-45}$\le$\end{turn}}
  &*+{}
  &*+{\LeopRet{\Polymer}(\vec{\mu})}
    \ar@{}[ll]|-{\begin{turn}{0}$\le$\end{turn}}
    \ar@{}[ur]|-{
      \begin{turn}{45}
      $\stackrel{\MixingVector=\vec{\BehaviourReturning}}{\le}$
      \end{turn}
    }
  &*+{}
}}
\newcommand{\OnlyPrivate}[1]{}
\newcommand{\AlsoArxiv}[1]{#1}
\title{\TitleFull{}}
\author{\AuthorsFull{}}
\date{}
\begin{document}

\Head{}

\section{Introduction}
\label{sec_introduction}

A recurring theme in statistical mechanics is the search for improved lower bounds of the radius of analyticity of the cluster expansion of a locally finite abstract polymer system around zero fugacity. This demands a convergence argument and a sufficient condition for uniform boundedness (short SCUB) of finite volume quantities.\\

The first SCUBs have been derived by Gruber \& Kunz~\cite{Gruber_Kunz__GeneralPropertiesOfPolymerSystems__CMP_1971}, still in the context of subset polymer systems, and Koteck\'{y} \& Preiss~\cite{Kotecky_Preiss__ClusterExpansionForAbstractPolymerModels__CMP_1986}. Both are based on cluster expansion techniques. A major improvement, in particular because of its short proof, has been Dobrushin's SCUB~\cite{Dobrushin__EstimatesOfSemiInvariantsForTheIsingModelAtLowTemperatures__AMST_1996}. It first bounds the reduced correlations, that is ratios of partition functions, which entails the boundedness of the free energy.\\

Recent work by Fern\'{a}ndez \& Procacci~\cite{Fernandez_Procacci__ClusterExpansionForAbstractPolymerModels_NewBoundsFromAnOldApproach__CMP_2007} uses an identity by Penrose~\cite{Penrose__ConvergenceOfFugacityExpansionsForClassicalSystems__SMFA_1967} to convert the problem of bounding the cluster expansion to determining a non-trivial fixpoint of a tree-operator. Their SCUB improves on Dobrushin's SCUB by not only accounting for the number of polymers incompatible with each polymer, but also their mutual incompatibilities. They show that all previously known SCUBs arise within their framework of tree-operators and are relaxations of their SCUB.\\

There is another Dobrushin-style SCUB building on work of Scott \& Sokal~\cite{Scott_Sokal__TheRepulsiveLatticeGasTheIndependentSetPolynomialAndTheLovaszLocalLemma__JSP_2005}, which reduces the degree of Dobrushin's SCUB by one. It is better than Fern\'{a}ndez \& Procacci's SCUB on triangle-free graphs and optimal on trees, but may be worse on general graphs.\\

The motivation of this paper is to resolve the above discrepancy. We want to interpret the above reduced-degree SCUB in the context of tree-operators, improve it analogously to Fern\'{a}ndez \& Procacci's work and clarify its relation with the already known SCUBs. We present several improvements of the reduced-degree SCUB of increasing strength and complexity. All new and known SCUBs arise in an extended tree-operator framework and are specialisations of a strongest SCUB. \\

The key mathematical technique is a new family of partition schemes of the spanning subgraph complex of a cluster (in the context of a cluster expansion of the partition function). These partition schemes encode more of the local structure of a cluster induced by the polymer system. This yields tighter relaxations of Penrose's theorem, a classic tree-graph identity. An extended tree-operator framework uses the additional information from the tighter relaxation to derive SCUBs with an increased domain. We consider this approach to partition schemes and tree-operators amenable to future exploitation.\\

There is a deep connection between the Lov\'{a}sz Local Lemma~\cite{Erdos_Lovasz__ProblemsAndResultsOn3ChromaticHypergraphsAndSomeRelatedQuestions__CMSJB_1975} and the partition function of the abstract polymer system at negative real fugacity discovered by Scott \& Sokal~\cite{Scott_Sokal__TheRepulsiveLatticeGasTheIndependentSetPolynomialAndTheLovaszLocalLemma__JSP_2005}. As already done for Fern\'{a}ndez \& Procacci's SCUB~\cite{Bissacot_Fernandez_Proccaci_Scoppola__AnImprovementOfTheLovaszLocalLemmaViaClusterExpansion__CPC_2011}, the new SCUBs also improve the Lov\'{a}sz Local Lemma.\\

The organisation of this paper is as follows: In the following subsections we introduce notation, key quantities and basic identities. Section~\ref{sec_scub} lists the SCUBs and discusses them. Section~\ref{sec_inductivePolymerLevelProofs} contains Dobrushin-style inductive proofs. Section~\ref{sec_treeOperatorFramework} contains a review of the tree-operator framework and the common part of the proof of the new SCUBs. Section~\ref{sec_partitionSchemes} is all about our new family of partition schemes. Finally, section~\ref{sec_proof} combines the explorative partition schemes and the tree-operator framework to derive the new SCUBs.

\subsection{Setup and notation}
\label{sec_setupAndNotation}
We consider an at most countable set of polymers $\PolymerSet$ with a symmetric and reflexive incompatibility relation $\IncompatibleTo$. We choose $\IncompatibleTo$ instead of the traditional $\CompatibleTo$ to mimic the standard graph theoretic notation for adjacency in the induced graph $(\PolymerSet,\IncompatibleTo)$. We assume that the graph $(\PolymerSet,\IncompatibleTo)$ is connected. It may be locally infinite. For each polymer $\Polymer$, we write $\IncompatiblePolymers{\Polymer}$ for the set of polymers incompatible with $\Polymer$, and $\IncompatibleOtherPolymers{\Polymer}:= \IncompatiblePolymers{\Polymer}\setminus\Set{\Polymer}$ for the set of polymers incompatible with and different than $\Polymer$.\\

For every finite subset of polymers $\FiniteVolume\FiniteSubsetOf\PolymerSet$, we define the \emph{grand canonical partition function} $\PartitionFunction{\FiniteVolume}: \ComplexNum^\FiniteVolume\to\ComplexNum$ by
\begin{equation}\label{eq_partitionFunction}
 \PartitionFunction{\FiniteVolume}(\vec{z}):=
  \sum_{n\ge 0} \frac{1}{\Factorial{n}}
  \sum_{\vec{\PolymerOther}\in\FiniteVolume^n}
   \left(
    \prod_{1\le i<j\le n} \Iverson{\PolymerOther_i\CompatibleTo\PolymerOther_j}
   \right)
  \prod_{i=1}^n z_{\PolymerOther_i}
 \,,
\end{equation}
where $\vec{z}$ are the \emph{activities} or \emph{fugacities} on $\FiniteVolume$ and $\Iverson{.}$ is an Iverson bracket. In particular, $\PartitionFunction{\emptyset}(\vec{z})=1$. The partition function is the generating function of weighted compatible subsets of $\FiniteVolume$. \AlsoArxiv{The function $\PartitionFunction{\FiniteVolume}(\vec{z})$ is affine in each parameter $z_\Polymer$.}\\

We lift scalar arithmetic operations and comparisons component-wise to vectors. The \emph{projection} of a vector $\vec{x}:=(x_\Polymer)_{\Polymer\in\PolymerSet}$ along the coordinates indexed by $\FiniteVolume\FiniteSubsetOf\PolymerSet$ is $\vec{x}_\FiniteVolume$, if needed for disambiguation, otherwise silently ignoring superfluous coordinates.
\subsection{Key quantities and identities}
\label{sec_key}
This section states the key quantities of interest and relations between them. Every definition and relation is a priori just formal, as not all the divisions are well-defined for all complex $\vec{z}$. They are well-defined, though, if one of the SCUBs is satisfied. There are two principal quantities of interest. The first is the \emph{reduced correlation}~\cite{Gruber_Kunz__GeneralPropertiesOfPolymerSystems__CMP_1971} of $m$ distinct polymers $\PolymerOther_1,\dotsc,\PolymerOther_m\in\FiniteVolume$
\begin{subequations}\label{eq_quantitiesOfInterest}
\begin{equation}\label{eq_reducedCorrelation}
 \ReducedCorrelation{\FiniteVolume}{\PolymerOther_1,\dotsc,\PolymerOther_m}(\vec{z})
 :=\frac%
  {\PartitionFunction{
   \FiniteVolume\setminus
   \bigcup_{i=1}^m\IncompatiblePolymers{\PolymerOther_i}
   }
   (\vec{z})
  }
  {\PartitionFunction{\FiniteVolume}(\vec{z})}\,.
\end{equation}
The second is the \emph{free energy}
\begin{equation}\label{eq_freeEnergy}
 \FreeEnergy{\FiniteVolume}(\vec{z})
 := -\frac%
  {\log\PartitionFunction{\FiniteVolume}(\vec{z})}
  {\Cardinality{\FiniteVolume}}\,.
\end{equation}
\end{subequations}
Our aim are bounds on these quantities independently of $\FiniteVolume$, that is in the \emph{thermodynamic limit}. From here on, we assume that $\Polymer\in\FiniteVolume$. The principal quantities of interest are the \emph{pinned connected function}~\cite[(2.24)]{Faris__CombinatoricsAndClusterExpansions__ProbSur_2010}
\begin{subequations}
\begin{equation}\label{eq_pinnedConnectedFunction}
 \PinnedConnectedExpression{\FiniteVolume}{\Polymer}(\vec{z})
\end{equation}
and the \emph{one polymer partition ratios}
\begin{equation}\label{eq_onePolymerRatio}
 \OPPR{\FiniteVolume}{\Polymer}(\vec{z})
 :=\frac%
  {\PartitionFunction{\FiniteVolume}(\vec{z})}
  {\PartitionFunction{\FiniteVolume\setminus\Set{\Polymer}}(\vec{z})}\,.
\end{equation}
\end{subequations}
We recall below how each of these two quantities suffices to express all other quantities. For $\Set{\PolymerOther_1,\dotsc,\PolymerOther_n}:=\FiniteVolume\FiniteSubsetOf\PolymerSet$ and $\FiniteVolume_i:=\Set{\PolymerOther_1,\dotsc,\PolymerOther_i}$, the \emph{telescoping identity} expresses the partition function as a product of one polymer partition ratios:
\begin{equation}\label{eq_telescoping}
 \PartitionFunction{\FiniteVolume}(\vec{z})
 =\prod_{i=1}^n \OPPR{\FiniteVolume_i}{\PolymerOther_i}(\vec{z})\,.
\end{equation}
Similarly, each reduced correlation is a product of inverses of suitably chosen one polymer partition ratios. The pinned connected function is a product of reduced correlations~\cite[(3.8)]{Scott_Sokal__TheRepulsiveLatticeGasTheIndependentSetPolynomialAndTheLovaszLocalLemma__JSP_2005}. The logarithm of the reduced correlations is an integral over the pinned connected function~\cite[(A.3)]{Bissacot_Fernandez_Proccaci__OnTheConvergenceOfClusterExpansionsForPolymerGases__JSP_2010}.
\AlsoArxiv{See section~\ref{sec_pinnedVsOPPR} for the details.}

\subsection{Cluster expansion and the worst case}
\label{sec_worstCase}
Let $I$ be a finite set. A vector $\vec{\PolymerOther}:=(\PolymerOther_i)_{i\in I}\in\PolymerSet^I$ has \emph{support}
\begin{equation}\label{eq_support}
 \Support\vec{\PolymerOther}:=
 \Set{\Polymer\in\PolymerSet:\quad\Exists i\in I:\quad\PolymerOther_i=\Polymer}\,.
\end{equation}
The vector $\vec{\PolymerOther}$ \emph{induces} the graph
\begin{equation}\label{eq_cluster}
 \Cluster{\vec{\PolymerOther}}
 :=(I,\Set{(i,j)\in I^2:\quad\PolymerOther_i\IncompatibleTo\PolymerOther_j})\,.
\end{equation}
We call $\Cluster{\vec{\PolymerOther}}$ a \emph{cluster}, if it is connected.
Define the \emph{Ursell functions}~\cite{Ursell__TheEvaluationOfGibbsPhaseIntegralForImperfectGases__MPCPS_1927} (or \emph{semi-invariants}~\cite{Dobrushin__EstimatesOfSemiInvariantsForTheIsingModelAtLowTemperatures__AMST_1996} or \emph{truncated functions}~\cite{Scott_Sokal__TheRepulsiveLatticeGasTheIndependentSetPolynomialAndTheLovaszLocalLemma__JSP_2005}) as
\begin{equation}\label{eq_ursellFunction}
 \Ursell{\vec{\PolymerOther}} :=
 \begin{cases}
  1
  &\text{if }\Cardinality{I}= 1\,,\\
  \displaystyle\sum_{H\text{ spans }\Cluster{\vec{\PolymerOther}}} (-1)^{\Cardinality{\Edges{H}}}
  &\text{if }\Cardinality{I}\ge 2\text{ and }\Cluster{\vec{\PolymerOther}}\text{ connected,}\\
  0
  &\text{else.}
 \end{cases}
\end{equation}
The \emph{cluster expansion}~\cite{MiracleSole__OnTheTheoryOfClusterExpansions__MPRF_2010}\cite[section 2.5]{Faris__CombinatoricsAndClusterExpansions__ProbSur_2010} is a formal expansion of the logarithm of the partition function
\begin{subequations}\label{eq_expansions}
\begin{equation}\label{eq_expansion_logarithm}
 \log\PartitionFunction{\FiniteVolume}(\vec{z})
 \Formally{=}
  \sum_{n\ge 1} \frac{1}{\Factorial{n}}
  \sum_{\vec{\PolymerOther}\in\FiniteVolume^n}
  \Ursell{\vec{\PolymerOther}}
  \prod_{i=1}^n z_{\PolymerOther_i}
\end{equation}
and the pinned connected function
\begin{equation}\label{eq_expansion_pinned}
 \PinnedConnectedExpression{\FiniteVolume}{\Polymer}(\vec{z})
 \Formally{=}
  \sum_{n\ge 0}\frac{1}{\Factorial{n}}
  \sum_{\vec{\PolymerOther}\in\Set{\Polymer}\times\FiniteVolume^n}
  \Ursell{\vec{\PolymerOther}}
  \prod_{i=1}^n z_{\PolymerOther_i}\,.
\end{equation}
\end{subequations}

The \emph{Ursell functions} have the \emph{alternating sign property}~\cite[proposition 2.8]{Scott_Sokal__TheRepulsiveLatticeGasTheIndependentSetPolynomialAndTheLovaszLocalLemma__JSP_2005}
\begin{equation}\label{eq_alternatingSignProperty}
 \ForAll\vec{\Polymer}\in\PolymerSet^I:
 \quad
 (-1)^{\Cardinality{I}+1}\Ursell{\vec{\PolymerOther}}\ge 0\,.
\end{equation}

Hence, the worst case is for negative real fugacities~\cite{Fernandez_Procacci__ClusterExpansionForAbstractPolymerModels_NewBoundsFromAnOldApproach__CMP_2007}:
\begin{equation}\label{eq_worstCase}
 \left|\PinnedConnectedExpression{\FiniteVolume}{\Polymer}(\vec{z})\right|
 \le
 \PinnedConnectedExpression{\FiniteVolume}{\Polymer}(-\Modulus{\vec{z}})
 \qquad
 \text{ and }
 \qquad
 \OPPR{\FiniteVolume}{\Polymer}(-\Modulus{\vec{z}})
 \le
 \Modulus{\OPPR{\FiniteVolume}{\Polymer}(\vec{z})}\,.
\end{equation}

\section{Sufficient conditions for uniform boundedness}
\label{sec_scub}
This paper is about sufficient conditions for uniform boundedness (short SCUB) of the reduced correlation and the free energy~\eqref{eq_quantitiesOfInterest} for small absolute values of the fugacity. A SCUB implies uniform (in $\FiniteVolume$) bounds of either the form
\begin{subequations}\label{eq_uniformBounds}
\begin{align}
 \label{eq_uniformBounds_pinned}
 \Exists \vec{C}\in\,]0,\infty[^\PolymerSet:
 \ForAll\Polymer\in\FiniteVolume\FiniteSubsetOf\PolymerSet:
 &\quad \PinnedConnectedExpression{\FiniteVolume}{\Polymer}(-\Modulus{\vec{z}})
  \le C_\Polymer\,,
\intertext{or the form}
 \label{eq_uniformBounds_onePolymerRatio}
 \Exists \vec{c}\in\,]0,\infty[^\PolymerSet:
 \ForAll\Polymer\in\FiniteVolume\FiniteSubsetOf\PolymerSet:
 &\quad c_\Polymer\le
 \OPPR{\FiniteVolume}{\Polymer}(-\Modulus{\vec{z}})\,.
\end{align}
\end{subequations}
As we may express the pinned connected expression in the one polymer partition ratios and vice-versa, the bounds in~\eqref{eq_uniformBounds} are equivalent. Such a uniform bound also removes the formal constraint from the expansions~\eqref{eq_expansion_logarithm} and~\eqref{eq_expansion_pinned}. If a SCUB is satisfied for some $\vec{\rho}$, then all the definitions and relations in sections~\ref{sec_key} and~\ref{sec_worstCase} are well-defined for every complex $\vec{z}$ with $\Modulus{\vec{z}}\le\vec{\rho}$~\cite[theorem 2.10]{Scott_Sokal__TheRepulsiveLatticeGasTheIndependentSetPolynomialAndTheLovaszLocalLemma__JSP_2005}.\\

The uniform boundedness~\eqref{eq_uniformBounds} is equivalent to the well-definedness and positivity of the following quantities:
\begin{equation}\label{eq_onePolymerPartitionRatioLimit}
 \ForAll\Polymer\in\PolymerSet:
 \quad
 \OPPR{\PolymerSet}{\Polymer}(-\vec{\rho})
 :=\lim_{\FiniteVolume\Exhausts\PolymerSet}
  \OPPR{\FiniteVolume}{\Polymer}(-\vec{\rho})\,,
\end{equation}
where the limit is monotone decreasing in the ultra-filter of finite subsets of $\PolymerSet$ containing $\Polymer$ (see lemma~\ref{lem_oppr_mon}).

\subsection{Admissible parameters}
\label{sec_admissibleParameters}
We introduce the \emph{multidisc of admissible parameters}
\begin{equation}\label{eq_parameterDisc}
 \PDPS
 :=\BigSet{
  \vec{\rho}\in[0,\infty[^\PolymerSet:\quad
  \ForAll\FiniteVolume\FiniteSubsetOf\PolymerSet:\quad
  \PartitionFunction{\FiniteVolume}(-\vec{\rho})>0
 }\,.
\end{equation}
The set $\PDPS$ is a subset of $[0,1[^\PolymerSet$, an intersection of open sets and almost closed~\cite[theorem 8.1]{Scott_Sokal__TheRepulsiveLatticeGasTheIndependentSetPolynomialAndTheLovaszLocalLemma__JSP_2005}. The \emph{interior} of $\PDPS$, with respect to the \emph{box-topology}, is
\begin{equation}\label{eq_parameterDiscInterior}
 \InteriorOf\PDPS
 :=\Set{
  \vec{\rho}\in\PDPS:\quad
  \Exists\vec{\nu}>\vec{\rho}:\quad
  \vec{\nu}\in\PDPS
 }\,,
\end{equation}
Both $\PDPS$ and $\InteriorOf\PDPS$ are \emph{log-convex} and a \emph{down-set}~\cite[proposition 2.5]{Scott_Sokal__TheRepulsiveLatticeGasTheIndependentSetPolynomialAndTheLovaszLocalLemma__JSP_2005}, i.e., if $\vec{\rho}\in\PDPS$, then every $\vec{\nu}$ with $\vec{0}\le\vec{\nu}\le\vec{\rho}$ is also in $\PDPS$. Likewise, the domain on which a SCUB holds is always either log-convex and a down-set or may immediately be extended to its log-convex and downward closure~\cite[proposition 2.15]{Scott_Sokal__TheRepulsiveLatticeGasTheIndependentSetPolynomialAndTheLovaszLocalLemma__JSP_2005}.\\

Lying in $\InteriorOf\PDPS$ already yields a generic SCUB:

\begin{Prop}\label{prop_scub_generic}
If $\vec{0}\le\vec{\rho}\le\vec{\nu}\in\PDPS$, then
\begin{equation}\label{eq_scub_generic}
 \ForAll\Polymer\in\PolymerSet:\qquad
 \OPPR{\PolymerSet}{\Polymer}(-\vec{\rho})
 \ge\begin{cases}
  \frac{\nu_\Polymer-\rho_\Polymer}{\nu_\Polymer}
  &\text{if }\nu_\Polymer>0\\
  1
  &\text{if }\nu_\Polymer=0\,.
 \end{cases}
\end{equation}
In particular, we have
\begin{equation}\label{eq_unifbound_generic}
 \vec{\rho}\in\InteriorOf\PDPS
 \Then
 \Bigl(
  \ForAll\Polymer\in\PolymerSet:
  \quad
  \OPPR{\PolymerSet}{\Polymer}(-\vec{\rho})>0
 \Bigr)
 \Then
 \text{\eqref{eq_onePolymerPartitionRatioLimit} holds.}
\end{equation}
\end{Prop}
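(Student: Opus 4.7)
The plan is to establish the bound first for a finite volume $\FiniteVolume\ni\Polymer$ and then pass to the thermodynamic limit. The starting point is the affinity of $\PartitionFunction{\FiniteVolume}$ in each coordinate $z_\Polymer$: conditioning on whether a configuration contains $\Polymer$ (which, if so, forbids every polymer in $\IncompatiblePolymers{\Polymer}$) gives
\[
\PartitionFunction{\FiniteVolume}(\vec{z})
 = \PartitionFunction{\FiniteVolume\setminus\Set{\Polymer}}(\vec{z})
  + z_\Polymer\,\PartitionFunction{\FiniteVolume\setminus\IncompatiblePolymers{\Polymer}}(\vec{z})\,.
\]
Specialising to $\vec{z}=-\vec{\rho}$ and dividing by $\PartitionFunction{\FiniteVolume\setminus\Set{\Polymer}}(-\vec{\rho})$ expresses $\OPPR{\FiniteVolume}{\Polymer}(-\vec{\rho})=1-\rho_\Polymer R_\FiniteVolume$, where $R_\FiniteVolume$ is the ratio of the two partition functions with $\Polymer$, respectively $\IncompatiblePolymers{\Polymer}$, removed from $\FiniteVolume$.

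The crucial idea is to evaluate the same affine decomposition at the \emph{mixed} activity vector $\vec{\mu}$ defined by $\mu_\Polymer:=\nu_\Polymer$ and $\mu_\PolymerOther:=\rho_\PolymerOther$ for $\PolymerOther\neq\Polymer$. Since $\vec{\mu},\vec{\rho}\le\vec{\nu}\in\PDPS$ and $\PDPS$ is a down-set, the three quantities $\PartitionFunction{\FiniteVolume}(-\vec{\mu})$, $\PartitionFunction{\FiniteVolume\setminus\Set{\Polymer}}(-\vec{\rho})$ and $\PartitionFunction{\FiniteVolume\setminus\IncompatiblePolymers{\Polymer}}(-\vec{\rho})$ are all strictly positive. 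Because neither $\FiniteVolume\setminus\Set{\Polymer}$ nor $\FiniteVolume\setminus\IncompatiblePolymers{\Polymer}$ contains $\Polymer$, the coordinate $\mu_\Polymer$ is irrelevant in them, so the decomposition evaluated at $-\vec{\mu}$ reduces to
\[
0 \;<\; \PartitionFunction{\FiniteVolume}(-\vec{\mu})
 = \PartitionFunction{\FiniteVolume\setminus\Set{\Polymer}}(-\vec{\rho})
  - \nu_\Polymer\,\PartitionFunction{\FiniteVolume\setminus\IncompatiblePolymers{\Polymer}}(-\vec{\rho})\,.
\]
For $\nu_\Polymer>0$ this rearranges to $R_\FiniteVolume<1/\nu_\Polymer$, whence
\[
\OPPR{\FiniteVolume}{\Polymer}(-\vec{\rho})
 = \frac{\nu_\Polymer-\rho_\Polymer}{\nu_\Polymer}
  + \frac{\rho_\Polymer}{\nu_\Polymer}\cdot
    \frac{\PartitionFunction{\FiniteVolume}(-\vec{\mu})}
         {\PartitionFunction{\FiniteVolume\setminus\Set{\Polymer}}(-\vec{\rho})}
 \;\ge\; \frac{\nu_\Polymer-\rho_\Polymer}{\nu_\Polymer},
\]
since both summands are non-negative. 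The degenerate case $\nu_\Polymer=0$ forces $\rho_\Polymer=0$, in which case $\OPPR{\FiniteVolume}{\Polymer}(-\vec{\rho})=1$ reads off immediately from the first display.

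Since the bound is uniform in $\FiniteVolume\FiniteSubsetOf\PolymerSet$, the monotone thermodynamic limit furnished by lemma~\ref{lem_oppr_mon} inherits it and yields~\eqref{eq_scub_generic}. For~\eqref{eq_unifbound_generic}, if $\vec{\rho}\in\InteriorOf\PDPS$ one picks a witness $\vec{\nu}>\vec{\rho}$ in $\PDPS$; then $\nu_\Polymer>\rho_\Polymer\ge 0$, so $\nu_\Polymer>0$ and the bound is strictly positive, which gives $\OPPR{\PolymerSet}{\Polymer}(-\vec{\rho})>0$ and the well-definedness of~\eqref{eq_onePolymerPartitionRatioLimit}. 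I expect the only real obstacle to be spotting the mixed-activity trick that converts the abstract down-set hypothesis into the concrete inequality $R_\FiniteVolume<1/\nu_\Polymer$; once that identity is in hand, the rest is routine bookkeeping.
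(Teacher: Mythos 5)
Your proof is correct, and it takes a genuinely different route from the paper's. The paper proves the finite-volume bound by induction over $\Cardinality{\FiniteVolume}$: at the inductive step it applies the fundamental identity in one-polymer-ratio form~\eqref{eq_fi_oppr} twice, interleaved with the parameter-monotonicity~\eqref{eq_oppr_mon_parameter}, and obtains the chain of \emph{inequalities}
$\OPPR{\FiniteVolume}{\Polymer}(-\vec{\rho}) \ge \frac{\nu_\Polymer-\rho_\Polymer}{\nu_\Polymer} + \frac{\rho_\Polymer}{\nu_\Polymer}\OPPR{\FiniteVolume}{\Polymer}(-\vec{\nu}) \ge \frac{\nu_\Polymer-\rho_\Polymer}{\nu_\Polymer}$.
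You instead exploit the affinity of $\PartitionFunction{\FiniteVolume}$ in the single coordinate $z_\Polymer$ and evaluate the deletion-contraction identity~\eqref{eq_fi_partition} at the mixed activity $\vec{\mu}$ that differs from $\vec{\rho}$ only in the $\Polymer$-slot, where it equals $\nu_\Polymer$. Combined with the down-set property of $\PDPS$ (to get $\PartitionFunction{\FiniteVolume}(-\vec{\mu})>0$) and the fact that $\PartitionFunction{\FiniteVolume'}$ is insensitive to coordinates outside $\FiniteVolume'$, this yields the \emph{exact} finite-volume identity
$\OPPR{\FiniteVolume}{\Polymer}(-\vec{\rho}) = \frac{\nu_\Polymer-\rho_\Polymer}{\nu_\Polymer} + \frac{\rho_\Polymer}{\nu_\Polymer}\OPPR{\FiniteVolume}{\Polymer}(-\vec{\mu})$,
from which the bound is immediate. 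Your route dispenses with the induction over $\Cardinality{\FiniteVolume}$ and with the use of the parameter-monotonicity in lemma~\ref{lem_oppr_mon} at the finite-volume stage, at the modest cost of invoking the down-set property of $\PDPS$ directly; the paper's presentation fits the \NameDobrushin{}-style inductive machinery of that section and reuses lemma~\ref{lem_oppr_mon}, but your argument is shorter and exposes the bound as an identity rather than an estimate. Both then pass to the thermodynamic limit identically via the volume-monotonicity~\eqref{eq_oppr_mon_volume}, and your treatment of $\nu_\Polymer=0$ and of the implication~\eqref{eq_unifbound_generic} matches the paper's.
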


The proof of proposition~\ref{prop_scub_generic} is in section~\ref{sec_inductivePolymerLevelProofs}.
\AlsoArxiv{Example~\ref{exam_unboundednessOnTheHomogeneousTree} shows that at least in one case uniform boundedness breaks down at a boundary point of $\PDPS$. This is shown by the infinite-volume limit of the one-polymer ratios at that boundary point. As we believe that this break-down holds all along the boundary, a SCUB should only ever map out subsets of the $\InteriorOf\PDPS$. Thus, we reformulate the search for SCUBs as:

\begin{Que}\label{que_pdps_shape}
What are the properties of $\PDPS$? What are sufficient conditions for $\vec{\rho}$ to be in $\PDPS$?
\end{Que}

There has been work~\cite[section 8.2]{Scott_Sokal__TheRepulsiveLatticeGasTheIndependentSetPolynomialAndTheLovaszLocalLemma__JSP_2005} on necessary conditions, which correspond to supersets of $\PDPS$.
}

\subsection{Discrepancy between the known SCUBs}
\label{sec_discrepancy}
We recall the SCUBs by Koteck\'{y} \& Preiss~\eqref{eq_scub_inhom_kp}~\cite{Kotecky_Preiss__ClusterExpansionForAbstractPolymerModels__CMP_1986}, Dobrushin~\eqref{eq_scub_inhom_dob}~\cite{Dobrushin__EstimatesOfSemiInvariantsForTheIsingModelAtLowTemperatures__AMST_1996} and Fern\'{a}ndez \& Procacci~\eqref{eq_scub_inhom_fp}~\cite{Fernandez_Procacci__ClusterExpansionForAbstractPolymerModels_NewBoundsFromAnOldApproach__CMP_2007}. All three SCUBs have the same shape: There is a function $\LeopGeneric{}: [0,\infty[^\PolymerSet\to[0,\infty[^\PolymerSet$ such that
\begin{equation}\label{eq_scubform_classic}
 \Bigl(
  \Exists\vec{\mu}:\quad
  \vec{\rho}\,\LeopGeneric{}(\vec{\mu})\le\vec{\mu}
 \Bigr)
 \Then\text{\eqref{eq_onePolymerPartitionRatioLimit} holds.}
\end{equation}
The particular shape of $\LeopGeneric{}$ is, for each SCUB in order and every $\Polymer$:
\begin{subequations}\label{eq_scub_inhom_known}
\begin{align}
 \label{eq_scub_inhom_kp}
 \LeopKP{\Polymer}(\vec{\mu})
 &:=\exp\left(
  \sum_{\PolymerOther\in\IncompatiblePolymers{\Polymer}}
   \mu_\PolymerOther
 \right)
 \\\label{eq_scub_inhom_dob}
 \LeopDob{\Polymer}(\vec{\mu})
 &:=\prod_{\PolymerOther\in\IncompatiblePolymers{\Polymer}}
   (1+\mu_\PolymerOther)
 \\\label{eq_scub_inhom_fp}
 \LeopFP{\Polymer}(\vec{\mu})
 &:=\PartitionFunction{\IncompatiblePolymers{\Polymer}}(\vec{\mu})
  = \mu + \PartitionFunction{\IncompatibleOtherPolymers{\Polymer}}(\vec{\mu})\,.
\end{align}
\end{subequations}
These SCUBs form successive improvements and arise from a common tree-operator approach, outlined in sections~\ref{sec_explanation} and~\ref{sec_treeOperatorFramework}. \AlsoArxiv{The SCUB~\eqref{eq_scub_inhom_dob} is also known as the \emph{asymmetric Lov\'{a}sz Local Lemma}~\cite{Erdos_Lovasz__ProblemsAndResultsOn3ChromaticHypergraphsAndSomeRelatedQuestions__CMSJB_1975}.}\\

The \emph{reduced SCUB} reduces the local degree by one, compared to Dobrushin's SCUB. Here $\LeopGeneric{}$ takes the shape:
\begin{equation}\label{eq_scub_inhom_red}
 \LeopRed{\Polymer}(\vec{\mu})
  :=(1+\mu_\Polymer)
  \sup_{\EscapePolymer\in\IncompatibleOtherPolymers{\Polymer}}
  \prod_{\PolymerOther\in\IncompatibleOtherPolymers{\Polymer}
                          \setminus\Set{\EscapePolymer}}
   (1+\mu_\PolymerOther)\,.
\end{equation}
Its proof is in section~\ref{sec_inductivePolymerLevelProofs}. The proof is inductive à la Dobrushin~\cite{Dobrushin__EstimatesOfSemiInvariantsForTheIsingModelAtLowTemperatures__AMST_1996} and an extension of the homogeneous version stated in~\cite[Corollary~5.7]{Scott_Sokal__TheRepulsiveLatticeGasTheIndependentSetPolynomialAndTheLovaszLocalLemma__JSP_2005}. The induction over the cardinality of finite subsets $\FiniteVolume$ of $\PolymerSet$ and an ordering of the polymers ensures a bound of $\OPPR{\Lambda}{\Polymer}$ without ever considering $\IncompatibleOtherPolymers{\Polymer}\setminus\FiniteVolume\not=\emptyset$. The reduced SCUB only holds with a strict inequality in~\eqref{eq_scubform_classic}, due the use of proposition~\ref{prop_scub_generic}.\\

Clearly, $\LeopRed{}$ beats $\LeopDob{}$, because it is uniformly smaller. If $(\PolymerSet,\IncompatibleTo)$ is \emph{triangle-free} and the fugacity is homogeneous, then $(\IncompatiblePolymers{\Polymer},\IncompatibleTo)$ is a star and $\LeopFP{}$ is worse than $\LeopRed{}$. The reduced SCUB is exact on regular infinite trees with homogeneous fugacity~\cite{Shearer__OnAProblemOfSpencer__Comb_1985}. On the other hand, if $(\PolymerSet,\IncompatibleTo)$ contains many triangles, then $\LeopFP{}$ beats $\LeopRed{}$. See section~\ref{sec_example} for numerical examples. This raises the questions:

\begin{Que}\label{que_treeoperator}
Is there a tree-operator hidden behind $\LeopRed{}$, too?
\end{Que}

\begin{Que}\label{que_improvement}
Can we improve $\LeopRed{}$ similarly to how $\LeopFP{}$ improves $\LeopDob{}$?
\end{Que}

\begin{Que}\label{que_unification}
Is there an approach unifying $\LeopFP{}$, $\LeopRed{}$ and a possible improvement from question~\ref{que_improvement}?
\end{Que}
\subsection{The new SCUBs}
\label{sec_new_scubs}

This section answers the above questions affirmatively and states the new SCUBs. Figure~\ref{fig_relation} summarises the relation between the different SCUBs. We give an informal back-of-the-envelope derivation of the new SCUBs and their names in section~\ref{sec_explanation}.\\

The new SCUBs have the shape: There is a function $\LeopGeneric{}: [0,\infty[^\PolymerSet\to[0,\infty[^\PolymerSet$ such that
\begin{equation}\label{eq_scubform_new}
 \Bigl(
  \Exists\vec{\mu}:\quad
  \vec{\rho}\,\LeopGeneric{}(\vec{\mu})<\vec{\mu}
 \Bigr)
 \Then\text{\eqref{eq_onePolymerPartitionRatioLimit} holds.}
\end{equation}
The strict inequality follows from the use of proposition~\ref{prop_scub_generic}. Our first SCUB is the \emph{reduced SCUB}~\eqref{eq_scub_inhom_red}. The answer to question~\ref{que_improvement} is the \emph{returning SCUB}, with shape
\begin{equation}\label{eq_scub_inhom_ret}
 \LeopRet{\Polymer}(\vec{\mu})
  :=(1+\mu_\Polymer)
  \sup_{\EscapePolymer\in\IncompatibleOtherPolymers{\Polymer}}
  \PartitionFunction{\IncompatibleOtherPolymers{\Polymer}
                       \setminus\Set{\EscapePolymer}}\,.
\end{equation}
Its proof is in section~\ref{sec_proof_ret}. As we derive $\LeopRet{}$ as a convergence condition of a tree-operator, we may relax it immediately to $\LeopRed{}$ in section~\ref{sec_proof_red}. This answers question~\ref{que_treeoperator}.\\

The function $\LeopRet{}$ is also not comparable with $\LeopFP{}$. The mixing SCUB below answers question~\ref{que_unification}: it generalises both $\LeopRed{}$ and $\LeopFP{}$. We interpolate at each polymer between the greedy behaviour of $\LeopFP{}$, marked by $\BehaviourGreedy$, and the returning behaviour of $\LeopRet{}$, marked by $\BehaviourReturning$. The naming is explained in section~\ref{sec_explanation}. For a fixed choice of behaviour $\MixingVector\in\BehaviourSetIndexedBy{\PolymerSet}$, we get a SCUB of shape
\begin{equation}\label{eq_scub_mixing_fixed}
 \LeopMixing{\MixingVector}{\Polymer}(\vec{\mu})
 :=
 \Iverson{\MixingVector_\Polymer=\BehaviourGreedy}
 \LeopFP{\Polymer}(\vec{\mu})
 +
 \Iverson{\MixingVector_\Polymer=\BehaviourReturning}
 \LeopRet{\Polymer}(\vec{\mu})\,.
\end{equation}
The \emph{mixing SCUB} is the supremum over all choices of $\MixingVector$, with shape:
\begin{equation}\label{eq_scub_mixing}
 \Bigl(
  \Exists\MixingVector\in\BehaviourSetIndexedBy{\PolymerSet},
  \Exists\vec{\mu}\in[0,\infty[^\PolymerSet:
  \quad
  \vec{\rho}\,\LeopMixing{\MixingVector}{}(\vec{\mu}) <\vec{\mu}
 \Bigr)
 \Then\text{\eqref{eq_onePolymerPartitionRatioLimit} holds.}
\end{equation}
Its proof is in section~\ref{sec_proof_mixing}. The SCUBs $\LeopFP{}$ and $\LeopRet{}$ are special cases:
\begin{equation}\label{eq_leop_mixing_reduction}
 \LeopFP{}=\LeopMixing{\vec{\BehaviourGreedy}}{}
 \quad\text{ and }\quad
 \LeopRet{}=\LeopMixing{\vec{\BehaviourReturning}}{}\,.
\end{equation}
In the homogeneous (both in $(\PolymerSet,\IncompatibleTo)$ and $\vec{\rho}$) setting, $\LeopMixing{\MixingVector}{\Polymer}$ reduces to the extreme cases in~\eqref{eq_leop_mixing_reduction}.

\begin{figure}
\begin{displaymath}
 \MxyLeopComparison{1em}{1em}
\end{displaymath}
\caption[Relationship between the SCUBs]{Relationship between the SCUBs}
\label{fig_relation}
\end{figure}

\subsection{Numerical example}
\label{sec_example}
This section contains example calculations of the SCUBs on two transitive lattices (see figure~\ref{fig_lattices}) with homogeneous fugacity. Hence, we may exclude the mixing SCUB. The results are contained in table~\ref{tab_example}. The hexagonal lattice is triangle-free, whence the returning SCUB becomes the reduced SCUB and gives no additional improvement over the FP SCUB. In the case of the line graph of the hexagonal lattice, the reduced SCUB is worse than the FP SCUB, but the returning SCUB improves on the FP SCUB. For high degree graphs, any advantages of the reduced and returning SCUBs either disappear or become negligible.

\begin{figure}
\begin{center}
\begin{tikzpicture}[scale=2]
  \tikzstyle{hexVertex} = [inner sep = 0pt]
  \tikzstyle{hexEdge} = []
  \tikzstyle{lineVertex} = [circle, draw, minimum size = 4pt, inner sep = 0pt]
  \tikzstyle{lineEdge} = [thick, dotted]

  \newcommand{\hexnode}[2]{\node[hexVertex] (#1) at (#2) {$\bullet$};}
  \newcommand{\hexedge}[2]{\draw[hexEdge] (#1) to (#2);}

  \hexnode{h11}{0.0,0.5}
  \hexnode{h12}{0.0,1.5}
  \hexnode{h21}{0.5,0.0}
  \hexnode{h22}{0.5,1}.0
  \hexnode{h23}{0.5,2.0}
  \hexnode{h31}{1.5,0.0}
  \hexnode{h32}{1.5,1.0}
  \hexnode{h33}{1.5,2.0}
  \hexnode{h41}{2.0,0.5}
  \hexnode{h42}{2.0,1.5}
  \hexnode{h51}{3.0,0.5}
  \hexnode{h52}{3.0,1.5}
  \hexnode{h61}{3.5,1.0}

  \hexedge{h11}{h21}
  \hexedge{h11}{h22}
  \hexedge{h12}{h22}
  \hexedge{h12}{h23}
  \hexedge{h21}{h31}
  \hexedge{h22}{h32}
  \hexedge{h23}{h33}
  \hexedge{h31}{h41}
  \hexedge{h32}{h41}
  \hexedge{h32}{h42}
  \hexedge{h33}{h42}
  \hexedge{h41}{h51}
  \hexedge{h42}{h52}
  \hexedge{h51}{h61}
  \hexedge{h52}{h61}

  \newcommand{\linenode}[2]{\node[lineVertex] (#1) at (#2) {};}
  \newcommand{\lineedge}[2]{\draw[lineEdge] (#1) to (#2);}

  \linenode{e11}{0.25,0.25}
  \linenode{e12}{0.25,0.75}
  \linenode{e13}{0.25,1.25}
  \linenode{e14}{0.25,1.75}
  \linenode{e21}{1.00,0.00}
  \linenode{e22}{1.00,1.00}
  \linenode{e23}{1.00,2.00}
  \linenode{e31}{1.75,0.25}
  \linenode{e32}{1.75,0.75}
  \linenode{e33}{1.75,1.25}
  \linenode{e34}{1.75,1.75}
  \linenode{e41}{2.50,0.50}
  \linenode{e42}{2.50,1.50}
  \linenode{e51}{3.25,0.75}
  \linenode{e52}{3.25,1.25}

  \lineedge{e12}{e13}
  \lineedge{e31}{e41}
  \lineedge{e34}{e42}

  \lineedge{e11}{e21}
  \lineedge{e21}{e31}
  \lineedge{e31}{e32}
  \lineedge{e32}{e22}
  \lineedge{e22}{e12}
  \lineedge{e12}{e11}

  \lineedge{e13}{e22}
  \lineedge{e22}{e33}
  \lineedge{e33}{e34}
  \lineedge{e34}{e23}
  \lineedge{e23}{e14}
  \lineedge{e14}{e13}

  \lineedge{e32}{e41}
  \lineedge{e41}{e51}
  \lineedge{e51}{e52}
  \lineedge{e52}{e42}
  \lineedge{e42}{e33}
  \lineedge{e33}{e32}

\end{tikzpicture}
\end{center}
\caption[Lattices used in examples]{The lattices used in the examples in section~\ref{sec_example}: the hexagonal lattice (solid) and its line graph (dotted).}
\label{fig_lattices}
\end{figure}

\begin{table}
\begin{center}
\begin{tabular}{cccc}
 Lattice
 & SCUB
 & Shape of $\LeopGeneric{\Polymer}$
 & Optimal $\rho$
 \\ \hline hex
 & Dob
 & $(1+\mu)^4$
 & 0.1055
 \\ hex
 & FP
 & $\mu + (1+\mu)^3$
 & 0.1290
 \\ hex
 & reduced
 & $(1+\mu)(1+\mu)^2$
 & 0.1481
 \\ hex
 & returning
 & $(1+\mu)(1+\mu)^2$
 & 0.1481
 \\ line
 & Dob
 & $(1+\mu)^5$
 & 0.0819
 \\ line
 & FP
 & $\mu + (1+2\mu)^2$
 & 0.1111
 \\ line
 & reduced
 & $(1+\mu)(1+\mu)^3$
 & 0.1055
 \\ line
 & returning
 & $(1+\mu)(1+\mu)(1+2\mu)$
 & 0.1134
\end{tabular}
\end{center}
\caption[Example calculation of the SCUBs]{Example calculations of the SCUBs on the hexagonal lattice and its line graph in the homogeneous case. See figure~\ref{fig_lattices} and section~\ref{sec_example}. The optimal $\rho$ value is the solution of $\max\Set{\frac{\mu}{\LeopGeneric{}(\mu)}:\mu\in[0,\infty[}$. Higher values of $\rho$ are better.}
\label{tab_example}
\end{table}
\subsection{Informal explanation and discussion}
\label{sec_explanation}
This section gives an informal and graphical explanation of how the different SCUBs arise from tree-operators and their interplay with partition schemes. We also discuss possible extensions of our results and methods.\\

We start with a high level outline of the tree-operator approach. See sections~\ref{sec_treeOperatorFramework} and~\ref{sec_partitionSchemes} for the formal setup and full details. After a cluster expansion, a choice of a partition scheme and an application of Penrose's identity, the problem of uniform boundedness reduces to bounding a series summing first over clusters and then over certain weighted spanning trees of a given cluster. One may resum in a different order: over the depth of the tree, irrespective of which cluster it has been in. To control the convergence of this depth-wise summation, one needs to control the weight increase, if one counts all trees of depth up to $(n+1)$ instead of $n$. A single application of a tree-operator corresponds to an increase of the permissible maximum depth by one. A SCUB for this tree-operator controls the resulting weight increase and guarantees the convergence of the iterated application of the tree-operator.\\

In more detail, the rooted trees have labels attached to their vertices. Each label induces a weight. The weight of a labelled tree is the product of the induced weights. A tree-operator creates a family of labelled depth $n+1$ trees from a labelled depth $n$ tree. The tree-operator encodes permissible attachments of children to a leaf. These attachments of a finite, possibly zero, number of children and the choice of their labels depend only on the label of the leaf. The family of depth $n+1$ trees results from all permissible attachments, independently for each depth $n$ leaf of the original tree.\\

A partition scheme selects a subset of the spanning trees of a graph, the singleton trees. The rooted, labelled and finite trees of interest in the resummed series are the singleton trees on each cluster. We relax the global constraints on these trees to local constraints between the label of a vertex and its children’s' labels. This allows to encode them by a tree-operator. The tree-operators underlying the SCUBs differ in how much the structure of a cluster constrains the singleton trees chosen by the partition scheme and how the constraints are relaxed to encode them by tree-operators. We describe these choices for all SCUBs in the following paragraphs, with sketches based on the neighbourhood in figure~\ref{fig_neighbourhood} in figures~\ref{fig_classic} and~\ref{fig_new}.\\

Koteck\'{y} \& Preiss' SCUB (see~\eqref{eq_scub_inhom_kp} and figure~\ref{fig_classic_kp}) is the simplest. Apply an arbitrary partition scheme to a cluster. The only constraint this places on the singleton trees is that each vertex is labelled by a polymer and that neighbouring vertices in a singleton tree have incompatible polymer labels. See figure~\ref{fig_anatomy}. Hence, a vertex with label $\Polymer$ may have an arbitrary finite number of children with labels in $\IncompatiblePolymers{\Polymer}$. The function $\LeopKP{\Polymer}$ is the generating function of these configurations.\\

Next are Dobrushin's SCUB (see~\eqref{eq_scub_inhom_dob} and figure~\ref{fig_classic_dob}) and Fern\'{a}ndez \& Procacci's SCUB (see~\eqref{eq_scub_inhom_fp} and figure~\ref{fig_classic_fp}). Here the partition scheme is Penrose's scheme and the labels are again the polymers. When applied to a cluster, it constrains the polymer labels of the children of a vertex in a singleton tree to form a compatible set within $\IncompatiblePolymers{\Polymer}$. Fern\'{a}ndez \& Procacci's SCUB takes this fully into account and $\PartitionFunction{\IncompatiblePolymers{\Polymer}}$ is the generating function of these compatible sets. Dobrushin's SCUB relaxes this to: each polymer may appear at most once as a child's label.\\

The reduced SCUB (see~\eqref{eq_scub_inhom_red} and figure~\ref{fig_new_red}) and the returning SCUB (see~\eqref{eq_scub_inhom_ret} and figure~\ref{fig_new_ret}) feature different choices. First, we mimic the exclusion of at least one polymer among the labels in $\IncompatibleOtherPolymers{\Polymer}$ in the Dobrushin-style proof of the reduced SCUB. The returning scheme in section~\ref{sec_schemeReturning} is a partition scheme tailored to clusters, as opposed to a partition scheme for general graphs applied to a cluster. This captures more of the structure of a cluster (see figure~\ref{fig_anatomy}) as constraints on the singleton trees. The returning scheme constrains polymer labels along a path from the root to form a lazy self-avoiding walk in the polymer system. The ``returning'' alludes to this fact. This interdicts one polymer in $\IncompatibleOtherPolymers{\Polymer}$ as a child's label in its singleton trees.\\

Second, we use incompatible pairs of polymers to label the vertices of the singleton trees. This carries the information about the last different polymer label along the path to the root and interdicts it as a child's label. See sections~\ref{sec_schemeReturning} and~\ref{sec_proof_ret} for details. This extra constraint comes at a price: while the children's labels still have to be compatible with each other, this only holds for polymers in $\IncompatibleOtherPolymers{\Polymer}$.\\

We derive tree-operators from these constraints and project the resulting SCUBs to normal single polymer labels. This is the origin of the supremum in the reduced and returning SCUB. The price of ignoring incompatibility with $\Polymer$ results in the $(1+\mu_\Polymer)$ factor. The reduced SCUB relaxes the compatibility constraints for the children's labels in $\IncompatibleOtherPolymers{\Polymer}$ analogously to Dobrushin's SCUB.\\

The mixing SCUB~\eqref{eq_scub_mixing} derives in the same way from the more elaborate synthetic scheme in section~\ref{sec_schemeSynthetic}. It interpolates between the greedy behaviour of Penrose's scheme, allowing all of $\IncompatiblePolymers{\Polymer}$ as labels for children, and the constraints imposed by the returning scheme. The discussion in section~\ref{sec_proof_mixing} suggests that the mixing SCUB is optimal within the class of SCUBs on general polymer systems and based on tree-operators indexed by incompatible polymer pairs.\\

There are several ways to improve the present SCUBs. First, we may enlarge the label spaces to make more of the known constraints on the singleton trees available to a tree-operator. Second, we may use tree-operators adding trees up to depth $k$ instead of only $1$, already outlined in~\cite[chapter 5]{Temmel__PropertiesAndApplicationsOfBernoulliRandomFieldsWithStrongDependencyGraphs__TUG_2012}. This takes constraints between the labels of some non-adjacent vertices into account. Third, we may develop new partition schemes which use more than just the incompatibility information from the polymer system. Like geometric information, if $(\PolymerSet,\IncompatibleTo)$ is a regular planar lattice.\\

Both the first and second improvement lead to SCUBs where $\LeopGeneric{\Polymer}$ depends on a larger incompatible neighbourhood of $\Polymer$ in $(\PolymerSet,\IncompatibleTo)$ than just $\IncompatiblePolymers{\Polymer}$. The even more complex shape of such SCUBs seems to be detrimental to analysis. For a concrete locally finite polymer system, one may enumerate the trees encoded by the tree-operator by computer and evaluate the resulting SCUB numerically. The utility of the third improvement depends on the integration of the additional knowledge about the polymer system. We plan to present results incorporating the above improvements in future work.\\

The derivation of improved SCUBs is straightforward for the first two improvements: the tools and theory are all present in section~\ref{sec_treeOperatorFramework} and~\cite[chapter 5]{Temmel__PropertiesAndApplicationsOfBernoulliRandomFieldsWithStrongDependencyGraphs__TUG_2012}. Our reformulation of Penrose's partition scheme and our new partition schemes lead us to the notion of explorative partition schemes in section~\ref{sec_explorativePartitionSchemes}. This is a systematic way to derive partition schemes of a cluster, such that the constraints placed on its singleton trees may be encoded with minimal loss by a tree-operator.

\begin{figure}
\begin{center}
\begin{tikzpicture}[scale=2]
  \drawnodeempty{a}{0,0}
  \drawnodebullet{a1}{75:.2}
  \drawnodebullet{a2}{195:.2}
  \drawnodebullet{a3}{315:.2}
  \drawnodebullet{b}{1,0}
  \drawnodeempty{c}{2,0}
  \drawnodebullet{c1}{2,0.15}
  \drawnodebullet{c2}{2,-0.15}
  \drawnodebullet{d}{0,-1}
  \drawnodebullet{e}{1,-1}

  \draw (a1) -- (b) (a2) -- (b) (a3) -- (b)
        (b) -- (c1) (b) -- (c2)
        (a1) -- (d) (a2) -- (d) (a3) -- (d)
        (b) -- (e)
        (c1) -- (e) (c2) -- (e);
  \draw (c1) -- (c2)
        (a1) -- (a2) -- (a3) -- (a1);
  \foreach \x in {a, b, c} {\draw[dashed] (\x) circle (.3) +(.3, .3) node {$\x$};}
  \foreach \x in {d, e}    {\draw[dashed] (\x) circle (.3) +(.3,-.3) node {$\x$};}
\end{tikzpicture}
\begin{tikzpicture}[scale=2]
  \drawnode{a}{0,1}
  \drawnode{b}{1,1}
  \drawnode{c}{2,1}
  \drawnode{d}{0,0}
  \drawnode{e}{1,0}

  \draw (a) -- (b)
        (b) -- (c)
        (a) -- (d)
        (b) -- (e)
        (c) -- (e);
  \foreach \x in {a, b, c} {\selfedge{\x}{90};}
  \foreach \x in {d, e} {\selfedge{\x}{270};}
\end{tikzpicture}
\end{center}
\caption[Anatomy of a cluster]{Anatomy of the cluster of $\Cluster{(a,a,b,c,a,d,c,e)}$ on the left, for the incompatibility graph of its support $(\Set{a,b,c,d,e},\IncompatibleTo)$ on the right. The dotted lines in the cluster mark the structural constraints it inherits from its support.}
\label{fig_anatomy}
\end{figure}

\begin{figure}
\begin{center}
\begin{tikzpicture}[scale=2]
  \node[vertex] (x1) at (0,1) {$\xi_1$};
  \node[vertex] (x2) at (1,0) {$\xi_2$};
  \node[vertex] (x3) at (0,-1) {$\xi_3$};
  \node[vertex] (eps) at (-1,0) {$\varepsilon$};
  \node[vertex] (gam) at (0,0) {$\gamma$};

  \draw (x1) -- (x2) -- (gam) -- (eps) -- (x3) -- (gam) -- (x1);

  \foreach \x in {x1, x2, x3, eps, gam} {\selfedge{\x}{45};}
\end{tikzpicture}
\end{center}
\caption[Polymer incompatibles]{An example of the incompatible neighbourhood graph $(\IncompatiblePolymers{\Polymer},\IncompatibleTo)$ of a polymer $\Polymer$, with lines representing incompatibilities.}
\label{fig_neighbourhood}
\end{figure}

\begin{figure}
\subfloat[][Constraints in $\displaystyle\LeopKP{\Polymer}(\vec{\mu})$.]{
\label{fig_classic_kp}
\begin{tikzpicture}[scale=1.8]
  \node[vertex] (gam) at (0,0) {$\gamma$};
  \node[vertex] (x1) at (1,1) {$\xi_1$};
  \node[vertex] (x2) at (1,1/3) {$\xi_2$};
  \node[vertex] (x3) at (1,-1/3) {$\xi_3$};
  \node[vertex] (eps) at (1,-1) {$\varepsilon$};

  \draw (x1) -- (x2) -- (gam) -- (eps) -- (x3) -- (gam) -- (x1);

  \selfedge{gam}{180};
  \foreach \x in {x1, x2, x3, eps} {\selfedge{\x}{0};}
\end{tikzpicture}
}
\subfloat[][Constraints in $\displaystyle\LeopDob{\Polymer}(\vec{\mu})$.]{
\label{fig_classic_dob}
\begin{tikzpicture}[scale=1.8]
  \node[vertex] (gam) at (0,0) {$\gamma$};
  \node[vertex] (x1) at (1,1) {$\xi_1$};
  \node[vertex] (x2) at (1,1/3) {$\xi_2$};
  \node[vertex] (x3) at (1,-1/3) {$\xi_3$};
  \node[vertex] (eps) at (1,-1) {$\varepsilon$};

  \draw (x1) -- (x2) -- (gam) -- (eps) -- (x3) -- (gam) -- (x1);

  \selfedgemark{gam}{180};
  \foreach \x in {x1, x2, x3, eps} {\selfedgemark{\x}{0};}
\end{tikzpicture}
}
\subfloat[][Constraints in $\displaystyle\LeopFP{\Polymer}(\vec{\mu})$.]{
\label{fig_classic_fp}
\begin{tikzpicture}[scale=1.8]
  \node[vertex] (gam) at (0,0) {$\gamma$};
  \node[vertex] (x1) at (1,1) {$\xi_1$};
  \node[vertex] (x2) at (1,1/3) {$\xi_2$};
  \node[vertex] (x3) at (1,-1/3) {$\xi_3$};
  \node[vertex] (eps) at (1,-1) {$\varepsilon$};

  \draw[marked] (x1) -- (x2) -- (gam) -- (eps) -- (x3) -- (gam) -- (x1);

  \selfedgemark{gam}{180};
  \foreach \x in {x1, x2, x3, eps} {\selfedgemark{\x}{0};}
\end{tikzpicture}
}
\caption[Constraints in the classic SCUBs]{(Colour online) Constraints in the classic SCUBs at a leaf labelled with $\Polymer$. We assume the setting of figure~\ref{fig_neighbourhood}. The incompatibilities taken into account by the SCUB are marked by dotted red lines. The SCUB sums over all weighted depth $1$ trees rooted at $\Polymer$, which do not contain two vertices labelled with polymers with a dotted red line in the corresponding diagram. More dotted red lines imply less trees, a more relaxed SCUB and bigger admissible $\vec{\rho}$.}
\label{fig_classic}
\end{figure}

\begin{figure}
\subfloat[][Constraints in $\displaystyle\LeopRed{\Polymer}(\vec{\mu})$.]{
\label{fig_new_red}
\begin{tikzpicture}[scale=2]
  \node[vertexlong] (gam) at (0,0) {$\!\!(\gamma, \varepsilon)\!\!$};
  \node[vertexlong] (x1) at (1,1) {$\!\!(\xi_1, \gamma)\!\!$};
  \node[vertexlong] (x2) at (1,0) {$\!\!(\xi_2, \gamma)\!\!$};
  \node[vertexlong] (x3) at (1,-1) {$\!\!(\xi_3, \gamma)\!\!$};

  \draw (gam) -- (x1) -- (x2) -- (gam) -- (x3);

  \selfedge{gam}{180};
  \foreach \x in {x1, x2, x3} {\selfedgemark{\x}{0};}
\end{tikzpicture}
}
\subfloat[][Constraints in $\displaystyle\LeopRet{\Polymer}(\vec{\mu})$.]{
\label{fig_new_ret}
\begin{tikzpicture}[scale=2]
  \node[vertexlong] (gam) at (0,0) {$\!\!(\gamma, \varepsilon)\!\!$};
  \node[vertexlong] (x1) at (1,1) {$\!\!(\xi_1, \gamma)\!\!$};
  \node[vertexlong] (x2) at (1,0) {$\!\!(\xi_2, \gamma)\!\!$};
  \node[vertexlong] (x3) at (1,-1) {$\!\!(\xi_3, \gamma)\!\!$};

  \draw (gam) -- (x1);
  \draw (gam) -- (x2);
  \draw[marked] (x1) -- (x2);
  \draw (gam) -- (x3);

  \selfedge{gam}{180};
  \foreach \x in {x1, x2, x3} {\selfedgemark{\x}{0};}
\end{tikzpicture}
}
\caption[Constraints in the new SCUBs]{(Colour online) Constraints in the new SCUBs at a leaf labelled with $\Polymer$, extended to $\EscapePair$ to account for the incompatible different polymer $\EscapePolymer$ forbidden as a child label. We assume the setting of figure~\ref{fig_neighbourhood}. If a child node in a singleton tree has a polymer label in $\IncompatibleOtherPolymers{\Polymer}\setminus\Set{\EscapePolymer}$, then the forbidden incompatible different polymer for the grandchildren's labels switches from $\EscapePolymer$ to $\Polymer$. The incompatibilities taken into account by the SCUB are marked by dotted red lines. The SCUB sums over all weighted depth $1$ trees rooted at $\Polymer$, which do not contain two vertices labelled with polymers with a dotted red line in the corresponding diagram. More dotted red lines imply less trees, a more relaxed SCUB and bigger admissible $\vec{\rho}$.}
\label{fig_new}
\end{figure}
\section{Inductive polymer level proofs}
\label{sec_inductivePolymerLevelProofs}
This section contains only polymer level inductive proofs based on one polymer partition ratios instead of cluster expansion techniques. This includes the proof of proposition~\ref{prop_scub_generic} and the reduced SCUB~\eqref{eq_scub_inhom_red}.  A key part is the \emph{fundamental identity}, a \emph{deletion-contraction property} of the partition function:
\begin{subequations}\label{eq_fi}
\begin{equation}\label{eq_fi_partition}
 \ForAll\Polymer\in\FiniteVolume\FiniteSubsetOf\PolymerSet:
 \quad
 \PartitionFunction{\FiniteVolume}(\vec{z})
 =\PartitionFunction{\FiniteVolume\setminus\Set{\Polymer}}(\vec{z})
 +z_\Polymer\,
 \PartitionFunction{\FiniteVolume\setminus\IncompatiblePolymers{\Polymer}}(\vec{z})\,.
\end{equation}
In terms of the one polymer partition ratios, and if $\PartitionFunction{\FiniteVolume\setminus\Set{\Polymer}}(\vec{z})\not=0$, it becomes
\begin{equation}\label{eq_fi_oppr}
 \OPPR{\FiniteVolume}{\Polymer}(\vec{z})
 = 1 + \frac{z_\Polymer}{ \prod_{i=1}^m
  \OPPR{\FiniteVolume\setminus\Set{\Polymer,\PolymerOther_{1},\dotsc,\PolymerOther_{i-1}}}
  {\PolymerOther_i}(\vec{z})
 }\,,
\end{equation}
where $\Set{\PolymerOther_1,\dotsc,\PolymerOther_m} := \IncompatibleOtherPolymers{\Polymer}\cap\FiniteVolume$.
\end{subequations}
We recall the monotonicity properties of the one polymer partition ratios:

\begin{Lem}[{\cite[top of page 62]{Scott_Sokal__TheRepulsiveLatticeGasTheIndependentSetPolynomialAndTheLovaszLocalLemma__JSP_2005},\cite[(5.67)]{Temmel__PropertiesAndApplicationsOfBernoulliRandomFieldsWithStrongDependencyGraphs__TUG_2012}}]
\label{lem_oppr_mon}
\begin{subequations}\label{eq_oppr_mon}
For $\Polymer\in\FiniteVolume\FiniteSubsetOf\PolymerSet$,\\
if $\PartitionFunction{\FiniteVolume\setminus\Set{\Polymer}}(-\vec{\rho})>0$, then
\begin{align}
 \label{eq_oppr_mon_volume}
 \ForAll\FiniteVolume'\subseteq\FiniteVolume:
 \quad
 &\OPPR{\FiniteVolume}{\Polymer}(-\vec{\rho})
 \le
 \OPPR{\FiniteVolume'}{\Polymer}(-\vec{\rho})\,,
 \\\label{eq_oppr_mon_parameter}
 \ForAll\vec{0}\le\vec{\nu}\le\vec{\rho}:
 \quad
 &\OPPR{\FiniteVolume}{\Polymer}(-\vec{\rho})
 \le
 \OPPR{\FiniteVolume}{\Polymer}(-\vec{\nu})\,.
\end{align}
\end{subequations}
\end{Lem}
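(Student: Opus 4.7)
The plan is to prove both inequalities simultaneously by induction on $\Cardinality{\FiniteVolume}$, driven by the fundamental identity~\eqref{eq_fi_oppr}. A preliminary observation is that the hypothesis $\PartitionFunction{\FiniteVolume \setminus \Set{\Polymer}}(-\vec{\rho}) > 0$ propagates downward to $\PartitionFunction{\FiniteVolume''}(-\vec{\rho}) > 0$ for every $\FiniteVolume'' \subseteq \FiniteVolume \setminus \Set{\Polymer}$, via the down-set structure of the zero-free region of the partition function at $-\vec{\rho}$~\cite[proposition 2.5]{ScottSokal_repulsive}. This makes every one-polymer ratio $\OPPR{\FiniteVolume_i}{\PolymerOther_i}(-\vec{\rho})$ occurring on the right-hand side of~\eqref{eq_fi_oppr} well-defined and positive, and from~\eqref{eq_fi_oppr} itself (combined with $\rho_\PolymerOther \ge 0$) one reads off the structural bound $0 < \OPPR{\FiniteVolume''}{\PolymerOther}(-\vec{\rho}) \le 1$ on every such subset. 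This invariant will be carried along by the induction.

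The base case $\FiniteVolume = \Set{\Polymer}$ gives $\OPPR{\Set{\Polymer}}{\Polymer}(-\vec{\rho}) = 1 - \rho_\Polymer$, for which both inequalities are direct (for~\eqref{eq_oppr_mon_volume} one additionally uses $\OPPR{\FiniteVolume'}{\Polymer}(-\vec{\rho}) = 1$ whenever $\Polymer \notin \FiniteVolume'$). For the inductive step, write
\begin{equation*}
 \OPPR{\FiniteVolume}{\Polymer}(-\vec{\rho})
 = 1 - \rho_\Polymer \prod_{i=1}^m \frac{1}{\OPPR{\FiniteVolume_i}{\PolymerOther_i}(-\vec{\rho})}
\end{equation*}
with $\FiniteVolume_i := \FiniteVolume \setminus \Set{\Polymer, \PolymerOther_1, \dotsc, \PolymerOther_{i-1}}$, each of cardinality strictly less than $\Cardinality{\FiniteVolume}$, so that the inductive hypothesis applies to every factor. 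For parameter monotonicity~\eqref{eq_oppr_mon_parameter}, induction yields $\OPPR{\FiniteVolume_i}{\PolymerOther_i}(-\vec{\rho}) \le \OPPR{\FiniteVolume_i}{\PolymerOther_i}(-\vec{\nu})$, hence reversed reciprocals; combined with $\rho_\Polymer \ge \nu_\Polymer \ge 0$, the subtracted term grows on passing from $-\vec{\nu}$ to $-\vec{\rho}$, which is exactly what is wanted.

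For volume monotonicity~\eqref{eq_oppr_mon_volume}, the case $\Polymer \notin \FiniteVolume'$ is immediate from $\OPPR{\FiniteVolume'}{\Polymer}(-\vec{\rho}) = 1$ together with the invariant $\OPPR{\FiniteVolume}{\Polymer}(-\vec{\rho}) \le 1$. When $\Polymer \in \FiniteVolume'$, I order the incompatibles $\PolymerOther_1, \dotsc, \PolymerOther_m$ so that those lying in $\IncompatibleOtherPolymers{\Polymer} \cap \FiniteVolume'$ come first, say $\PolymerOther_1, \dotsc, \PolymerOther_{m'}$, and expand both $\OPPR{\FiniteVolume}{\Polymer}$ and $\OPPR{\FiniteVolume'}{\Polymer}$ via~\eqref{eq_fi_oppr}. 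Comparing factor by factor, the inductive volume monotonicity applied to each $\FiniteVolume'_i \subseteq \FiniteVolume_i$ enlarges the common factors for $i \le m'$, while the tail factors ($i > m'$) present only in the $\FiniteVolume$-expansion contribute reciprocals $\ge 1$ by the invariant $\OPPR \le 1$. The total product in the $\FiniteVolume$-expansion thus dominates the one in the $\FiniteVolume'$-expansion; multiplying by $\rho_\Polymer \ge 0$ and inserting into the identity gives~\eqref{eq_oppr_mon_volume}. The main obstacle is precisely the propagation of positivity: every recursive step must operate where the denominators in~\eqref{eq_fi_oppr} are strictly positive, which is exactly what the down-set property of the zero-free region secures. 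Once this is in place, the structural bounds $0 < \OPPR \le 1$ and the two monotonicity properties reinforce one another at each cardinality level and the joint induction closes cleanly.
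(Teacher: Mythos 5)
The preliminary step is a genuine gap. You invoke the down-set property from Scott--Sokal's proposition 2.5 to argue that $\PartitionFunction{\FiniteVolume\setminus\Set{\Polymer}}(-\vec{\rho})>0$ propagates to $\PartitionFunction{\FiniteVolume''}(-\vec{\rho})>0$ for every $\FiniteVolume''\subseteq\FiniteVolume\setminus\Set{\Polymer}$, but that down-set structure is monotonicity in the \emph{fugacity} ($\vec{\nu}\le\vec{\rho}$), not in the \emph{volume} (subsets); it does not yield the claimed propagation, and in fact the propagation is false as stated. Take $b\CompatibleTo c$ and $\rho_b=\rho_c=2$: then $\PartitionFunction{\Set{b,c}}(-\vec{\rho})=1-2-2+4=1>0$ while $\PartitionFunction{\Set{b}}(-\vec{\rho})=-1<0$. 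Worse, with this same $b,c$ and a third polymer $a$ incompatible with both, $\FiniteVolume=\Set{a,b,c}$, $\Polymer=a$, $\vec{\rho}=(0.1,2,2)$, the literal hypothesis $\PartitionFunction{\Set{b,c}}(-\vec{\rho})=1>0$ is satisfied, yet $\OPPR{\FiniteVolume}{a}(-\vec{\rho})=0.9$ while at $\vec{\nu}=(0.1,1.5,1.5)\le\vec{\rho}$ one gets $\OPPR{\FiniteVolume}{a}(-\vec{\nu})=0.15/0.25=0.6<0.9$, violating~\eqref{eq_oppr_mon_parameter}. So the single positivity really is too weak: the hypothesis must be read as the full Shearer condition on $\FiniteVolume\setminus\Set{\Polymer}$, namely $\PartitionFunction{\FiniteVolume''}(-\vec{\rho})>0$ for every $\FiniteVolume''\subseteq\FiniteVolume\setminus\Set{\Polymer}$ (equivalently, $\vec{\rho}_{\FiniteVolume\setminus\Set{\Polymer}}$ belongs to the local admissible multidisc, which is indeed a down-set as an intersection of such conditions). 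You should state this explicitly rather than deriving it from a misapplied proposition.

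Once the hypothesis is strengthened in this way, the rest of your induction is correct and is essentially the standard argument: the structural invariant $0<\OPPR{\FiniteVolume''}{\PolymerOther}(-\vec{\rho})\le 1$ on subsets then follows from~\eqref{eq_fi_oppr}, the reversed-reciprocal step gives~\eqref{eq_oppr_mon_parameter}, and your factor-by-factor comparison with the tail factors bounded by $1$ gives~\eqref{eq_oppr_mon_volume}. The only repair needed is to source the subset-positivity from the (corrected) hypothesis rather than from the fugacity down-set property, and to carry that Shearer-type condition as part of the induction invariant so that the inductive hypothesis genuinely applies at each nested $\FiniteVolume_i$, $\FiniteVolume'_i$.
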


That is, the one polymer partition ratios are decreasing in both volume and negative fugacity. Equations~\eqref{eq_oppr_mon_parameter} and~\eqref{eq_telescoping} together imply that $\PDPS$ is a down-set.

\begin{proof}[Proof of proposition~\ref{prop_scub_generic}]
The proof is inspired by a coupling~\cite[model 22]{Temmel__PropertiesAndApplicationsOfBernoulliRandomFieldsWithStrongDependencyGraphs__TUG_2012}. We prove the implication~\eqref{eq_scub_generic}, for all $\Polymer\in\FiniteVolume\FiniteSubsetOf\PolymerSet$, by induction over the cardinality of $\FiniteVolume$ and then take the limit $\FiniteVolume\Exhausts\PolymerSet$. If $\nu_\Polymer=0$, then $0\le\rho_\Polymer\le\nu_\Polymer=0$ and $\OPPR{\FiniteVolume}{\Polymer}(-\vec{\rho})=1-\rho_\Polymer=1$, independent of the cardinality of $\FiniteVolume$. For the remainder of the proof, we assume that $\vec{\rho}>\vec{0}$. The induction base is
\begin{equation*}
 \OPPR{\Set{\Polymer}}{\Polymer}(-\vec{\rho})
 = 1 - \rho_\Polymer
 \ge 1 - \frac{\rho_\Polymer}{\nu_\Polymer}
 = \frac{\nu_\Polymer-\rho_\Polymer}{\nu_\Polymer}\,.
\end{equation*}
For the induction step, let $\Set{\PolymerOther_1,\dotsc,\PolymerOther_m}:=\IncompatibleOtherPolymers{\Polymer}\cap\FiniteVolume$. Then
\begin{align*}
 \OPPR{\FiniteVolume}{\Polymer}(-\vec{\nu})
 &={} 1 - \frac{\nu_\Polymer}{ \prod_{i=1}^m
  \OPPR{
   \FiniteVolume\setminus\Set{\Polymer,\PolymerOther_1,\dotsc,\PolymerOther_{i-1}}
  }{\PolymerOther_i}(-\vec{\nu})
 }
 &\text{by }\eqref{eq_fi_oppr}\\
 &={}1 - \frac{\nu_\Polymer}{\rho_\Polymer}
  \frac{\rho_\Polymer}{ \prod_{i=1}^m
  \OPPR{
   \FiniteVolume\setminus\Set{\Polymer,\PolymerOther_1,\dotsc,\PolymerOther_{i-1}}
  }{\PolymerOther_i}(-\vec{\nu})
 }
 &\text{we assume }\rho_\Polymer>0\\
 &\le{} 1 - \frac{\nu_\Polymer}{\rho_\Polymer}
  \frac{\rho_\Polymer}{ \prod_{i=1}^m
  \OPPR{
   \FiniteVolume\setminus\Set{\Polymer,\PolymerOther_1,\dotsc,\PolymerOther_{i-1}}
  }{\PolymerOther_i}(-\vec{\rho})
 }
 &\text{as }\vec{\rho}\le\vec{\nu}\text{ and by }\eqref{eq_oppr_mon_parameter}\\
 &\le{} 1 - \frac{\nu_\Polymer}{\rho_\Polymer}
  (1-\OPPR{\FiniteVolume}{\Polymer}(-\vec{\rho}))
 &\text{by }\eqref{eq_fi_oppr}\\
\end{align*}
Therefore,
\begin{equation*}
 \OPPR{\FiniteVolume}{\Polymer}(-\vec{\rho})
 \ge \frac{\nu_\Polymer-\rho_\Polymer}{\nu_\Polymer}
  + \frac{\rho_\Polymer}{\nu_\Polymer}
  \OPPR{\FiniteVolume}{\Polymer}(-\vec{\nu})
 \ge \frac{\nu_\Polymer-\rho_\Polymer}{\nu_\Polymer}\,.
\end{equation*}
\end{proof}

\begin{proof}[Proof of~\eqref{eq_scub_inhom_red}]
The first part of this proof shows that uniform bounds on a certain subclass of the one polymer partition ratios are already sufficient, while the second part gives such a bound.\\

We say that a one polymer partition ratio, identified with $(\FiniteVolume,\Polymer)$, is \emph{escaping}, if $\IncompatibleOtherPolymers{\Polymer}\setminus\FiniteVolume\not=\emptyset$. If $(\FiniteVolume,\Polymer)$ is escaping, then every polymer in $\IncompatibleOtherPolymers{\Polymer}\setminus\FiniteVolume$ is an \emph{escape} of $(\FiniteVolume,\Polymer)$. Suppose we have, for $\vec{\rho}\ge\vec{0}$, a uniform bound on escaping one polymer partition ratios. That is, the following limit is well-defined and positive:
\begin{equation}\label{eq_unifBound_escaping}
 \ForAll\Polymer\in\PolymerSet,
 \EscapePolymer\in\IncompatibleOtherPolymers{\Polymer}:
 \quad
 \OPPR{\PolymersMinusEscapeSet}{\Polymer}(-\vec{\rho})
 :=\lim_{\Polymer\in\FiniteVolume\Exhausts\PolymersMinusEscapeSet}
  \OPPR{\PolymersMinusEscapeSet}{\Polymer}(-\vec{\rho})
 >0\,.
\end{equation}
Fix $\FiniteVolume\FiniteSubsetOf\PolymerSet$. As $(\PolymerSet,\IncompatibleTo)$ is connected, there is at least one polymer $\PolymerOther_{\Cardinality{\FiniteVolume}}\in\FiniteVolume$ with $(\FiniteVolume,\PolymerOther_{\Cardinality{\FiniteVolume}})$ escaping. Iterating this argument, for $\FiniteVolume\setminus\Set{\PolymerOther_{\Cardinality{\FiniteVolume}}}$, we obtain
\begin{equation}\label{eq_telescoping_escaping}
 \PartitionFunction{\FiniteVolume}(\vec{z})
 =\prod_{i=1}^n \OPPR{\FiniteVolume_i}{\PolymerOther_i}(\vec{z})\,,
\end{equation}
where $\FiniteVolume_i:=\Set{\PolymerOther_1,\dotsc,\PolymerOther_i}$ and all factors are escaping. Therefore~\eqref{eq_unifBound_escaping} implies $\vec{\rho}\in\PDPS$.\\

The second part shows that $\vec{\rho}\,\LeopRed{\Polymer}(\vec{\mu})\le\vec{\mu}$~\eqref{eq_scub_inhom_red} implies
\begin{equation}\label{eq_unifBound_inhom_red}
 \ForAll\Polymer\in\PolymerSet
 ,\EscapePolymer\in\IncompatibleOtherPolymers{\Polymer}
 ,\Polymer\in\FiniteVolume\FiniteSubsetOf\PolymersMinusEscapeSet:
 \quad
 \OPPR{\FiniteVolume}{\Polymer}(-\vec{\rho})
 \ge\frac{1}{1+\mu_\Polymer}\,.
\end{equation}
The claim follows by induction over $\Cardinality{\FiniteVolume}$, simultaneously over all $\Polymer\in\PolymerSet$ and $\EscapePolymer\in\IncompatibleOtherPolymers{\Polymer}$. The induction base is
\begin{equation*}
 \OPPR{\Set{\Polymer}}{\Polymer}(-\vec{\rho})
 = 1 - \rho_\Polymer
 \ge 1 - \frac{\mu_\Polymer}{\LeopRed{\Polymer}(\vec{\mu})}
 \ge 1- \frac{\mu_\Polymer}{1+\mu_\Polymer}
 = \frac{1}{1+\mu_\Polymer}\,.
\end{equation*}
For the induction step let $(\FiniteVolume,\Polymer)$ be escaping and let $\Set{\PolymerOther_1,\dotsc,\PolymerOther_m}:=\FiniteVolume\cap\IncompatibleOtherPolymers{\Polymer}$. Then $m\le D_\Polymer-1$ and every $(\FiniteVolume\setminus\Set{\Polymer,\PolymerOther_1,\dotsc,\PolymerOther_{i-1}},\PolymerOther_i)$ on the rhs of~\eqref{eq_fi_oppr} is escaping, too. Hence,
\begin{align*}
 \OPPR{\FiniteVolume}{\Polymer}(-\vec{\rho})
 &={} 1 - \frac{\rho_\Polymer}{ \prod_{i=1}^m
  \OPPR{
   \FiniteVolume\setminus\Set{\Polymer,\PolymerOther_1,\dotsc,\PolymerOther_{i-1}}
  }{\PolymerOther_i}(-\vec{\rho})
 }
 &\text{by }\eqref{eq_fi_oppr}\\
 &\ge{} 1 - \frac{\rho_\Polymer}{ \prod_{i=1}^m (1-\mu_{\PolymerOther_i})^{-1}}
 &\text{by induction}\\
 &\ge{} 1 - \frac{\mu_\Polymer}{1+\mu_\Polymer}
  \times\frac{
   {\displaystyle
   \sup_{\EscapePolymer\in\IncompatibleOtherPolymers{\Polymer}}
   \prod_{\PolymerOther'\in\IncompatibleOtherPolymers{\Polymer}\setminus\Set{\EscapePolymer}}
   (1+\mu_{\PolymerOther'})^{-1}
   }
  }
  {\prod_{i=1}^m (1-\mu_{\PolymerOther_i})^{-1}}
 &\text{by }\eqref{eq_scub_inhom_red}\\
 &\ge{} 1 - \frac{\mu_\Polymer}{1+\mu_\Polymer}\times 1
  =\frac{1}{1+\mu_\Polymer}
 &\text{cancelling}\,.
\end{align*}
\end{proof}
\section{The tree-operator framework}
\label{sec_treeOperatorFramework}
This section discusses tree-operators and how to derive SCUBs from them. Section~\ref{sec_uniformBounds} introduces the escaping pinned series, a variant of the classic pinned series, and shows how bounds on the escaping pinned series translate into uniform bounds. Section~\ref{sec_depthOneTreeOperators} discusses depth one tree-operators: a depth-wise summation of trees converges, if one controls the one-step extensions of the leaves by trees of depth 1. Section~\ref{sec_penroseIdentity} recalls Penrose's identity, which expresses the Ursell function of a cluster as a sum over a subset of the spanning trees of the cluster.\\

Section~\ref{sec_boundingTheEscapedPinnedSeries} combines these tools: a good local control of the trees present in the rewritten series yields a bound of the escaping pinned series. Combined with section~\ref{sec_uniformBounds}, the local control condition becomes a template for a SCUB. To carry the extra information needed for the local control over between successive applications of the derived tree-operator, we work on a space indexed by the set of incompatible polymer pairs instead of polymers. This is the multiplexing step in section~\ref{sec_boundingTheEscapedPinnedSeries}.
\subsection{Uniform bounds}
\label{sec_uniformBounds}
The classic approach to uniform bounds~\eqref{eq_uniformBounds} is to bound the \emph{pinned series}, the limit case of~\eqref{eq_pinnedConnectedFunction} in~\eqref{eq_uniformBounds_pinned} at $\vec{\rho}\in[0,\infty[^\PolymerSet$:
\begin{equation}\label{eq_pinnedSeries}
 \rho_\Polymer\PinnedSeries{\Polymer}(\vec{\rho})
 := \rho_\Polymer\lim_{\FiniteVolume\Exhausts\PolymerSet}
  \PinnedConnectedExpression{\FiniteVolume}{\Polymer}(-\vec{\rho})
 = \rho_\Polymer\sum_{n\ge 0}
   \frac{1}{\Factorial{n}}
  \sum_{\vec{\PolymerOther}\in\PolymerSet^n}
   \Modulus{\Ursell{\Polymer,\vec{\PolymerOther}}}
  \prod_{i=1}^n \rho_{\PolymerOther_i}\,.
\end{equation}

Recall that a pair $(\FiniteVolume,\Polymer)$ is \emph{escaping}, if $\IncompatibleOtherPolymers{\Polymer}\setminus\FiniteVolume\not=\emptyset$, with every $\EscapePolymer\in\IncompatibleOtherPolymers{\Polymer}\setminus\FiniteVolume$ called an \emph{escape} of $(\FiniteVolume,\Polymer)$. The \emph{set of escape pairs} is
\begin{equation}\label{eq_incompatiblePolymerPairs}
 \EscapePairSet
 :=\Set{
  \EscapePair:
  \Polymer\in\PolymerSet,
  \EscapePolymer\in\IncompatibleOtherPolymers{\Polymer}
 }\,.
\end{equation}
The \emph{escaping pinned series} is the limit of the escaping pinned series over finite volumes and at negative real fugacity. It is indexed by the escape pairs $\EscapePair\in\EscapePairSet$ from~\eqref{eq_incompatiblePolymerPairs}:

\begin{equation}\label{eq_pinnedSeriesEscaping}
\begin{aligned}
 \rho_\Polymer\PinnedSeriesEscaping{\EscapePair}(\vec{\rho})
 &:= \rho_\Polymer\lim_{\FiniteVolume\Exhausts\PolymersMinusEscapeSet}
  \PinnedConnectedExpression{\FiniteVolume}{\Polymer}(-\vec{\rho})
 \\&= \sum_{n\ge 0}
   \frac{1}{\Factorial{n}}
  \sum_{\vec{\PolymerOther}\in(\PolymersMinusEscapeSet)^n}
   \Ursell{\Polymer,\PolymerOther_1,\dotsc,\PolymerOther_n}
  \prod_{i=1}^n (-\rho_{\PolymerOther_i})\,.
\end{aligned}
\end{equation}

Bounding $\OPPR{\PolymersMinusEscapeSet}{\Polymer}(-\vec{\rho})$ from below or $\PinnedSeriesEscaping{\EscapePair}(\vec{\rho})$ from above are equivalent by the same reasoning as outlined after~\eqref{eq_uniformBounds}. \AlsoArxiv{The details are in~\eqref{eq_product_escaping}. }The link between bounding the escaping pinned series and uniform bounds is:

\begin{Prop}
Let $\vec{0}\le\vec{\nu}<\vec{\rho}\in[0,\infty[^\PolymerSet$. If $\PinnedSeriesEscaping{\EscapePair}(\vec{\rho})<\infty$, for every $\EscapePair\in\EscapePairSet$, then we have uniform bounds as in $\eqref{eq_uniformBounds}$ for $\vec{\nu}$.
\end{Prop}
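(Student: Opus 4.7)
The plan is to show that the hypothesis forces $\vec{\rho}\in\PDPS$; once this is done, $\vec{\nu}<\vec{\rho}$ places $\vec{\nu}$ in $\InteriorOf\PDPS$ and proposition~\ref{prop_scub_generic} delivers~\eqref{eq_uniformBounds_onePolymerRatio} at $\vec{\nu}$, which is equivalent to~\eqref{eq_uniformBounds_pinned}. The non-trivial passage is from finiteness of the escaping pinned series to $\vec{\rho}\in\PDPS$, and it proceeds via a lower bound on the infinite-volume escaping one polymer partition ratio $\OPPR{\PolymersMinusEscapeSet}{\Polymer}(-\vec{\rho})$.

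For the first step, I would convert the pinned series bound into a one polymer ratio bound. Since $\PartitionFunction{\FiniteVolume}(\vec{z})$ is affine in $z_\Polymer$ with $z_\Polymer$-derivative equal to the pinned connected function, integrating along the $z_\Polymer$-axis from $0$ to $-\rho_\Polymer$ yields
\begin{equation*}
 -\log\OPPR{\FiniteVolume}{\Polymer}(-\vec{\rho})
 =\int_0^{\rho_\Polymer}\PinnedConnectedExpression{\FiniteVolume}{\Polymer}\bigl(-t,-\vec{\rho}_{\FiniteVolume\setminus\Set{\Polymer}}\bigr)\,dt\,.
\end{equation*}
By the alternating sign property~\eqref{eq_alternatingSignProperty}, the integrand is non-negative and monotone increasing in $t$, so the integral is bounded above by $\rho_\Polymer\,\PinnedConnectedExpression{\FiniteVolume}{\Polymer}(-\vec{\rho})$. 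Taking $\FiniteVolume\Exhausts\PolymersMinusEscapeSet$ and combining the monotonicity of $\OPPR{\FiniteVolume}{\Polymer}$ in $\FiniteVolume$ from lemma~\ref{lem_oppr_mon} with the matching monotonicity of the pinned connected function in $\FiniteVolume$ gives
\begin{equation*}
 \OPPR{\PolymersMinusEscapeSet}{\Polymer}(-\vec{\rho})
 \ge\exp\bigl(-\rho_\Polymer\,\PinnedSeriesEscaping{\EscapePair}(\vec{\rho})\bigr)>0\,.
\end{equation*}

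For the second step, connectedness of $(\PolymerSet,\IncompatibleTo)$ lets the escaping telescoping identity~\eqref{eq_telescoping_escaping} from the proof of the reduced SCUB express $\PartitionFunction{\FiniteVolume}(-\vec{\rho})$, for any $\FiniteVolume\FiniteSubsetOf\PolymerSet$, as a product of escaping one polymer partition ratios. Each factor is bounded below by the corresponding strictly positive infinite-volume limit from the first step, whence $\PartitionFunction{\FiniteVolume}(-\vec{\rho})>0$ and $\vec{\rho}\in\PDPS$. The strict componentwise inequality $\vec{\nu}<\vec{\rho}$ then makes $\vec{\rho}$ a witness for $\vec{\nu}\in\InteriorOf\PDPS$, and proposition~\ref{prop_scub_generic} yields $\OPPR{\PolymerSet}{\Polymer}(-\vec{\nu})\ge(\rho_\Polymer-\nu_\Polymer)/\rho_\Polymer>0$ whenever $\rho_\Polymer>0$, with the value $1$ otherwise. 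Lemma~\ref{lem_oppr_mon} extends this lower bound uniformly to all finite volumes.

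The main obstacle is the first step: one has to combine the affine dependence of $\PartitionFunction{\FiniteVolume}$ on $z_\Polymer$ with the alternating sign property to reduce the pinned series bound to a positive lower bound on $\OPPR{\FiniteVolume}{\Polymer}$, and then interchange the infinite-volume limit with the exponential using the correct monotonicities. Everything after that---connectedness, escaping telescoping, and proposition~\ref{prop_scub_generic}---is routine once $\vec{\rho}\in\PDPS$ has been secured.
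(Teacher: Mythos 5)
Your proof is correct and follows essentially the same route as the paper's: both reduce the claim to showing $\vec{\rho}\in\PDPS$ via the escaping telescoping decomposition of $\PartitionFunction{\FiniteVolume}(-\vec{\rho})$ and the translation of the escaping pinned series bound into lower bounds on escaping one polymer ratios (uniform in the volume), and both conclude by invoking proposition~\ref{prop_scub_generic}. You merely spell out the integration argument the paper leaves implicit (cf.\ the appendix identities~\eqref{eq_integration_finite} and~\eqref{eq_integration_escaping}).
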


\begin{proof}
Let $\FiniteVolume\FiniteSubsetOf\PolymerSet$. By telescoping~\eqref{eq_telescoping} and because $(\PolymerSet,\IncompatibleTo)$ is connected, we may write $\PartitionFunction{\FiniteVolume}(-\vec{\rho})$ as a product of escaping one polymer ratios. This is done by induction over $\Cardinality{\FiniteVolume}$. The upper bound on $\PinnedSeriesEscaping{\EscapePair}(\vec{\rho})$ translates into lower bounds for $\OPPR{\PolymersMinusEscapeSet}{\Polymer}(-\vec{\rho})$ and consequently for every escaping one polymer ratio, uniform in the volume. Therefore, $\vec{\rho}\in\PDPS$ and the uniform bounds for $\vec{\nu}$ follow from an application of proposition~\ref{prop_scub_generic} to the bounds on $\PinnedSeriesEscaping{\EscapePair}(\vec{\rho})$.
\end{proof}
\subsection{Depth one tree-operators}
\label{sec_depthOneTreeOperators}

Let $\RTsSize{n}$ be the set of trees with vertex set $\IntRange{0,n}$ and root $0$. Let $\RTsSizeFinite := \biguplus_{n\ge 0} \RTsSize{n}$. Let $\Stars$ be the set of \emph{stars}. A star $s\in\Stars$ has leaves $\StarLeaves{s}$ and $\StarPermutations{s}$ rooted automorphisms, equivalent to permutations of its leaves. Let $\LabelSpace$ be a countable set of labels. A function $c:\LabelSpace^{s}\to [0,\infty[$ is \emph{$s$-invariant}, if it is invariant under all rooted automorphism of the star $s$. Finally, $\StarAt{\RTVarLabel}{v}$ is the star with root $v$ and $v$'s children in the tree $\RTVarLabel\in\RTsSizeFinite$ as its leaves.

\begin{Prop}[Extension of~{\cite[Proposition 7]{Fernandez_Procacci__ClusterExpansionForAbstractPolymerModels_NewBoundsFromAnOldApproach__CMP_2007}}]\label{prop_depthOneTreeOperator}
Let $(c_s)_{s\in\Stars}$ be a collection of $s$-invariant, non-negative functions. Let $X:=[0,\infty]^{\LabelSpace}$ and $\vec{\rho}\in X$. Define the tree-operator $T: X\to X$, for each $l\in\LabelSpace$, by
\begin{subequations}\label{eq_depthOneTreeOperator}
\begin{equation}\label{eq_depthOneTreeOperator_Definition}
 \vec{\mu}\mapsto
 [T(\vec{\mu})]_l
 :=
  \rho_{l}
  \sum_{s\in\Stars}
   \frac{1}{\StarPermutations{s}}
  \sum_{\vec{\lambda}\in \Set{l}\times\LabelSpace^{\Cardinality{\StarLeaves{s}}}}
   c_s(\vec{\lambda})
  \prod_{v\in\StarLeaves{s}} \mu_{\lambda_v}\,.
\end{equation}
If there exists $\vec{\mu}\in X$ such that
\begin{equation}\label{eq_depthOneTreeOperator_VectorCondition}
 T(\vec{\mu})\le\vec{\mu}\,,
\end{equation}
then the family of series, indexed by $\LabelSpace$,
\begin{equation}\label{eq_depthOneTreeOperator_SeriesDefinition}
 R_l(\vec{\rho}):=
 \sum_{n\ge 0}
  \frac{1}{\Factorial{n}}
 \sum_{\vec{\lambda}\in\Set{l}\times\LabelSpace^n,\RTVarSize\in\RTsSize{n}}
 \prod_{i\in\IntRange{0,n}}
  c_{\StarAt{\RTVarSize}{i}}
   (\vec{\lambda}_{\StarAt{\RTVarSize}{i}})
  \rho_{\lambda_i}
\end{equation}
converges uniformly, as
\begin{equation}\label{eq_depthOneTreeOperator_ConvergenceAndMajoration}
 R(\vec{\rho})
 =\lim_{n\to\infty} T^n(\vec{\rho})
 =T(R(\vec{\rho}))
 \le \vec{\mu}\,.
\end{equation}
\end{subequations}
\end{Prop}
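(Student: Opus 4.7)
The plan is to exhibit $R(\vec{\rho})$ as the monotone limit of the iterates $T^n(\vec{0})$ and then identify this limit combinatorially with the series~\eqref{eq_depthOneTreeOperator_SeriesDefinition}. The equivalence $\lim_n T^n(\vec{\rho}) = \lim_n T^n(\vec{0})$ then follows from the observation that $T(\vec{0})_l = \rho_l$ (under the natural convention $c_s \equiv 1$ for the single-vertex star, matching the $n = 0$ term of~\eqref{eq_depthOneTreeOperator_SeriesDefinition}), so that $T^n(\vec{\rho}) = T^{n+1}(\vec{0})$.

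First I would verify monotonicity of $T$: because every $c_s$ and every $\rho_l$ is non-negative, $\vec{\mu}_1 \le \vec{\mu}_2$ componentwise implies $T(\vec{\mu}_1) \le T(\vec{\mu}_2)$. Since $\vec{0} \le T(\vec{0})$ trivially, an easy induction yields that $\{T^n(\vec{0})\}_{n\ge 0}$ is componentwise non-decreasing. Assuming the hypothesis $T(\vec{\mu}) \le \vec{\mu}$, a second monotonicity induction gives $T^n(\vec{0}) \le \vec{\mu}$ for every $n$, so the sequence converges pointwise to some $R^* \in X$ with $R^* \le \vec{\mu}$.

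The core combinatorial step, which I expect to be the main obstacle, is to identify $[T^n(\vec{0})]_l$ with the partial sum of~\eqref{eq_depthOneTreeOperator_SeriesDefinition} over trees of depth strictly less than $n$. Unfolding the recursion, each application of $T$ to a leaf corresponds to grafting a star of new labelled children at that leaf, contributing $\frac{1}{\StarPermutations{s}} c_s(\vec{\lambda}) \prod_{v \in \StarLeaves{s}} \mu_{\lambda_v}$ by definition of $T$. The bookkeeping reduces to the identity that, for any rooted tree $\tau \in \RTsSize{k}$, the multinomial factor obtained by distributing the labels $\{1,\dotsc,k\}$ level by level among the children at each vertex equals exactly $\prod_{v}\StarPermutations{\StarAt{\tau}{v}}$; together with the $s$-invariance of each $c_s$ and the leading $\frac{1}{k!}$ in~\eqref{eq_depthOneTreeOperator_SeriesDefinition}, this makes the two expressions coincide. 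I would proceed by induction on $n$, where the inductive step peels off the star at the root of $\tau$ and applies the hypothesis to each child-subtree, using that the child-subtrees of a depth-$n$ tree have depth at most $n-1$.

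With the combinatorial identification established, the remaining assertions are routine. The monotone convergence theorem, applied termwise to the non-negative series defining $R_l(\vec{\rho})$, yields $R^*_l = R_l(\vec{\rho}) \le \mu_l$, which is both the claimed uniform convergence and the upper bound. Applying $T$ and pulling the limit inside by monotone convergence once more gives $T(R^*) = \lim_n T^{n+1}(\vec{0}) = R^*$, so $R(\vec{\rho})$ is a fixed point of $T$ bounded above by $\vec{\mu}$, as claimed.
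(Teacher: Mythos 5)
Your proof is correct and follows the paper's own route: identify $T^n(\vec{0})$ with the depth-truncated partial sums of $R(\vec{\rho})$, use order-preservation of $T$ and monotone convergence to pass to the limit, and conclude with a Tarski-style fixed-point argument, which is exactly the strategy the paper uses for the depth-$k$ generalisation in the appendix (deferring the same combinatorial bookkeeping to Temmel's thesis). Your observation that the single-vertex star must carry weight $c_s \equiv 1$, needed to identify $\lim_n T^n(\vec{\rho})$ with $\lim_n T^n(\vec{0})$ and hence to make the stated conclusion hold verbatim, correctly pins down a normalisation the paper leaves implicit.
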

\subsection{Penrose's identity}
\label{sec_penroseIdentity}
Let $G:=(V,E)$ be a finite, simple graph. The \emph{set of spanning subgraphs} of $G$ is $\SpanningSubgraphs{G}$. It is a \emph{poset} under the \emph{partial order} given by
\begin{equation*}
 H\SubGraph H' \Iff \Edges{H}\subseteq\Edges{H'}\,.
\end{equation*}
For $L,U\in\SpanningSubgraphs{G}$ with $L\SubGraph U$, the \emph{interval} from $L$ to $U$ is
\begin{equation*}
 [L,U]:=\Set{H\in\SpanningSubgraphs{G}: L\SubGraph H\SubGraph U}\,.
\end{equation*}
Let $\SpanningTrees{G}\subseteq\SpanningSubgraphs{G}$ be the \emph{set of spanning trees} of $G$. A function $\SchemeAnonymous:\SpanningTrees{G}\to\SpanningSubgraphs{G}$ is a \emph{partition scheme} of $\SpanningSubgraphs{G}$~\cite[before (5)]{Penrose__ConvergenceOfFugacityExpansionsForClassicalSystems__SMFA_1967}, if
\begin{equation}\label{eq_partitionScheme}
 \SpanningSubgraphs{G} :=
 \biguplus_{\Tree\in\SpanningTrees{G}} [\Tree,\SchemeAnonymous(\Tree)]\,.
\end{equation}
The \emph{set of singleton trees with respect to $S$} (short \emph{$S$-trees}) is
\begin{equation}\label{eq_singletonTrees}
 \SingletonTrees{\SchemeAnonymous}(G)
 :=\Set{\Tree\in\SpanningTrees{G}: \Tree = \SchemeAnonymous(\Tree)}\,.
\end{equation}

\begin{Thm}[Penrose {\cite[(5)]{Penrose__ConvergenceOfFugacityExpansionsForClassicalSystems__SMFA_1967}}]\label{thm_penroseIdentity}
If $\SchemeAnonymous$ is a partition scheme of $\SpanningSubgraphs{G}$, then
\begin{equation}\label{eq_penroseIdentity}
 \sum_{H\in\SpanningSubgraphs{G}} (-1)^{\Cardinality{\Edges{H}}}
 = (-1)^{\Cardinality{V}-1}
  \Cardinality{\SingletonTrees{\SchemeAnonymous}(G)}\,.
\end{equation}
\end{Thm}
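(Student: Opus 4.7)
The plan is a direct calculation exploiting the partition property of $\SchemeAnonymous$. First, I would use the decomposition in~\eqref{eq_partitionScheme} to split the left-hand side along the intervals:
$$\sum_{H\in\SpanningSubgraphs{G}} (-1)^{\Cardinality{\Edges{H}}} = \sum_{\Tree\in\SpanningTrees{G}} \sum_{H\in[\Tree,\SchemeAnonymous(\Tree)]} (-1)^{\Cardinality{\Edges{H}}}.$$

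For each fixed spanning tree $\Tree$, the interval $[\Tree,\SchemeAnonymous(\Tree)]$ in $\SpanningSubgraphs{G}$ is Boolean: a spanning subgraph $H$ lies in it if and only if $\Edges{H}=\Edges{\Tree}\cup J$ for a unique $J\subseteq\Edges{\SchemeAnonymous(\Tree)}\setminus\Edges{\Tree}$. Let $k_\Tree:=\Cardinality{\Edges{\SchemeAnonymous(\Tree)}\setminus\Edges{\Tree}}$, and recall that $\Cardinality{\Edges{\Tree}}=\Cardinality{V}-1$ because $\Tree$ is a spanning tree. Factoring the sign of $\Edges{\Tree}$ out of the inner sum yields
$$\sum_{H\in[\Tree,\SchemeAnonymous(\Tree)]} (-1)^{\Cardinality{\Edges{H}}} = (-1)^{\Cardinality{V}-1}\sum_{J\subseteq\Edges{\SchemeAnonymous(\Tree)}\setminus\Edges{\Tree}} (-1)^{\Cardinality{J}} = (-1)^{\Cardinality{V}-1}(1-1)^{k_\Tree}.$$

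This vanishes whenever $k_\Tree\ge 1$, that is whenever $\Tree\neq\SchemeAnonymous(\Tree)$. Consequently, only the singleton trees $\Tree\in\SingletonTrees{\SchemeAnonymous}(G)$ contribute, each with value $(-1)^{\Cardinality{V}-1}$, and summing over them produces the right-hand side of~\eqref{eq_penroseIdentity}.

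There is no substantial obstacle here: once the Boolean structure of $[\Tree,\SchemeAnonymous(\Tree)]$ is observed, the standard binomial cancellation does all the work. The only thing to verify is that each interval really is a Boolean sublattice, which is immediate from the edge-inclusion definition of $\SubGraph$, since any subset of $\Edges{\SchemeAnonymous(\Tree)}\setminus\Edges{\Tree}$ may be freely adjoined to $\Edges{\Tree}$ without leaving the interval.
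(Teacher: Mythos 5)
Your argument is correct and follows essentially the same route as the paper: decompose $\SpanningSubgraphs{G}$ into the Boolean intervals $[\Tree,\SchemeAnonymous(\Tree)]$ and observe that the inner alternating sum vanishes unless the interval is a singleton. The only cosmetic difference is that the paper first proves the polynomial identity $\sum_{H}\prod_{e\in\Edges{H}}x_e = \sum_\Tree\prod_{e\in\Edges{\Tree}}x_e\prod_{f\in\Edges{\SchemeAnonymous(\Tree)}\setminus\Edges{\Tree}}(1+x_f)$ for general edge weights $\vec{x}$ and then substitutes $\vec{x}=-\vec{1}$, whereas you plug in $-1$ from the start; the binomial cancellation you invoke is exactly the $(1+x_f)$ factorisation evaluated at $x_f=-1$.
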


\begin{Rem}
The number of $\SchemeAnonymous$-trees is independent of $\SchemeAnonymous$.
\end{Rem}

\AlsoArxiv{The proof of theorem~\ref{thm_penroseIdentity} is in section~\ref{sec_penroseProof}.}

\subsection{Bounding the escaped pinned series}
\label{sec_boundingTheEscapedPinnedSeries}
This section combines tree-operators and Penrose's identity to bound the escaping pinned series. Let $\EscapePair\in\EscapePairSet$ and $\vec{\PolymerOther}\in(\PolymersMinusEscapeSet)^n$. If $\FiniteVolume:=\Set{\Polymer,\PolymerOther_1,\dotsc,\PolymerOther_n}$, then $(\FiniteVolume,\Polymer)$ is escaping with escape $\EscapePolymer$. We assume that $G:=\Cluster{\Polymer,\vec{\PolymerOther}}$ is a cluster. Choose a partition scheme $\SchemeAnonymous_G$ of $G$ and apply Penrose's identity~\eqref{eq_penroseIdentity}:

\begin{equation}\label{eq_penroseApplication}
 \Ursell{\Polymer,\vec{\PolymerOther}}
 = \sum_{H\in\SpanningSubgraphs{G}} (-1)^{\Cardinality{\Edges{H}}}
 = (-1)^n \Cardinality{\SingletonTrees{\SchemeAnonymous_G}(G)}\,.
\end{equation}
Every $\RTVarLabel\in\SingletonTrees{\SchemeAnonymous_G}(G)$ is rooted at $0$ with label $\Polymer$ and its non-root vertices are labelled by $\vec{\PolymerOther}$. We extend the labels to $\EscapePairSet$:
\begin{itemize}
 \item the root label becomes $\EscapePair$
 \item the non-root vertex $i$ with label $\PolymerOther_i$ gets label $(\PolymerOther_i,\EscapePolymer_i)$, where $\EscapePolymer_i$ is the label of the parent of $i$ in $\RTVarLabel$. As $G$ is a cluster, we know that $\EscapePolymer_i\in\IncompatibleOtherPolymers{\PolymerOther_i}$.
\end{itemize}

\begin{Prop}\label{prop_multiplexing}
\begin{subequations}\label{eq_multiplexing}
Let $X:=[0,\infty]^\PolymerSet$ and $Y:=[0,\infty]^{\EscapePairSet}$. Let $\EscapeInjection$ be the injection from $X$ into $Y$, \emph{multiplexing} values by ignoring the escape coordinate. Let $(c_n)_{n\in\NatNumZero}$ be a family of functions with $c_n: \EscapePairSet^{n+1}\to\Set{0,1}$. For all $\EscapePair\in\EscapePairSet$ and $\vec{\PolymerOther}\in(\PolymersMinusEscapeSet)^n$, let $\SchemeAnonymous_G$ be a partition scheme on the cluster $G:=\Cluster{\Polymer,\vec{\PolymerOther}}$ with
\begin{equation}\label{eq_multiplexing_localControl}
 \ForAll\RTVarSize\in\RTsSize{n}:\quad
 \Iverson{\RTVarSize\in\SingletonTrees{\SchemeAnonymous_G}}
 \le \prod_{i=0}^n
  c_{\Cardinality{\StarLeaves{\StarAt{\RTVarSize}{i}}}}
   ((\PolymerOther_j,\EscapePolymer_j)_{j\in\Vertices{\StarAt{\RTVarSize}{i}}})\,,
\end{equation}
Define the tree-operator $\LeopMulti{}: Y\to Y$, for each $\EscapePair\in\EscapePairSet$, by
\begin{equation}\label{eq_multiplexing_operator}
 \vec{u}\mapsto \LeopMulti{\EscapePair}(\vec{u})
  := \underbrace{\EscapeInjection(\vec{\rho})_{\EscapePair}}_{=\rho_\Polymer}
   \sum_{s\in\Stars} \frac{1}{\StarPermutations{s}}
   \sum_{\vec{\lambda}\in \Set{\EscapePair}\times\EscapePairSet^{\Cardinality{\StarLeaves{s}}}}
   \prod_{v\in\StarLeaves{s}} u_{\lambda_v}
 \,.
\end{equation}
Let $\LeopSup{}: X\to X$ be the local supremum of $\LeopMulti{}$:
\begin{equation}\label{eq_multiplexing_supremum}
 \ForAll\Polymer\in\PolymerSet:\qquad
 \vec{\mu}\mapsto \LeopSup{\Polymer}(\vec{\mu})
 := \sup_{\EscapePolymer\in\IncompatibleOtherPolymers{\Polymer}}
  \LeopMulti{\EscapePair}(\EscapeInjection(\vec{\mu}))
\end{equation}
If there is $\vec{\mu}\in[0,\infty[^\PolymerSet$ such that
\begin{equation}\label{eq_multiplexing_condition}
 \vec{\rho}\, \LeopSup{}(\vec{\mu})\le \vec{\mu}\,,
\end{equation}
then
\begin{equation}\label{eq_multiplexing_escapingPinnedSeriesBound}
 \ForAll\EscapePair\in\EscapePairSet:
 \qquad
 \rho_\Polymer\PinnedSeriesEscaping{\EscapePair}(\vec{\rho})
 \le\mu_\Polymer<\infty\,.
\end{equation}
\end{subequations}
\end{Prop}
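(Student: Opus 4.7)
The plan is to reduce the bound on the escaping pinned series to an application of Proposition~\ref{prop_depthOneTreeOperator}, the depth-one tree-operator criterion. The labels will be elements of $\EscapePairSet$, and the role of the vector $\vec{\rho}$ in Proposition~\ref{prop_depthOneTreeOperator} will be played by its multiplex $\EscapeInjection(\vec{\rho})$, which is constant in the escape coordinate.

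First, I would rewrite $\rho_\Polymer \PinnedSeriesEscaping{\EscapePair}(\vec{\rho})$ from~\eqref{eq_pinnedSeriesEscaping} by restricting the inner sum to those $\vec{\PolymerOther}\in(\PolymersMinusEscapeSet)^n$ for which $\Cluster{\Polymer,\vec{\PolymerOther}}$ is a cluster, since by~\eqref{eq_ursellFunction} the Ursell function vanishes otherwise. On this restricted set, \NamePenrose{}'s identity~\eqref{eq_penroseIdentity} applied to the partition scheme $\SchemeAnonymous_G$ gives $\Ursell{\Polymer,\vec{\PolymerOther}}=(-1)^n\Cardinality{\SingletonTrees{\SchemeAnonymous_G}(G)}$, so the sign $(-1)^n$ absorbs the $(-\rho_{\PolymerOther_i})$ factors and the whole expression becomes manifestly non-negative. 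The count of $\SchemeAnonymous_G$-trees rewrites as a sum of indicators $\Iverson{\RTVarSize\in\SingletonTrees{\SchemeAnonymous_G}}$ over $\RTVarSize\in\RTsSize{n}$.

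Next, I would apply the local control hypothesis~\eqref{eq_multiplexing_localControl} to replace each such indicator by the product of factors $c_{\Cardinality{\StarLeaves{\StarAt{\RTVarSize}{i}}}}$ indexed by the stars at each vertex, after promoting every vertex label from a polymer $\PolymerOther_j$ to the escape pair $(\PolymerOther_j,\EscapePolymer_j)$ where $\EscapePolymer_j$ is the label of the parent of $j$ in $\RTVarSize$ (and the root carries the label $\EscapePair$). Crucially, because $G$ is a cluster, every such parent label is an element of $\IncompatibleOtherPolymers{\PolymerOther_j}$, so the promoted labels lie in $\EscapePairSet$. After this promotion, summing over all $\vec{\PolymerOther}$ is exactly the same as summing over all label vectors in $\Set{\EscapePair}\times\EscapePairSet^n$, because of the bijection between tree-labellings by polymers (with fixed parent relations) and labellings by escape pairs. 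This puts the bound into precisely the form of the series $R_{\EscapePair}(\EscapeInjection(\vec{\rho}))$ defined in~\eqref{eq_depthOneTreeOperator_SeriesDefinition}, with label space $\LabelSpace=\EscapePairSet$, fugacity vector $\EscapeInjection(\vec{\rho})$, and $s$-invariant star weights $c_s$ assembled from the $c_n$.

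Finally, I would verify the fixed-point hypothesis~\eqref{eq_depthOneTreeOperator_VectorCondition} for the associated tree-operator $\LeopMulti{}$ with candidate vector $\EscapeInjection(\vec{\mu})\in Y$. The hypothesis~\eqref{eq_multiplexing_condition} says that for every polymer $\Polymer$, $\rho_\Polymer\LeopSup{\Polymer}(\vec{\mu})\le\mu_\Polymer$; unfolding~\eqref{eq_multiplexing_supremum}, this means $\rho_\Polymer\LeopMulti{\EscapePair}(\EscapeInjection(\vec{\mu}))\le\mu_\Polymer=\EscapeInjection(\vec{\mu})_{\EscapePair}$ for every $\EscapePolymer\in\IncompatibleOtherPolymers{\Polymer}$, which is exactly $\LeopMulti{}(\EscapeInjection(\vec{\mu}))\le\EscapeInjection(\vec{\mu})$ componentwise on $\EscapePairSet$. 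Proposition~\ref{prop_depthOneTreeOperator} then yields $R_{\EscapePair}(\EscapeInjection(\vec{\rho}))\le\mu_\Polymer$, which in turn dominates $\rho_\Polymer\PinnedSeriesEscaping{\EscapePair}(\vec{\rho})$ by the chain of identities and inequalities above.

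The main obstacle I anticipate is the bookkeeping in the label-promotion step: making sure that summing over $\vec{\PolymerOther}\in(\PolymersMinusEscapeSet)^n$ together with the tree-structure $\RTVarSize$ genuinely corresponds to a free sum over escape-pair labellings in $\Set{\EscapePair}\times\EscapePairSet^n$ once the indicator has been bounded, and that the $c_s$ functions resulting from the $c_n$ are indeed $s$-invariant as required by Proposition~\ref{prop_depthOneTreeOperator}. The latter should follow because $c_n$ depends only on the multiset of leaf labels of the star, and the former uses that in a cluster no edge of the singleton tree can violate the incompatibility constraint encoded in $\EscapePairSet$.
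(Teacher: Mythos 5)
Your proposal is correct and follows the same route as the paper's proof: bound $\Modulus{\Ursell{\cdot}}$ by the product of local control functions~\eqref{eq_multiplexing_localControl}, insert this termwise into the escaping pinned series, recognise the result as (a restriction of) the iterated tree-operator series $R$ from~\eqref{eq_depthOneTreeOperator_SeriesDefinition}, and invoke the depth-one convergence criterion of Proposition~\ref{prop_depthOneTreeOperator}. The one refinement to the bookkeeping you worry about is that the promoted-label sum is bounded above by -- not equal to -- the free sum over $\Set{\EscapePair}\times\EscapePairSet^n$, since the label promotion is injective but not surjective; because all summands are non-negative this is exactly the direction needed, so the argument goes through.
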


\begin{Rem}
One may improve~\eqref{eq_multiplexing_escapingPinnedSeriesBound} by adapting the calculations in~\cite[appendix]{Bissacot_Fernandez_Proccaci__OnTheConvergenceOfClusterExpansionsForPolymerGases__JSP_2010}.
\end{Rem}

\begin{proof}
If~\eqref{eq_multiplexing_localControl} holds, then we have
\begin{equation*}
 \Modulus{\Ursell{\PolymerOther_0,\dotsc,\PolymerOther_n}}
 =
 \Cardinality{\SingletonTrees{\SchemeAnonymous_G}(G)}
 \le
 \sum_{\RTVarSize\in\RTsSize{n}}
 \prod_{i=0}^n
  c_{\Cardinality{\StarLeaves{\StarAt{\RTVarSize}{i}}}}
   ((\PolymerOther_j,\EscapePolymer_j)_{
    \Set{j\in\StarAt{\RTVarSize}{i}}
   })\,.
\end{equation*}
We apply the above bound term-wise to the escaping pinned series. The resulting expression is the limit of iterated applications of the tree-operator $\LeopMulti{}$. Condition~\eqref{eq_multiplexing_condition} relaxes to a condition on $\LeopMulti{}$ appropriate for~\eqref{eq_depthOneTreeOperator_VectorCondition}. Hence, we may apply~\eqref{eq_depthOneTreeOperator_ConvergenceAndMajoration} and obtain
\begin{equation*}
 \PinnedSeriesEscaping{\EscapePair}(\vec{\rho})
 \le\sum_{n\ge 0} \frac{1}{\Factorial{n}}
  \sum_{\vec{\PolymerOther}\in(\PolymersMinusEscapeSet)^n}
   \sum_{\RTVarSize\in\RTsSize{n}}
   \prod_{i=0}^n
    \rho_{\PolymerOther_i}
   c_{\Cardinality{\StarLeaves{\StarAt{\RTVarSize}{i}}}}
   ((\PolymerOther_j,\EscapePolymer_j)_{\Set{j\in\StarAt{\RTVarSize}{i}}})
 \le\mu_\Polymer\,.
\end{equation*}
\end{proof}

\section{Partition schemes}
\label{sec_partitionSchemes}
We recall the classic formulation of Penrose's scheme in section~\ref{sec_schemePenroseStatic}. The Penrose scheme has two drawbacks. First, the distance from the root is always the same in a connected subgraph and the corresponding spanning tree chosen by Penrose's scheme. Second, the scheme works on a general simple graph with ordered vertices. This disregards information from the additional structure present in clusters induced by the polymer labels of the vertices (see figure~\ref{fig_anatomy}).\\

The tree-operator framework exerts only local control over family of trees it admits~\eqref{eq_multiplexing_localControl}. A tree-operator can not encode the Penrose scheme's global property of distance-preservation. We reformulate the original, static description of Penrose's scheme as a dynamic process: by exploring a graph from the root at unit speed into all directions, flood-filling the graph, the distance to the root turns from an a priori invariant into an emergent property of the partition scheme. We contrast those viewpoints in sections~\ref{sec_schemePenroseStatic} and~\ref{sec_schemeGreedy}.\\

This reformulation allows trade-offs. We may, based on local information, choose not to advance the exploration boundary everywhere (cf. figure~\ref{fig_exploration}). This departure from the greedy behaviour of the dynamic reformulation of Penrose's scheme loses the distance preservation, but allows to encode more local information about the graph in the singleton trees of the partition scheme. In short, we loose a useless global constraint and gain local constraints on the singleton trees encodable by a tree-operator. This leads to explorative partition schemes in section~\ref{sec_explorativePartitionSchemes}.\\

Section~\ref{sec_schemeReturning} presents the returning scheme, which is an explorative partition scheme adapted to the structure of a cluster. It alleviates both drawbacks of the Penrose scheme. The synthetic scheme underlying the mixing SCUB interpolates between the greedy and returning schemes. We discuss it in section~\ref{sec_schemeSynthetic}.

\subsection{The Penrose scheme}
\label{sec_schemePenroseStatic}
We recapitulate the original formulation of the partition scheme by Penrose~\cite[(6)]{Penrose__ConvergenceOfFugacityExpansionsForClassicalSystems__SMFA_1967}. Let $I$ be a finite and totally ordered set. Let $G:=(I,E)$ be a connected graph with root $o$. We only present the part selecting the spanning tree in a subgraph $H$ of $G$. Section~\ref{sec_schemeGreedy} describes going from a spanning tree to the maximal subgraph containing that tree in algorithm~\ref{alg_treeEdgeComplementPartitionPenrose} and the relevant properties of the singleton trees on a cluster in proposition~\ref{prop_singletonsPenroseOnCluster}.

\begin{Alg}\label{alg_penrose}
Let $H\in\SpanningSubgraphs{G}$. Let $d$ be the distance function on $H$ from $o$ to the vertices $I$. For each edge $(i,j)$ in $H$ with $d(i)\ge d(j)$, the following procedure decides between the selection or removal of $(i,j)$, independently of the decisions for the other edges of $H$:
\begin{description}
 \item[\StepPenCousins{}] If $d(i)=d(j)$, then \ActionRemove{} $(i,j)$.
 \item[\StepPenElders{}] For $i\in I\setminus\Set{o}$, let $j_i:=\min\Set{j': (i,j')\in\Edges{H}, d(j')=d(i)-1}$.
 \item[\StepPenParent{}] If $d(i)=d(j)+1$ and $j=j_i$, then \ActionSelect{} $(i,j)$.
 \item[\StepPenUncles{}] If $d(i)=d(j)+1$ and $j\not=j_i$, then \ActionRemove{} $(i,j)$.
\end{description}
\end{Alg}
The result of these parallel and independent decisions is the Penrose tree of $H$. Its edges are $\Set{(i,j_i): i\in I\setminus\Set{o}}$. The distance between a vertex $i$ and the root $o$ is the same in both $H$ and the Penrose tree.
\subsection{Explorative partition schemes}
\label{sec_explorativePartitionSchemes}
Section~\ref{sec_boundingTheEscapedPinnedSeries} shows how to bound the pinned series with the help of a tree-operator. The key is the local control of a tree~\eqref{eq_multiplexing_localControl}: more constraints on the tree allow larger values of $\vec{\rho}$ in~\eqref{eq_multiplexing_condition}. This is easier, if the partition scheme places mainly local constraints facing away from the root on its singleton trees. This section describes a family of such schemes on a high level.\\

Let $G:=(V,E)$ be a simple graph. A partition scheme is \emph{explorative}, if it is described by an exploration algorithm coupled with a compatibility condition. The algorithm selects a spanning tree from a spanning subgraph of $G$. It does so by an iterative exploration of the subgraph. The compatibility condition is a sufficient condition to uniquely reconstruct the maximal subgraph reducing to a given spanning tree from said spanning tree.\\

Given $H\in\SpanningSubgraphs{G}$ and a root $o$, the generic \emph{exploration algorithm} $\ExplorationAlgorithm{\SchemeGeneric}$~\ref{alg_explorationGeneric} selects a spanning tree of $H$ by starting at a root $o$ and growing the tree iteratively: during each iteration it considers all the nodes neighbouring its furthest level, selects at least one of those neighbours, adds each selected vertex and one of the edges leading to it to the tree and removes all superfluous edges, including those to ignored vertices. This prescription guarantees the selection of a spanning tree of $H$ (see proposition~\ref{prop_explorationGeneric}).\\

We discriminate between two types of information: the \emph{static information} is our knowledge of the structure of $G$, the choice of the root $o$ and all other information we may have about $G$, such as labellings or orderings of elements. The \emph{dynamic information} is the information we gather about $H$ during its exploration, including properties as connectedness of yet unexplored parts and which elements have already been selected or removed.\\

The complementary procedure takes a tree $\Tree\in\SpanningTrees{G}$ and partitions $E\setminus\Edges{\Tree}$ into the \emph{admissible edges} $\AdmissibleEdgesTree{\SchemeGeneric}$ and the \emph{conflicting edges} $\ConflictingEdgesTree{\SchemeGeneric}$. If this partition is compatible with the exploration algorithm~\ref{alg_explorationGeneric}, then the preimage of the $\ExplorationAlgorithm{\SchemeGeneric}$ defines a partition scheme $\SchemeGeneric$ (see proposition~\ref{prop_schemeGeneric}). The compatibility is satisfied, if all the dynamic information used in $\ExplorationAlgorithm{\SchemeGeneric}$ to select at vertex on level $n$ of the $\Tree$ from $H$ is a function of the first $n$ levels of $\Tree$ (and the static information).\\

In the rest of this section, we formalise and prove the correctness of the outlined idea, resulting in the generic explorative partition scheme $\SchemeGeneric$. The generic scheme $\SchemeGeneric$ contains some gaps, where it abstracts over a concrete choice of the particular step of the algorithm. We fill these gaps in $\SchemeGeneric$'s specification in later sections, resulting in different partition schemes.

\begin{Alg}[$\SchemeGeneric$ exploration]\label{alg_explorationGeneric}
Let $H\in\SpanningSubgraphs{G}$. We construct a sequence $(H_k)_{k\in\NatNumZero}$ of subgraphs of $H$ starting with $H_0:=H$ and a sequence $(T_k)_{k\in\NatNumZero}$ of subsets of $V$ starting with $T_0:=\Set{o}$. We think of $T_k$ as the \emph{explored tree} part of $H_k$.\\

We construct $H_{k+1}$ and $T_{k+1}$ from $H_k$ and $T_k$ as follows: Let the \emph{unexplored part} be $U_k:=V\setminus T_k$. Let the \emph{potential nodes} $P_k$ be the set of neighbours of $T_k$ in $U_k$ and the \emph{boundary} $B_k$ be the set of neighbours of $U_k$ in $T_k$. Apply the following steps in order:

\begin{description}
 \item[\StepGenBoundary{}] For every connected component $C$ of $H_k|_{U_k}$, \ActionSelect{} a subset $S_k$ of $P_k$ containing at least one vertex from $C\cap P_k$. Call these the \emph{selected nodes}. Set $T_{k+1}:=T_k\uplus S_k$.
 \item[\StepGenIgnored{}] Let the \emph{ignored nodes} be $I_k:=P_k\setminus S_k$. \ActionRemove{} all the edges in $\Edges{B_k,I_k}\cap\Edges{H_k}$.
 \item[\StepGenParent{}] For each $v\in S_k$, \ActionSelect{} $(v,w_v)\in\Edges{B_k,S_k}\cap\Edges{H_k}$.
 \item[\StepGenUncles{}] For each $v\in S_k$, \ActionRemove{} all $(v,w)\in\Edges{H_k}$ with $w_v\not=w\in B_k$.
 \item[\StepGenCousins{}] \ActionRemove{} all of $\Edges{S_k}\cap\Edges{H_k}$.
\end{description}
\end{Alg}

\begin{Rem}
The only gaps in algorithm~\ref{alg_explorationGeneric} are the selection of $S_k$ in \StepGenBoundary{} and the parent in \StepGenParent{}. The selections induce the removals.
\end{Rem}

\begin{Rem}
The steps in algorithm~\ref{alg_explorationGeneric} factorize over connected components of $H_k|_{U_k}$. The algorithm may explore each connected component independently of the progress of the exploration of the other connected components.
\end{Rem}

\begin{Rem}
Figure~\ref{fig_exploration} visualises a step of algorithm~\ref{alg_explorationGeneric}.
\end{Rem}

\begin{figure}
\begin{center}
\begin{tikzpicture}
  \newcommand{\arc}{25}
  \draw[rounded corners=20pt] (0,0) -- (\arc:8.5) arc (\arc:-\arc:8.5) -- cycle;
  \draw (\arc:1.5) arc (\arc:-\arc:1.5);
  \draw (\arc:2.5) arc (\arc:-\arc:2.5);
  \draw (\arc:3.5) arc (\arc:-\arc:3.5);
  \draw[dashed] (\arc-10:4.25) arc (\arc-10:-\arc+10:4.25);
  \draw (\arc:5) arc (\arc:-\arc:5);
  \draw (\arc:6) arc (\arc:-\arc:6);
  \draw (\arc:7) arc (\arc:-\arc:7);
  \draw (4:6) -- (4:7);

  \node[above] at (-\arc:1) {$T_0$};
  \node[above] at (-\arc:2) {$S_1$};
  \node[above] at (-\arc:3) {$S_2$};
  \node[above] at (-\arc:5.65) {$S_{k-1}$};
  \node[above] at (-\arc:6.65) {$S_k$};
  \node[below] at (\arc:6.65) {$I_k$};
  \node[fill=white, inner sep=0pt] at (3:6.5) {$P_k$};

  \drawnodebullet{t1}{-15:5.5}
  \drawnodebullet{t2}{-8:5.5}
  \drawnodebullet{t3}{-1:5.5}
  \drawnodebullet{t4}{6:5.5}
  \drawnodebullet{t5}{13:5.5}
  \drawnodebullet{t6}{20:5.5}

  \drawnodebullet{u1}{-18:6.5}
  \drawnodebullet{u2}{-10:6.5}
  \drawnodebullet{u3}{-2:6.5}
  \drawnodebullet{u4}{10:6.5}
  \drawnodebullet{u5}{18:6.5}

  \draw[dashed] (u1) -- (u2) -- (u3)
                (u2) -- (t3)
                (u2) -- (t2)
                (t3) -- (u4)
                (t4) -- (u3)
                (t4) -- (u4) -- (t5)
                (t6) -- (u5);
  \draw (t1) -- (u1)
        (t1) -- (u2)
        (t3) -- (u3)
        (u4) -- (u5);

  \draw[marked] (\arc+10:6.1)
                arc (\arc+10:-\arc-10:6.1);
  \draw[marked, rounded corners=3pt] (\arc+9:6.2)
                                     arc (\arc+9:5:6.2)
                                     -- (5:7.1)
                                     arc (5:-\arc-9:7.1);
  \draw[color=red] (-\arc-2:6.1) node[left] {$T_k$}
                   (-\arc-4:6.1) node[right] {$U_k$}
                   (-\arc-6:7.1) node[left] {$T_{k+1}$}
                   (-\arc-8:7.1) node[right] {$U_{k+1}$};
\end{tikzpicture}
\end{center}
\caption[The exploration algorithm]{The generic exploration algorithm~\ref{alg_explorationGeneric} constructing $H_{k+1}$ from $H_k$. The notation is from algorithm~\ref{alg_explorationGeneric}. The dotted lines show the exploration boundary between explored and unexplored parts of the graph at two consecutive steps of the exploration algorithm. The exploration boundary advances in at least one vertex of every connected component of $H_k|_{U_k}=G|_{U_k}$. At step $k$, edges between $T_k$ and $I_k$ are ignored, while the boundary advances to the vertices in $S_k$.}
\label{fig_exploration}
\end{figure}

\begin{Prop}
The following invariants hold in algorithm~\ref{alg_explorationGeneric}, $\ForAll k\in\NatNumZero$:
\begin{subequations}\label{eq_genericInvariants}
\begin{gather}
 \label{eq_geninv_subgraph}
 H_{k+1}\SubGraph H_k\\
 \label{eq_geninv_Tree}
 H_k|_{T_k}\text{ is a tree}\\
 \label{eq_geninv_Spanning}
 H_k\in\SpanningSubgraphs{G}\\
 \label{eq_geninv_stableTree}
 H_{k+1}|_{T_k}=H_k|_{T_k}\\
 \label{eq_geninv_stableRest}
 H_k|_{U_k}=H|_{U_k}\\
 \label{eq_geninv_affectedEdges}
 \left(\Edges{U_k}\uplus\Edges{U_k,B_k}\right)\cap\Edges{H_k}
 =\left(\Edges{U_k}\uplus\Edges{U_k,B_k}\right)\cap\Edges{H}\\
 \label{eq_geninv_rootDistance}
 \ForAll l\ge k:\quad
 v\in S_k\Iff\GraphMetricOf{H_l}{o}{v}=k\\
 \label{eq_geninv_BSRelation}
 B_{k+1}\subseteq S_k
\end{gather}
\end{subequations}
\end{Prop}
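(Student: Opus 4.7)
The plan is to verify all eight invariants simultaneously by induction on $k\in\NatNumZero$, using the fact that algorithm~\ref{alg_explorationGeneric} only ever removes edges, together with a careful classification of which endpoints those removed edges can have. The base case $k=0$ reduces to $H_0=H$ and $T_0=\Set{o}$, so \eqref{eq_geninv_Spanning}, \eqref{eq_geninv_stableRest}, and the trivial instance of \eqref{eq_geninv_Tree} are immediate, and the remaining invariants speak about the transition to $k=1$ and are covered by the inductive step.

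The core bookkeeping observation for the inductive step is that the three edge-removing steps act on disjoint types of edges. Step \StepGenIgnored{} removes edges in $\Edges{B_k,I_k}$ with one endpoint in $B_k\subseteq T_k$ and one in $I_k\subseteq U_{k+1}$; step \StepGenUncles{} removes edges with one endpoint in $S_k$ and one in $B_k$, hence both in $T_{k+1}$; step \StepGenCousins{} removes edges in $\Edges{S_k}$, both endpoints in $S_k\subseteq T_{k+1}$. This yields \eqref{eq_geninv_subgraph} and \eqref{eq_geninv_Spanning} at once, and together with the inductive hypothesis also \eqref{eq_geninv_stableRest} (no removed edge has both endpoints in $U_{k+1}$, since any edge with an $I_k$-endpoint also has a $B_k$-endpoint in $T_{k+1}$) and \eqref{eq_geninv_stableTree} (no removed edge has both endpoints in $T_k$, since the $S_k$ and $I_k$ sets are disjoint from $T_k$). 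The same case analysis, applied to edges between $B_{k+1}\subseteq S_k$ (to be established using \eqref{eq_geninv_BSRelation}) and $U_{k+1}$, gives \eqref{eq_geninv_affectedEdges} from its inductive version.

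For the tree invariant \eqref{eq_geninv_Tree}, the graph $H_{k+1}|_{T_{k+1}}$ is obtained from the tree $H_k|_{T_k}$ by adjoining each $v\in S_k$ via the unique surviving edge $(v,w_v)\in\Edges{B_k,\Set{v}}$, since \StepGenUncles{} removes all other $B_k$-edges incident to $v$ and \StepGenCousins{} deletes every $\Edges{S_k}$-edge; this is a leaf attachment and preserves treeness. Existence of the edge $(v,w_v)$ is guaranteed by $v\in S_k\subseteq P_k$ combined with \eqref{eq_geninv_affectedEdges} at step $k$, which ensures every $P_k$-vertex retains its original $B_k$-adjacencies in $H_k$. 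For \eqref{eq_geninv_BSRelation}, any vertex in $T_k\setminus B_k$ has no $U_k$-neighbor in $H_k$ by definition of $B_k$ and hence no $U_{k+1}$-neighbor in $H_{k+1}$, while vertices of $B_k$ lose their $I_k$-neighbors in \StepGenIgnored{} and have their $S_k$-neighbors swallowed into $T_{k+1}$, so every $T_{k+1}$-neighbor of $U_{k+1}$ must lie in $S_k$.

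The main obstacle is the distance invariant \eqref{eq_geninv_rootDistance}, because it quantifies over all later indices $l\ge k$. I would handle it last, combining the invariants established above: restricted to $T_l$, the graph $H_l$ is the tree built by sequentially attaching each $S_j$ ($j<l$) at $B_j\subseteq S_{j-1}$, so the tree-distance from $o$ to $v\in S_k$ in $H_l|_{T_l}$ matches the level index by induction. That this tree-distance is also the $H_l$-distance uses \eqref{eq_geninv_BSRelation} and \eqref{eq_geninv_stableRest}: any path in $H_l$ from $o$ to $v$ that ever leaves $T_l$ can exit and re-enter only through successive boundaries $B_j$, which by the inductive hypothesis and \eqref{eq_geninv_BSRelation} lie at strictly increasing levels, so no shortcut can arise. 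Conversely, a vertex realising the relevant distance must belong to $S_k$ by the same boundary-crossing argument.
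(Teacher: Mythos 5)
Your plan follows the same simultaneous induction as the paper, and the classification of removed edges (each deletion has an endpoint in $S_k$ or $I_k$, hence no edge internal to $T_k$ and no edge internal to $U_{k+1}$ is ever touched) does carry \eqref{eq_geninv_subgraph}, \eqref{eq_geninv_stableTree}, \eqref{eq_geninv_stableRest}, \eqref{eq_geninv_affectedEdges}, \eqref{eq_geninv_BSRelation}, and, with the leaf-attachment observation, \eqref{eq_geninv_Tree}. Your route to \eqref{eq_geninv_stableTree} via the classification is in fact more direct than the paper's.

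The gap is \eqref{eq_geninv_Spanning}. In this paper $\SpanningSubgraphs{G}$ is the set of \emph{connected} spanning subgraphs (otherwise \eqref{eq_partitionScheme} into intervals above spanning trees could not hold), so \eqref{eq_geninv_Spanning} is a connectivity claim, and it does not follow ``at once'' from the edge-removal classification. Indeed \StepGenIgnored{} deletes edges between $B_k\subseteq T_k$ and $I_k\subseteq U_{k+1}$, and with no further argument a vertex of $I_k$ whose $H_k$-neighbours all lie in $B_k$ could be severed from the rest. What blocks this — and what the paper invokes explicitly — is the clause in \StepGenBoundary{} that $S_k$ must meet every connected component $C$ of $H_k|_{U_k}$: any $v\in U_{k+1}$ then reaches some $w\in C\cap S_k$ by a path in $C$ whose interior lies in $U_{k+1}$, those edges survive by your own classification, and $w$ keeps its selected parent edge into $T_k$. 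Your plan nowhere uses this part of the algorithm, so connectivity is genuinely unaddressed; the same omission would also invalidate the hypothetically isolated $I_k$-vertex, since such a vertex is itself a component of $H_k|_{U_k}$ and is therefore forced into $S_k$. A smaller issue: your sketch for \eqref{eq_geninv_rootDistance} is too vague to check; the paper's version pins down the predecessor of $v$ on a geodesic of length $k+1$, places it in $S_{k-1}$ by the induction hypothesis, hence in $B_k$, and then uses that the only $U_k$-neighbours of $B_k$ surviving into $H_{k+1}$ lie in $S_k$; your ``successive boundaries'' remark would need to be firmed up into something of that kind.
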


\begin{Rem}
We may replace every occurrence of $\Edges{H_k}$ in algorithm~\ref{alg_explorationGeneric} by $\Edges{H}$. This follows from invariant~\eqref{eq_geninv_affectedEdges}, which asserts that, for each edge, there is exactly one iteration of the exploration algorithm during which it is either selected or removed.
\end{Rem}

\begin{proof}
Invariant~\eqref{eq_geninv_subgraph} is clear, as we obtain $H_{k+1}$ from $H_k$ by removing edges. Invariant~\eqref{eq_geninv_stableRest} follows from~\eqref{eq_geninv_affectedEdges}. Invariant~\eqref{eq_geninv_stableTree} follows from~\eqref{eq_geninv_Tree} and the subgraph relationship~\eqref{eq_geninv_subgraph}.\\

\eqref{eq_geninv_BSRelation}: \StepGenIgnored{} ensures that all connections in $H_{k+1}$ between $B_k$ and $U_{k+1}$ pass through $S_k$. By induction over $k$, these are all connections in $H_{k+1}$ between $T_k$ and $U_{k+1}$.\\

We prove the remaining invariants by simultaneous induction over $k$. They hold trivially for $k=0$.\\

\eqref{eq_geninv_Tree}: The subgraph $H_{k+1}|_{T_{k+1}}$ consists of $H_{k+1}|_{T_k}$, the vertices $S_k$ and the edge sets $\Edges{T_k,S_k}\cap\Edges{H_{k+1}}$ and $\Edges{S_k}\cap\Edges{H_{k+1}}$. By~\eqref{eq_geninv_stableTree} and~\eqref{eq_geninv_Tree},  $H_{k+1}|_{T_k}=H_k|_{T_k}$ is a tree. \StepGenCousins{} ensures that $\Edges{S_k}\cap\Edges{H_{k+1}}=\emptyset$. \StepGenParent{} and \StepGenUncles{} ensure that each $v\in S_k$ has a unique neighbour in $T_k$ in $H_{k+1}|_{T_{k+1}}$. Thus, $H_{k+1}|_{T_{k+1}}$ is a tree.\\

\eqref{eq_geninv_Spanning}: The tree $H_{k+1}|_{T_{k+1}}$~\eqref{eq_geninv_Tree} is connected. If $v\in U_{k+1}$, then \StepGenBoundary{} asserts that there exists a connected component $C$ of $H_k|_{U_k}$ such that $v\in C$. \StepGenBoundary{} also asserts that there exists a $w\in C\cap S_k$ such that: \StepGenParent{} asserts that there exists a $z\in T_k$ with $z\AdjacentTo w$ in $H_k$ and from~\eqref{eq_geninv_stableRest} it follows that $H_{k+1}|_{U_{k+1}}=H_k|_{U_{k+1}}$. Thus, $v\ConnectedTo w\ConnectedTo z\ConnectedTo o$ and $H_{k+1}$ is connected.\\

\eqref{eq_geninv_affectedEdges}: During iteration $k$, \StepGenIgnored{}, \StepGenUncles{} and \StepGenCousins{} only remove edges in $\Edges{B_k,P_k}$ or $\Edges{S_k}$. Thus, $\Edges{U_{k+1}}\cap\Edges{H_k}$ and, by~\eqref{eq_geninv_rootDistance}, $\Edges{B_{k+1},U_{k+1}}\cap\Edges{H_k}=\Edges{S_k,U_{k+1}}\cap\Edges{H_k}$ are subsets of $\Edges{H_{k+1}}$.\\

\eqref{eq_geninv_rootDistance}: If $v\in T_k$, then $\GraphMetricOf{H_{k+1}}{o}{v}=\GraphMetricOf{H_k}{o}{v}=k$, as $H_{k+1}|_{T_k}=H_k|_{T_k}$ by~\eqref{eq_geninv_stableTree}. It remains to show that
\begin{equation*}
 v\in S_k
 \Iff \GraphMetricOf{H_{k+1}}{o}{v}=k+1\,.
\end{equation*}
If $\GraphMetricOf{H_{k+1}}{o}{v}=k+1$, then $v\not\in T_k:=\biguplus_{l=0}^k B_k$ and there exists a $w$ with $(v,w)\in\Edges{H_{k+1}}$ and $\GraphMetricOf{H_{k+1}}{o}{w}=k$. Thus, $w\in B_k$. By \StepGenIgnored{} and \StepGenUncles{}, the only neighbours of $B_k$ in $U_k$ in the graph $H_{k+1}$ are those in $S_k$.\\

On the other hand, if $v\in S_k$, then $H_{k+1}|_{T_{k+1}}$ is a tree~\eqref{eq_geninv_Tree} and $v$ has a unique parent $w\in B_k\subseteq S_{k-1}$~\eqref{eq_geninv_BSRelation}. As $w$ is $v$'s only neighbour at distance $k$ from the root in $H_{k+1}$~\eqref{eq_geninv_rootDistance}, we have $\GraphMetricOf{H_{k+1}}{o}{v}=\GraphMetricOf{H_{k+1}}{o}{w}+1=k+1$.
\end{proof}

\begin{Prop}\label{prop_explorationGeneric}
Algorithm~\ref{alg_explorationGeneric} describes a function $\ExplorationAlgorithm{\SchemeGeneric}:\SpanningSubgraphs{G}\to\SpanningTrees{G}$. It is monotone decreasing: $\ExplorationAlgorithm{\SchemeGeneric}(H)\SubGraph H$, for each $H\in\SpanningSubgraphs{G}$.
\end{Prop}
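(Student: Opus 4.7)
The plan is to identify $\ExplorationAlgorithm{\SchemeGeneric}(H)$ as the stable value of the sequence $(H_k)$, verify that this stable value is reached in finitely many steps, confirm it is a spanning tree of $G$, and read off monotonicity.

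For termination, I would use invariant~\eqref{eq_geninv_Spanning}: each $H_k$ is a connected spanning subgraph of $G$. Hence, whenever $U_k \neq \emptyset$, connectedness of $H_k$ forces at least one edge between $T_k$ and $U_k$, so $P_k \neq \emptyset$, and step \StepGenBoundary{} mandates at least one new selected vertex in each connected component of $H_k|_{U_k}$, producing $|T_{k+1}| > |T_k|$. Since $V$ is finite, there is a least $N \leq |V|-1$ with $T_N = V$; from then on $U_k = P_k = S_k = B_k = \emptyset$, every step of algorithm~\ref{alg_explorationGeneric} is vacuous, and the sequence stabilises at $H_\infty := H_N$. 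Because $U_N = \emptyset$, invariant~\eqref{eq_geninv_affectedEdges} forces $\Edges{H_N} \subseteq \Edges{T_N}$, so $H_N = H_N|_{T_N}$; invariant~\eqref{eq_geninv_Tree} then identifies $H_\infty$ as a tree on $V$, i.e.\ $H_\infty \in \SpanningTrees{G}$. Monotonicity is immediate by telescoping~\eqref{eq_geninv_subgraph}: $H_\infty = H_N \SubGraph H_{N-1} \SubGraph \dotsb \SubGraph H_0 = H$.

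The main conceptual obstacle is well-definedness as a function: algorithm~\ref{alg_explorationGeneric} leaves the subset $S_k$ in \StepGenBoundary{} and the parent $w_v$ in \StepGenParent{} unspecified. The proposition reads the algorithm as a template, so to obtain a genuine map $\ExplorationAlgorithm{\SchemeGeneric}:\SpanningSubgraphs{G}\to\SpanningTrees{G}$ one must prescribe deterministic selection rules depending only on the static information on $G$ and the dynamic information available in $H_k$ up through iteration $k$. Once such rules are fixed, the preceding argument yields a well-defined function with image in $\SpanningTrees{G}$ and the claimed monotonicity, and no further work beyond combining the invariants is required.
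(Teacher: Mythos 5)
Your proof is correct and follows essentially the same route as the paper: invoke the generic invariants to show that $(T_k)$ grows strictly while $U_k\neq\emptyset$, conclude termination at $T_N=V$, identify $H_N=H_N|_{T_N}$ as a spanning tree via \eqref{eq_geninv_Tree} and \eqref{eq_geninv_Spanning}, and read off monotonicity from \eqref{eq_geninv_subgraph}. The only cosmetic difference is that your appeal to \eqref{eq_geninv_affectedEdges} is superfluous (once $T_N=V$, the equality $H_N=H_N|_{T_N}$ is trivial), and your explicit remark about the algorithm being a template whose gaps must be filled by deterministic rules is a valid reading that the paper handles via the surrounding prose and Remark rather than inside the proof.
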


\begin{proof}
Let $H\in\SpanningSubgraphs{G}$ and apply algorithm~\ref{alg_explorationGeneric} to it. If $T_k=V$, then $H_k=H_{k+1}$. As long as $T_k\not=V$, $H_k\in\SpanningSubgraphs{G}$ by~\eqref{eq_geninv_subgraph} and we have $S_k\not=\emptyset$. Therefore, $(T_k)_{k\in\NatNumZero}$ grows strictly monotone to and then stabilises in $V$ in at most $\Cardinality{V}$ steps. Thus, the monotone decreasing~\eqref{eq_geninv_subgraph} sequence $(H_k)_{k\in\NatNumZero}$ of subgraphs of $H$ stabilises in $H_{\Cardinality{V}}$. Finally,~\eqref{eq_geninv_Tree} asserts that $H_{\Cardinality{V}}=H_{\Cardinality{V}}|_{T_\Cardinality{V}}$ is a tree and~\eqref{eq_geninv_Spanning} that $H_{\Cardinality{V}}\in\SpanningSubgraphs{H}$. Hence, $H_{\Cardinality{V}}\SubGraph H$.
\end{proof}

\begin{Prop}\label{prop_schemeGeneric}
Let $\ExplorationAlgorithm{\SchemeGeneric}$ be as in proposition~\ref{prop_explorationGeneric}. For each $\Tree\in\SpanningTrees{G}$, partition $\Edges{\Tree}$ into two sets: the admissible edges $\AdmissibleEdgesTree{\SchemeGeneric}$ and the conflicting edges $\ConflictingEdgesTree{\SchemeGeneric}$. If, for each $H\in\ExplorationAlgorithm{\SchemeGeneric}^{-1}(\Tree)$, we have
\begin{subequations}\label{eq_treeEdgeComplementPartitionCompatibility}
\begin{equation}\label{eq_adm_gen}
 \ForAll e\in\AdmissibleEdgesTree{\SchemeGeneric}\setminus\Edges{H}:
 \quad
 \ExplorationAlgorithm{\SchemeGeneric}(V,\Edges{H}\uplus\Set{e})=\Tree
\end{equation}
and
\begin{equation}\label{eq_conf_gen}
 \ForAll \emptyset\not=C\subseteq\ConflictingEdgesTree{\SchemeGeneric}\setminus\Edges{H}:
 \quad
 \ExplorationAlgorithm{\SchemeGeneric}(V,\Edges{H}\uplus C)\not=\Tree\,,
\end{equation}
\end{subequations}
then the map
\begin{equation}\label{eq_schemeGeneric}
 \SchemeGeneric:
 \quad
 \SpanningTrees{G}\to\SpanningSubgraphs{G}
 \quad
 \Tree\mapsto\SchemeGeneric(\Tree)
 :=(V,\Edges{\Tree}\uplus\AdmissibleEdgesTree{\SchemeGeneric})
\end{equation}
is a partition scheme of $G$ with $[\Tree,\SchemeGeneric(\Tree)]=\ExplorationAlgorithm{\SchemeGeneric}^{-1}(\Tree)$.
\end{Prop}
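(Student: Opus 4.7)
The plan is to show, for every $\Tree\in\SpanningTrees{G}$, that the preimage $\ExplorationAlgorithm{\SchemeGeneric}^{-1}(\Tree)$ coincides with the interval $[\Tree,\SchemeGeneric(\Tree)]$. Since proposition~\ref{prop_explorationGeneric} guarantees that $\ExplorationAlgorithm{\SchemeGeneric}$ is a well-defined function on $\SpanningSubgraphs{G}$, its fibres automatically form a partition of $\SpanningSubgraphs{G}$; identifying each fibre with the corresponding interval therefore simultaneously establishes~\eqref{eq_partitionScheme} for $\SchemeGeneric$ and the additional identity asserted in the proposition.

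First I would check that $\Tree\in\ExplorationAlgorithm{\SchemeGeneric}^{-1}(\Tree)$. Monotonicity of $\ExplorationAlgorithm{\SchemeGeneric}$ gives $\ExplorationAlgorithm{\SchemeGeneric}(\Tree)\SubGraph\Tree$, and since both sides are spanning trees of $G$ they carry $\Cardinality{V}-1$ edges and must coincide.

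Next I would prove $[\Tree,\SchemeGeneric(\Tree)]\subseteq\ExplorationAlgorithm{\SchemeGeneric}^{-1}(\Tree)$ by induction on $\Cardinality{\Edges{H}\setminus\Edges{\Tree}}$ for $H\in[\Tree,\SchemeGeneric(\Tree)]$. The base case $H=\Tree$ was just handled. For the induction step, pick any $e\in\Edges{H}\setminus\Edges{\Tree}$; since $H\SubGraph\SchemeGeneric(\Tree)$ and $\Edges{\SchemeGeneric(\Tree)}=\Edges{\Tree}\uplus\AdmissibleEdgesTree{\SchemeGeneric}$, necessarily $e\in\AdmissibleEdgesTree{\SchemeGeneric}$. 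The graph $H':=(V,\Edges{H}\setminus\Set{e})$ still lies in $[\Tree,\SchemeGeneric(\Tree)]$ with one fewer non-tree edge, so by the induction hypothesis $H'\in\ExplorationAlgorithm{\SchemeGeneric}^{-1}(\Tree)$, and then~\eqref{eq_adm_gen} applied to $H'$ and $e$ yields $\ExplorationAlgorithm{\SchemeGeneric}(H)=\Tree$.

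Finally, for the reverse inclusion take $H\in\ExplorationAlgorithm{\SchemeGeneric}^{-1}(\Tree)$. Monotonicity already gives $\Tree\SubGraph H$. Set $C:=\Edges{H}\cap\ConflictingEdgesTree{\SchemeGeneric}$ and $H_0:=(V,\Edges{H}\setminus C)$; then $\Tree\SubGraph H_0\SubGraph\SchemeGeneric(\Tree)$, so by the previous paragraph $H_0\in\ExplorationAlgorithm{\SchemeGeneric}^{-1}(\Tree)$. If $C\neq\emptyset$, then~\eqref{eq_conf_gen} applied to $H_0$ with this $C\subseteq\ConflictingEdgesTree{\SchemeGeneric}\setminus\Edges{H_0}$ produces $\ExplorationAlgorithm{\SchemeGeneric}(H)=\ExplorationAlgorithm{\SchemeGeneric}(V,\Edges{H_0}\uplus C)\neq\Tree$, contradicting the choice of $H$. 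Hence $C=\emptyset$ and $H=H_0\in[\Tree,\SchemeGeneric(\Tree)]$.

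The only delicate point is the bootstrap in the last paragraph: invoking~\eqref{eq_conf_gen} requires the starting graph $H_0$ to belong to $\ExplorationAlgorithm{\SchemeGeneric}^{-1}(\Tree)$, which is exactly what the preceding induction along admissible edges supplies. Everything else is a routine unpacking of the monotonicity of $\ExplorationAlgorithm{\SchemeGeneric}$ and of the partition $\Edges{G}\setminus\Edges{\Tree}=\AdmissibleEdgesTree{\SchemeGeneric}\uplus\ConflictingEdgesTree{\SchemeGeneric}$.
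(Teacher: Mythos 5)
Your proof is correct and follows essentially the same route as the paper's: both arguments establish $\ExplorationAlgorithm{\SchemeGeneric}^{-1}(\Tree)=[\Tree,\SchemeGeneric(\Tree)]$ by an induction over admissible non-tree edges using~\eqref{eq_adm_gen}, then rule out conflicting edges by feeding the resulting preimage membership into~\eqref{eq_conf_gen}. The only cosmetic difference is that the paper parametrises via $H_{(A,C)}$ rather than splitting into two set inclusions, and you make explicit the observation (implicit in the paper) that $\ExplorationAlgorithm{\SchemeGeneric}(\Tree)=\Tree$ follows from monotonicity plus edge-counting.
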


\begin{Rem}
In our specific partition schemes, the exploration is so local, that also~\eqref{eq_conf_gen} only needs to be checked for single edges.
\end{Rem}

\begin{proof}
Fix $\Tree\in\SpanningTrees{G}$. For $A\subseteq\AdmissibleEdgesTree{\SchemeReturning}$ and $C\subseteq\ConflictingEdgesTree{\SchemeReturning}$, let $H_{(A,C)}:=(V,\Edges{\Tree}\uplus A\uplus C)$. We show that $H_{(A,C)}\in[\Tree,\SchemeGeneric(\Tree)]$, iff $C=\emptyset$.\\

Case $C=\emptyset$: We argue by induction over the cardinality of $A$. For the induction base with $A=\emptyset$ we have $H_{(\emptyset,\emptyset)}=\Tree$ and $\ExplorationAlgorithm{\SchemeGeneric}(\Tree)=\Tree$. For the induction step consider $A:=A'\uplus\Set{e}$. By the induction hypothesis, $\ExplorationAlgorithm{\SchemeGeneric}(H_{(A',\emptyset)})=\Tree$. As $e\in\AdmissibleEdgesTree{\SchemeGeneric}$, we apply~\eqref{eq_adm_gen} to see that $\ExplorationAlgorithm{\SchemeGeneric}(H_{(A,\emptyset)})=\Tree$, too.\\

Case $C\not=\emptyset$: We know that $\ExplorationAlgorithm{\SchemeGeneric}(H_{(A,\emptyset)})=\Tree$. Therefore,~\eqref{eq_conf_gen} implies that $\ExplorationAlgorithm{\Tree}(H_{(A,C)})\not=\Tree$.
\end{proof}
\subsection{The greedy scheme}
\label{sec_schemeGreedy}
The greedy scheme is a reformulation of Penrose's scheme in algorithm~\ref{alg_penrose} as an explorative partition scheme. This reformulation yields a different viewpoint and initiated the development of explorative partition schemes. Let $I$ be a finite and totally ordered set. Let $G:=(I,E)$ be a connected graph. The static information comprises the total order on $I$, the structure of $G$ and the choice of the root $o\in I$.\\

We call this partition scheme \emph{greedy} to emphasise the selection $S_k$ in \StepGreedyBoundary{}. Algorithm~\ref{alg_explorationGreedy} \emph{flood-fills} $H$. It incorporates a parallel version of \emph{Dijkstra's single-source shortest path algorithm}~\cite{Dijkstra__ANoteOnTwoProblemsInConnexionWithGraphs__NM_1959},\cite[page 145]{Korte_Vygen__CombinatorialOptimization_3rd__Springer_2006} on a graph with unit edge weights.

\begin{Alg}[$\SchemePenrose$ exploration]\label{alg_explorationGreedy}
Let $H\in\SpanningSubgraphs{G}$. For every $k$, let $H_k$, $T_k$, $U_k$, $B_k$ and $P_k$ be as in algorithm~\ref{alg_explorationGeneric}. The steps to construct $H_{k+1}$ from $H_k$ on a connected component $C$ of $H_k|_{U_k}$ are:
\begin{description}
 \item[\StepGreedyBoundary{}] \ActionSelect{} $C\cap S_k:=C\cap P_k$.
 \item[\StepGreedyIgnored{}] As $C\cap I_k=\emptyset$, \ActionRemove{} nothing.
 \item[\StepGreedyParent{}]  For each $i\in C\cap S_k$, let $j_i:=\min\Set{j\in B_k: (i,j)\in \Edges{H_k}}$. \ActionSelect{} $(i,j_i)$.
 \item[\StepGreedyUncles{}] For each $i\in C\cap S_k$, \ActionRemove{} all $(i,j)\in\Edges{H_k}$ with $j_i\not=j\in B_k$.
 \item[\StepGreedyCousins{}] \ActionRemove{} all of $\Edges{C\cap S_k}\cap\Edges{H_k}$.
\end{description}
\end{Alg}

\begin{Prop}\label{prop_explorationGreedy}
The function $\ExplorationAlgorithm{\SchemePenrose}$ described by algorithm~\ref{alg_explorationGreedy} is $\SpanningSubgraphs{G}\to\SpanningTrees{G}$ and monotone decreasing, that is $\ForAll H\in\SpanningSubgraphs{G}:\ExplorationAlgorithm{\SchemePenrose}(H)\SubGraph H$.
\end{Prop}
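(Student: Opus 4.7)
The plan is to exhibit algorithm~\ref{alg_explorationGreedy} as a legitimate instantiation of the generic exploration algorithm~\ref{alg_explorationGeneric}, and then invoke proposition~\ref{prop_explorationGeneric} as a black box. The two gaps in the generic algorithm are the choice of the selected set $S_k$ in \StepGenBoundary{} and the choice of the parent edge in \StepGenParent{}; once these are fixed compatibly, the removals in the remaining steps are dictated. So the entire content of the proof is to check the compatibility of the greedy specialisations with the generic framework on each connected component $C$ of $H_k|_{U_k}$.

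For \StepGenBoundary{}, the generic prescription demands that $S_k \cap C$ contain at least one vertex of $C \cap P_k$. The greedy choice $C \cap S_k := C \cap P_k$ is the maximal possible such selection, hence trivially satisfies this constraint; as a consequence $C \cap I_k = \emptyset$, which is exactly why \StepGreedyIgnored{} removes nothing and is consistent with \StepGenIgnored{}. For \StepGenParent{}, we need, for each $v \in S_k$, a single neighbour $w_v \in B_k$ with $(v,w_v) \in \Edges{H_k}$; the set of such neighbours is nonempty because $v \in P_k$ by definition, so picking the minimum $j_i := \min\{j \in B_k : (i,j) \in \Edges{H_k}\}$ under the given total order on $I$ is well-defined and is a valid instance. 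Steps \StepGreedyUncles{} and \StepGreedyCousins{} then coincide with \StepGenUncles{} and \StepGenCousins{} by direct inspection.

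With these identifications, algorithm~\ref{alg_explorationGreedy} is a particular run of algorithm~\ref{alg_explorationGeneric}, so all the invariants~\eqref{eq_genericInvariants} hold. In particular the terminating graph $H_{\Cardinality{I}}$ is a spanning tree of $G$ by~\eqref{eq_geninv_Tree} and~\eqref{eq_geninv_Spanning}, and $H_{\Cardinality{I}} \SubGraph H$ by iterated application of~\eqref{eq_geninv_subgraph}. Hence $\ExplorationAlgorithm{\SchemePenrose}: \SpanningSubgraphs{G} \to \SpanningTrees{G}$ is well-defined and monotone decreasing, which is the statement of the proposition.

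There is no real obstacle here; the only thing that requires a moment's attention is verifying that $j_i$ is well-defined (which uses $v \in P_k$, i.e.\ $v$ has at least one neighbour in $B_k$ in $H_k$) and that the greedy $S_k$ is nonempty on each component $C$ (which uses that $G$, and therefore each $H_k$, is connected so $C \cap P_k \neq \emptyset$ whenever $C \neq \emptyset$). Both are immediate from the setup.
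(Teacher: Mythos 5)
Your proof is correct and takes exactly the same route as the paper, whose entire proof is the single line ``Follows from proposition~\ref{prop_explorationGeneric}.'' You simply spell out the instantiation checks — that the greedy choices in \StepGreedyBoundary{} and \StepGreedyParent{} fill the two gaps in algorithm~\ref{alg_explorationGeneric} legitimately — which the paper leaves implicit.
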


\begin{proof}
Follows from proposition~\ref{prop_explorationGeneric}.
\end{proof}

\begin{Alg}[$\SchemePenrose$ tree edge complement partition]
\label{alg_treeEdgeComplementPartitionPenrose}
Let $\Tree\in\SpanningTrees{G}$. Let $L_k$ be the $k^{th}$ level of $\Tree$. We partition $E\setminus\Edges{\Tree}$ into $\AdmissibleEdgesTree{\SchemePenrose}\uplus\ConflictingEdgesTree{\SchemePenrose}$. Let $0\le k\le l$, $j\in L_k$, $i\in L_l$ and $e:=(i,j)\in E\setminus\Edges{\Tree}$. We let $e\in\ConflictingEdgesTree{\SchemePenrose}$, if
\begin{subequations}\label{eq_conf_pen}
\begin{align}
 &\label{eq_conf_pen_diffTwo}
 l\ge k+2
 \,,\\
 &\label{eq_conf_pen_uncle}
 l=k+1
  \text{ and }
 j<\Parent{i}
 \,,
\end{align}
\end{subequations}
and we let $e\in\AdmissibleEdgesTree{\SchemePenrose}$, if
\begin{subequations}\label{eq_adm_pen}
\begin{align}
 &\label{eq_adm_pen_sameLevel}
 l=k
 \,,\\
 &\label{eq_adm_pen_uncle}
 l=k+1
  \text{ and }
 j>\Parent{i}
 \,.
\end{align}
\end{subequations}
\end{Alg}

\begin{Prop}\label{prop_schemePenrose}
The map
\begin{equation}\label{eq_schemePenrose}
 \SchemePenrose:
 \quad
 \SpanningTrees{G}\to\SpanningSubgraphs{G}
 \quad
 \Tree\mapsto\SchemePenrose(\Tree)
 :=(I,\Edges{\Tree}\uplus\AdmissibleEdgesTree{\SchemePenrose})
\end{equation}
is a partition scheme of $G$ with $[\Tree,\SchemePenrose(\Tree)]=\ExplorationAlgorithm{\SchemePenrose}^{-1}(\Tree)$.
\end{Prop}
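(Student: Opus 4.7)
The plan is to apply proposition~\ref{prop_schemeGeneric} to the partition $E\setminus\Edges{\Tree}=\AdmissibleEdgesTree{\SchemePenrose}\uplus\ConflictingEdgesTree{\SchemePenrose}$ given by algorithm~\ref{alg_treeEdgeComplementPartitionPenrose}. Fixing $\Tree\in\SpanningTrees{G}$ and $H\in\ExplorationAlgorithm{\SchemePenrose}^{-1}(\Tree)$, I need to check the compatibility conditions~\eqref{eq_adm_gen} and~\eqref{eq_conf_gen}. By the remark following proposition~\ref{prop_schemeGeneric}, it suffices to verify~\eqref{eq_conf_gen} on single edges. The key auxiliary fact, used throughout, is that by invariant~\eqref{eq_geninv_rootDistance} together with BFS-monotonicity the level $\NodeLevel{v}$ in $\Tree$ equals $\GraphMetricOf{H}{o}{v}$; hence the case split in~\eqref{eq_conf_pen}--\eqref{eq_adm_pen} is intrinsic to $(\Tree,H)$.

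For an admissible edge $e=(i,j)\in\AdmissibleEdgesTree{\SchemePenrose}\setminus\Edges{H}$, set $H^+:=(I,\Edges{H}\uplus\Set{e})$. In case~\eqref{eq_adm_pen_sameLevel} both endpoints sit at the same level $k$, so the BFS level decomposition of $H^+$ coincides with that of $H$; then $i,j\in S_k$ together, $e\in\Edges{S_k}\cap\Edges{H^+}$, and \StepGreedyCousins{} discards $e$. In case~\eqref{eq_adm_pen_uncle} the BFS levels are likewise unchanged, and at stage $k$ of the exploration of $H^+$ the set $\Set{j'\in B_k:(i,j')\in\Edges{H^+}}$ differs from the corresponding set for $H$ only by the extra element $j>\Parent{i}$; thus \StepGreedyParent{} still selects $\Parent{i}$ as the parent of $i$, and \StepGreedyUncles{} discards $e$. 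In both cases the full exploration sequence of $H^+$ agrees with that of $H$ after $e$ is discarded, giving $\ExplorationAlgorithm{\SchemePenrose}(H^+)=\Tree$ and hence~\eqref{eq_adm_gen}.

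For a single conflicting edge $e=(i,j)\in\ConflictingEdgesTree{\SchemePenrose}\setminus\Edges{H}$, I show that $\ExplorationAlgorithm{\SchemePenrose}(H^+)\neq\Tree$. In case~\eqref{eq_conf_pen_diffTwo} with $\NodeLevel{j}=k$ and $\NodeLevel{i}=l\geq k+2$, the added edge yields $\GraphMetricOf{H^+}{o}{i}\leq k+1<l$, so the new exploration places $i$ in $S_{k+1}$ with a parent at level $k$, contradicting $i\in S_l$ in the exploration of $H$. In case~\eqref{eq_conf_pen_uncle} with $j<\Parent{i}$, BFS levels are preserved, but at stage $k$ the candidate set $\Set{j'\in B_k:(i,j')\in\Edges{H^+}}$ now contains $j$; consequently \StepGreedyParent{} chooses a parent $j_i\leq j<\Parent{i}$, which differs from $\Parent{i}$ in $\Tree$. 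Either way the resulting tree is not $\Tree$, which yields~\eqref{eq_conf_gen} on singletons.

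The main obstacle is the inductive bookkeeping that the two exploration sequences $(H_k)_k$ and $(H_k^+)_k$ truly agree stage-by-stage in the admissible case: this requires checking that $e$ is processed at exactly the iteration indexed by its endpoint levels, and that neither \StepGreedyParent{} nor the level decomposition shifts as a result of the added edge. The level identification $\NodeLevel{\cdot}=\GraphMetricOf{H^+}{o}{\cdot}$ supplies the synchronization. Once this is in place, proposition~\ref{prop_schemeGeneric} directly yields that $\SchemePenrose$ is a partition scheme and that $[\Tree,\SchemePenrose(\Tree)]=\ExplorationAlgorithm{\SchemePenrose}^{-1}(\Tree)$.
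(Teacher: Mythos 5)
Your proposal follows the same route as the paper: reduce to proposition~\ref{prop_schemeGeneric} via the compatibility conditions~\eqref{eq_adm_gen} and~\eqref{eq_conf_gen}, then check each case of algorithm~\ref{alg_treeEdgeComplementPartitionPenrose}. Your case analysis is correct and matches the paper's case-by-case verification in the appendix (same-level edges killed by \StepGreedyCousins{}, large-uncle edges killed by \StepGreedyUncles{}, the distance drop in case~\eqref{eq_conf_pen_diffTwo} and the parent switch in case~\eqref{eq_conf_pen_uncle}).

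Where you differ is in how much you formalise the synchronization. The paper proves a quantitative ``no-influence'' claim~\eqref{eq_penroseNoInfluence}: if every added edge $e$ has influence level $m(e)\ge N$, then the two exploration sequences agree on the first $N$ levels. That claim is then combined with choosing a conflicting edge of minimal influence level to handle arbitrary non-empty conflicting sets $C$ in~\eqref{eq_conf_gen}, not just singletons. You instead cite the Remark that in these schemes single-edge checks suffice, and for~\eqref{eq_adm_gen} you assert that the sequences ``truly agree stage-by-stage'' without proving the induction. Since the Remark is stated in the paper, relying on it is not a logical error, but the paper's appendix proof does carry out the no-influence induction explicitly rather than invoking the Remark. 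If you want a self-contained proof you should add that induction (your observation that the level identification $d_{H^+}(o,\cdot)=d_H(o,\cdot)$ is preserved under adding admissible edges is exactly the right engine for it); as written, that step is sketched rather than proven.
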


\AlsoArxiv{The proof of proposition~\ref{prop_schemePenrose} is in appendix~\ref{sec_schemeGreedyProof}.} We specialise to the case of $G:=\Cluster{\vec{\PolymerOther}}$ being the cluster induced by $\vec{\PolymerOther}\in\PolymerSet^I$. This increases the static information about $G$.

\begin{Prop}[Properties of $\SingletonTrees{\SchemePenrose}(\Cluster{\vec{\PolymerOther}})$]
\label{prop_singletonsPenroseOnCluster}
Let $\Tree\in\SingletonTrees{\SchemePenrose}(\Cluster{\vec{\PolymerOther}})$ and let $C_i$ be the set of children of $i$ in $\Tree$. Then
\begin{subequations}\label{eq_singletonsPenroseOnCluster}
\begin{gather}
 \label{eq_singletonsPenroseOnClusterCompatibilityLevel}
 \ForAll k\in\NatNumZero:\quad
 \Support\vec{\PolymerOther}_{L_k}\text{ is a compatible subset of }\PolymerSet\\
 \label{eq_singletonsPenroseOnClusterCompatibilityChildren}
 \Support\vec{\PolymerOther}_{C_i}
 \text{ is a compatible subset of }\IncompatiblePolymers{\PolymerOther_i}\\
 \label{eq_singletonsPenroseOnClusterCardinality}
 \Cardinality{C_i}
 =\Cardinality{\Support\vec{\PolymerOther}_{C_i}}\,.
\end{gather}
\end{subequations}
\end{Prop}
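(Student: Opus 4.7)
The plan is to exploit the defining property of a singleton tree $\Tree\in\SingletonTrees{\SchemePenrose}(\Cluster{\vec{\PolymerOther}})$, namely that $\AdmissibleEdgesTree{\SchemePenrose}=\emptyset$, together with the reflexivity of $\IncompatibleTo$, which automatically upgrades ``pairwise compatibility'' to ``distinctness of labels''. The only part of algorithm~\ref{alg_treeEdgeComplementPartitionPenrose} that enters is the same-level clause~\eqref{eq_adm_pen_sameLevel}; the level-raising admissibility clause~\eqref{eq_adm_pen_uncle} and the conflicting-edge clauses~\eqref{eq_conf_pen} are irrelevant here because the proposition only concerns horizontal and direct parent--child relationships within $\Tree$.

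First I would establish~\eqref{eq_singletonsPenroseOnClusterCompatibilityLevel}. Fix distinct $i,j\in L_k$ and suppose, for contradiction, that $\PolymerOther_i\IncompatibleTo\PolymerOther_j$. By the definition of the induced cluster~\eqref{eq_cluster}, $(i,j)$ is then an edge of $\Cluster{\vec{\PolymerOther}}$. Since $\Tree$ is a rooted tree with a well-defined level function, two distinct same-level vertices cannot be tree-adjacent, so $(i,j)\notin\Edges{\Tree}$. Clause~\eqref{eq_adm_pen_sameLevel} therefore places $(i,j)$ in $\AdmissibleEdgesTree{\SchemePenrose}$, contradicting $\Tree$'s singleton property. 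Hence any two distinct vertices on the same level carry compatible polymer labels; by reflexivity of $\IncompatibleTo$, the labels themselves are distinct, so $\Support\vec{\PolymerOther}_{L_k}$ is a compatible subset of $\PolymerSet$.

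Claim~\eqref{eq_singletonsPenroseOnClusterCompatibilityChildren} follows by specialising: the set $C_i$ is contained in $L_{\NodeLevel{i}+1}$, so~\eqref{eq_singletonsPenroseOnClusterCompatibilityLevel} already yields pairwise compatibility of the children's labels, while each tree-edge $(i,j)$ with $j\in C_i$ is also a cluster-edge, forcing $\PolymerOther_j\IncompatibleTo\PolymerOther_i$ and hence $\PolymerOther_j\in\IncompatiblePolymers{\PolymerOther_i}$. Claim~\eqref{eq_singletonsPenroseOnClusterCardinality} is then immediate: pairwise compatibility of the children's labels combined with reflexivity of $\IncompatibleTo$ rules out repetitions, so the labelling map $j\mapsto\PolymerOther_j$ restricted to $C_i$ is injective.

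Since every step reduces to a one-line application of~\eqref{eq_adm_pen_sameLevel} or to reflexivity, there is no genuine analytical obstacle. The only conceptual point worth flagging explicitly is the role of reflexivity: it is precisely what ties the structural same-level admissibility rule of the greedy scheme to a counting statement about polymer labels; without it, the cardinality identity~\eqref{eq_singletonsPenroseOnClusterCardinality} would need a separate argument.
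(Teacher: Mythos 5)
Your proof is correct and follows the same route as the paper's: an incompatible pair of labels on a common level of $\Tree$ would, via~\eqref{eq_cluster}, correspond to an edge of $\Cluster{\vec{\PolymerOther}}$ that clause~\eqref{eq_adm_pen_sameLevel} forces into $\AdmissibleEdgesTree{\SchemePenrose}$, contradicting $\SchemePenrose(\Tree)=\Tree$; the other two claims then follow since $C_i\subseteq L_{\NodeLevel{i}+1}$ and each tree-edge to a child is itself a cluster-edge. Your explicit remark about reflexivity of $\IncompatibleTo$ being the mechanism that upgrades ``pairwise compatible'' to ``distinct labels'' (and hence yields~\eqref{eq_singletonsPenroseOnClusterCardinality}) is a worthwhile clarification of a step the paper states as a terse chain of implications, but it does not constitute a different approach.
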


\begin{proof}
Fix $k$. If $i,j\in L_k$ with $\PolymerOther_i\IncompatibleTo\PolymerOther_j$, then $e:=(i,j)\in E$, as $G=\Cluster{\vec{\PolymerOther}}$ is the cluster of $\vec{\PolymerOther}$. Hence, by~\eqref{eq_adm_pen_sameLevel} $e\in\Edges{\SchemePenrose(\Tree)}$ and $\SchemePenrose(\Tree)\not=\Tree$. This shows~\eqref{eq_singletonsPenroseOnClusterCompatibilityLevel}, which implies~\eqref{eq_singletonsPenroseOnClusterCompatibilityChildren}, which in turn implies~\eqref{eq_singletonsPenroseOnClusterCardinality}.
\end{proof}
\subsection{The returning scheme}
\label{sec_schemeReturning}
The returning scheme $\SchemeReturning$ is an explorative partition scheme  adapted to clusters. It transfers ideas from the Dobrushin-style inductive polymer level proof of the reduced SCUB to the cluster expansion setting. Let $I$ be a finite and totally ordered set. Let $\vec{\PolymerOther}\in\PolymerSet^I$ and $G:=\Cluster{\vec{\PolymerOther}}$. Assume that $G$ is connected. The static information comprises the total order on $I$, the cluster structure of $G$ given by $\vec{\PolymerOther}$ and the choice of the root $o\in I$.\\

The key idea behind the returning scheme is the following insight: The exploration algorithm should select those edges, which we want the singleton trees to contain. Here, we do not want to reuse the last visited different label, which is a similar effect as the \emph{escaping pairs} in the section~\ref{sec_boundingTheEscapedPinnedSeries}. More general, when one looks at the inductive proof of the reduced SCUB in section~\ref{sec_inductivePolymerLevelProofs}, then one wants to transfer the following property: labels are at most used once along each path in the inductive derivation tree (see also~\cite[Algorithm T]{Scott_Sokal__TheRepulsiveLatticeGasTheIndependentSetPolynomialAndTheLovaszLocalLemma__JSP_2005}). If we translate this into the cluster expansion setting, then we want the polymer labels of a path from the root in a singleton tree to form a lazy self-avoiding walk in the polymer system. In other words, a polymer may only label a continuous part of each path from the root in a singleton tree.\\

Let $C_\Polymer$ be the set of vertices labelled with $\Polymer$ in the cluster. The exploration algorithm prefers to select potential nodes with a different polymer label than their parent node in the boundary of the already explored part. This preference creates long paths returning back to $C_\Polymer$, if possible, and gives the partition scheme its name. The structure of a cluster ensures that there is a supergraph of $H$ in $\SpanningSubgraphs{\Cluster{\vec{\PolymerOther}}}$ containing the missing edge to close the path in $C_\Polymer$ (see figure~\ref{fig_anatomy}). Therefore, every tree which contains a path with $\Polymer$ as a label and returns to $C_\Polymer$ with a different polymer label in between is not a singleton tree. The remaining singleton trees contain only paths with labels forming a lazy self-avoiding walk in the polymer system. Proposition~\ref{prop_singletonsReturning} contains the exact properties of the singleton trees of $\SchemeReturning$.

\begin{Alg}[$\SchemeReturning$ exploration]\label{alg_explorationReturning}
Let $H\in\SpanningSubgraphs{G}$. For every $k$, let $H_k$, $T_k$, $U_k$, $B_k$ and $P_k$ be as in algorithm~\ref{alg_explorationGeneric}. The steps to construct $H_{k+1}$ from $H_k$ on a connected component $C$ of $H_k|_{U_k}$ are:\\

Call an edge $(i,j)\in\Edges{C\cap P_k,B_k}\cap\Edges{H}$ \emph{same} (or $\ReturningSame$), if $\PolymerOther_i=\PolymerOther_j$ and \emph{different} (or $\ReturningDifferent$), if $\PolymerOther_i\not=\PolymerOther_j$. Likewise call a vertex $i\in P_k$ same, if all such $(i,j)$ are same and different, if there exists such a non-same $(i,j)$. Finally, we say that a connected component $C$ of $H_k|_{U_k}$ is same, if all vertices in $C\cap P_k$ are same, and different, if $C\cap P_k$ contains at least one different vertex.\\

If $C$ is an $\ReturningSame$ connected component of $H_k|_{U_k}$:
\begin{description}
 \item[\StepRetSameBoundary{}] \ActionSelect{} $C\cap S_k:= C\cap P_k$.
 \item[\StepRetSameIgnored{}] As $C\cap I_k=\emptyset$, \ActionRemove{} nothing.
 \item[\StepRetSameParent{}] For each $i\in C\cap S_k$, let $j_i:=\min\Set{j\in B_k: (i,j)\in \Edges{H_k}}$. \ActionSelect{} $(i,j_i)$.
 \item[\StepRetSameUncles{}] For each $i\in C\cap S_k$, \ActionRemove{} all $(i,j)\in\Edges{H_k}$ with $j_i\not=j\in B_k$.
 \item[\StepRetSameCousins{}] \ActionRemove{} all of $\Edges{C\cap S_k}\cap\Edges{H_k}$.
\end{description}

If $C$ is a $\ReturningDifferent$ connected component of $H_k|_{U_k}$:
\begin{description}
 \item[\StepRetDiffBoundary{}] \ActionSelect{} $C\cap S_k:= \Set{i\in C\cap P_k: i\text{ is }\ReturningDifferent}$.
 \item[\StepRetDiffIgnored{}] We have $C\cap I_k=\Set{i\in C\cap P_k: i\text{ is }\ReturningSame}$. \ActionRemove{} all of $\Edges{B_k,(C\cap I_k)}\cap\Edges{H_k}$.
 \item[\StepRetDiffParent{}] For each $i\in C\cap S_k$, let
 $j_i:=\min\Set{j\in B_k: (i,j)\in \Edges{H_k}\text{ is }\ReturningDifferent}$.
  \ActionSelect{} $(i,j_i)$.
 \item[\StepRetDiffUnclesDifferent{}] For each $i\in C\cap S_k$, \ActionRemove{} every $\ReturningDifferent$ $(i,j)\in\Edges{H_k}$ with $j_i\not=j\in B_k$.
 \item[\StepRetDiffUnclesSame{}] For each $i\in C\cap S_k$, \ActionRemove{} every $\ReturningSame$ $(i,j)\in\Edges{C\cap S_k,B_k}\cap\Edges{H_k}$.
 \item[\StepRetDiffCousins{}] \ActionRemove{} all of $\Edges{C\cap S_k}\cap\Edges{H_k}$.
\end{description}
\end{Alg}

\begin{Prop}\label{prop_explorationReturning}
The function $\ExplorationAlgorithm{\SchemeReturning}$ described by algorithm~\ref{alg_explorationReturning} is $\SpanningSubgraphs{G}\to\SpanningTrees{G}$ and $\ExplorationAlgorithm{\SchemeReturning}|_{\SpanningTrees{G}}=\Identity{\SpanningTrees{G}}$.
\end{Prop}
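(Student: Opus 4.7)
The plan is to derive both assertions from Proposition~\ref{prop_explorationGeneric} by recognising Algorithm~\ref{alg_explorationReturning} as a legitimate instantiation of Algorithm~\ref{alg_explorationGeneric} with specific case-dependent choices in \StepGenBoundary{} and \StepGenParent{}. All structural invariants~\eqref{eq_genericInvariants} are then inherited for free, and only the selections new to the returning variant need to be checked for admissibility.

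For the mapping claim, I would verify that the selections in both cases satisfy the hypotheses of the generic algorithm. In the $\ReturningSame$ case, $S_k\cap C := C\cap P_k$ is non-empty (inside a connected $H$ every connected component $C$ of $H_k|_{U_k}$ is adjacent to $T_k$, hence $C\cap P_k\neq\emptyset$), and every $v\in S_k\cap C$ has at least one neighbour in $B_k$ by definition of $P_k$, so the minimum-index parent in \StepRetSameParent{} exists. In the $\ReturningDifferent$ case the component classification explicitly produces at least one $\ReturningDifferent$ vertex in $C\cap P_k$, so $S_k\cap C$ is non-empty; moreover each such vertex carries by definition at least one $\ReturningDifferent$ edge to $B_k$, so the minimum in \StepRetDiffParent{} is well-defined. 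Applying Proposition~\ref{prop_explorationGeneric} yields $\ExplorationAlgorithm{\SchemeReturning}:\SpanningSubgraphs{G}\to\SpanningTrees{G}$.

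For the fixed-point claim, the key structural observation is that when $H_k=\Tree$ is a tree, every connected component $C$ of $\Tree|_{U_k}$ satisfies $\Cardinality{C\cap P_k}=1$: two distinct such vertices, joined through the tree $\Tree|_{T_k}$, would close a cycle in $\Tree$. Consequently $S_k\cap C$ is a singleton, $I_k\cap C=\emptyset$, and the unique $B_k$-edge of that singleton is chosen as parent (a $\ReturningSame$ singleton places $C$ in the $\ReturningSame$ branch with a same incident edge, a $\ReturningDifferent$ singleton places $C$ in the $\ReturningDifferent$ branch with a different incident edge). Hence the boundary, ignored, parent, uncles and cousins steps together remove no edge, $H_{k+1}=H_k$ throughout, and induction on $k$ gives $\ExplorationAlgorithm{\SchemeReturning}(\Tree)=\Tree$. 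The main subtlety worth flagging, and effectively the only place the argument can go wrong, is the self-consistency between the case classification of a component and the type of its single incident $B_k$-edge; this is automatic because in a tree a vertex in $C\cap P_k$ has exactly one edge to $B_k$, so its status as a $\ReturningSame$ or $\ReturningDifferent$ vertex coincides with the type of that edge, making the parent selection in either branch consistent with the classification.
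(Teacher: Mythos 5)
Your proof is correct. For the mapping claim you follow the paper's route: verify that the returning algorithm is a legitimate instantiation of the generic algorithm~\ref{alg_explorationGeneric} (the case-dependent selections always pick at least one vertex per unexplored component, and a valid parent edge always exists in the relevant branch), then invoke proposition~\ref{prop_explorationGeneric}. For the fixed-point claim you diverge from the paper, which obtains it as an immediate corollary of the monotone-decrease assertion in proposition~\ref{prop_explorationGeneric}: applied to a spanning tree $H=\Tree$, monotone decrease gives $\ExplorationAlgorithm{\SchemeReturning}(\Tree)\SubGraph\Tree$, and since nested spanning trees of a finite graph have the same number of edges they must coincide. Your alternative -- a direct verification that on a tree every unexplored component meets the potential nodes $P_k$ in exactly one vertex, with exactly one edge into $B_k$, so that none of the removal steps fire and $H_{k+1}=H_k$ at every iteration -- is also sound, and arguably more instructive since it exhibits the mechanism by which the algorithm stalls on trees and correctly isolates the classification self-consistency as the only point requiring care. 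The paper's one-line proof buys brevity by leaning on the abstract monotonicity invariant instead of unwinding the algorithm.
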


\begin{proof}
Follows from proposition~\ref{prop_explorationGeneric}.
\end{proof}

\begin{Alg}[$\SchemeReturning$ tree edge complement partition]
\label{alg_treeEdgeComplementPartitionReturning}
Let $\Tree\in\SpanningTrees{G}$. Let $L_k$ be the $k^{th}$ level of $\Tree$. First, we determine if an edge $(\Parent{i},i)$ is a same or non-same edge:
\begin{equation}\label{eq_sameStatus}
 \ReturningStatus:
 \quad I\setminus\Set{o}\to\Set{\ReturningSame,\ReturningDifferent}
 \quad i\mapsto\begin{cases}
   \ReturningSame&\text{if }\PolymerOther_i=\PolymerOther_{\Parent{i}}\\
   \ReturningDifferent&\text{if }\PolymerOther_i\not=\PolymerOther_{\Parent{i}}\,.
  \end{cases}
\end{equation}
For $k\ge 1$, define the \emph{equivalence relation} $\ReturningEquivalentK$ on $L_k$ by
\begin{equation}\label{eq_equivalenceRelationReturning}
 i\ReturningEquivalentK j
 \Iff
 \ReturningStatus(\TreePath{o}{i}\setminus\Set{o})
  =\ReturningStatus(\TreePath{o}{j}\setminus\Set{o})\,,
\end{equation}
where the equality on the rhs is taken in $\Set{\ReturningSame,\ReturningDifferent}^k$ between the labels of the paths $\TreePath{o}{.}$ to the root. This implies that an equivalence class consists of either only same or only non-same nodes and whence we can extend $\ReturningStatus$ to them. For completeness, let $\ReturningEquivalent{0}$ be the trivial equivalence relation on $L_0=\Set{o}$. The equivalence classes possess a \emph{tree structure} consistent with $\Tree$:
\begin{equation}\label{eq_equivalenceClassTreeStructureReturning}
 i\ReturningEquivalentKPlusOne j
 \Then
 \Parent{i}\ReturningEquivalentK\Parent{j}\,,
\end{equation}
that is, equivalent vertices in $L_{k+1}$ have equivalent parents in $L_k$. Therefore, $\ReturningClass{i}{k+1}$ has the \emph{parent class} $\Parent{\ReturningClass{i}{k+1}}:=\ReturningClass{\Parent{i}}{k}$.\\

We partition $E\setminus\Edges{\Tree}$ into $\AdmissibleEdgesTree{\SchemeReturning}\uplus\ConflictingEdgesTree{\SchemeReturning}$. Let $0\le k\le l$, $j\in L_k$, $i\in L_l$ and $e:=(i,j)\in E\setminus\Edges{\Tree}$. We let $e\in\ConflictingEdgesTree{\SchemeReturning}$, if one of the following mutually exclusive conditions holds:
\begin{subequations}\label{eq_conf_ret}
\begin{align}
 &\label{eq_conf_ret_notClassPath}
 l\ge 2
  \text{ and }
 \ReturningClass{j}{k}\not\in\TreePath{\ReturningClass{o}{0}}{\ReturningClass{i}{l}}
 \,,\\
 &\label{eq_conf_ret_classAncestorDifferent}
 l\ge 2
  \text{ and }
 \ReturningClass{j}{k}\in\TreePath{\ReturningClass{o}{0}}{\ReturningClass{\Parent{\Parent{i}}}{l-2}}
  \text{ and }
 \PolymerOther_i\not=\PolymerOther_j
 \,,\\
 &\label{eq_conf_ret_classAncestorSame}
 l\ge 2
  \text{ and }
 \ReturningClass{j}{k}\in\TreePath{\ReturningClass{o}{0}}{\ReturningClass{\Parent{\Parent{i}}}{l-2}}
  \text{ and }
 \PolymerOther_i=\PolymerOther_j
  \text{ and }
 \ReturningStatus(C)=\ReturningSame
 \,,\\
\intertext{
 where $C\in\TreePath{\ReturningClass{j}{k}}{\ReturningClass{\Parent{i}}{l-1}}$ the unique class with $\Parent{C}=\ReturningClass{j}{k}$,
}
 &\label{eq_conf_ret_smallUncleDifferent}
 l\ge 1
  \text{ and }
 \ReturningClass{j}{k}=\ReturningClass{\Parent{i}}{l-1}
  \text{ and }
 \PolymerOther_i\not=\PolymerOther_j
  \text{ and }
 \ReturningStatus(i)=\ReturningDifferent
  \text{ and }
 j<\Parent{i}
 \,,\\
 &\label{eq_conf_ret_differentUncleSame}
 l\ge 1
  \text{ and }
 \ReturningClass{j}{k}=\ReturningClass{\Parent{i}}{l-1}
  \text{ and }
 \PolymerOther_i\not=\PolymerOther_j
  \text{ and }
 \ReturningStatus(i)=\ReturningSame
 \,,\\
 &\label{eq_conf_ret_smallUncleSame}
 l\ge 1
  \text{ and }
 \ReturningClass{j}{k}=\ReturningClass{\Parent{i}}{l-1}
  \text{ and }
 \PolymerOther_i=\PolymerOther_j
  \text{ and }
 \ReturningStatus(i)=\ReturningSame
  \text{ and }
 j<\Parent{i}
 \,.
\end{align}
\end{subequations}
We let $e\in\AdmissibleEdgesTree{\SchemeReturning}$, if one of the following mutually exclusive conditions holds:
\begin{subequations}\label{eq_adm_ret}
\begin{align}
 &\label{eq_adm_ret_equalClass}
 l\ge 2
  \text{ and }
 \ReturningClass{j}{k}=\ReturningClass{i}{l}
 \,,\\
 &\label{eq_adm_ret_classAncestor}
 l\ge 2
  \text{ and }
 \ReturningClass{j}{k}\in\TreePath{\ReturningClass{o}{0}}{\ReturningClass{\Parent{\Parent{i}}}{l-2}}
  \text{ and }
 \PolymerOther_i=\PolymerOther_j
  \text{ and }
 \ReturningStatus(C)=\ReturningDifferent
 \,,\\
\intertext{
 where $C\in\TreePath{\ReturningClass{j}{k}}{\ReturningClass{\Parent{i}}{l-1}}$ the unique class with $\Parent{C}=\ReturningClass{j}{k}$,
}
 &\label{eq_adm_ret_differentUncleDifferent}
 l\ge 1
  \text{ and }
 \ReturningClass{j}{k}=\Parent{\ReturningClass{i}{l}}
  \text{ and }
 \PolymerOther_i\not=\PolymerOther_j
  \text{ and }
 \ReturningStatus(i)=\ReturningDifferent
  \text{ and }
 j>\Parent{i}
 \,,\\
 &\label{eq_adm_ret_sameUncleDifferent}
 l\ge 1
  \text{ and }
 \ReturningClass{j}{k}=\Parent{\ReturningClass{i}{l}}
  \text{ and }
 \PolymerOther_i=\PolymerOther_j
  \text{ and }
 \ReturningStatus(i)=\ReturningDifferent
 \,,\\
 &\label{eq_adm_ret_uncleSame}
 l\ge 1
  \text{ and }
 \ReturningClass{j}{k}=\Parent{\ReturningClass{i}{l}}
  \text{ and }
 \PolymerOther_i=\PolymerOther_j
  \text{ and }
 \ReturningStatus(i)=\ReturningSame
  \text{ and }
 j>\Parent{i}
 \,.
\end{align}
\end{subequations}
\end{Alg}

\begin{Rem}
Of particular importance are the \emph{deep edges} in~\eqref{eq_adm_ret_classAncestor}. This is where we use the particular structure of $\Cluster{\vec{\PolymerOther}}$ (see figure~\ref{fig_anatomy}). Paths in the tree returning to a previously visited polymer find here always an admissible edge to add, thus excluding the tree from $\SingletonTrees{\SchemeReturning}(\Cluster{\vec{\PolymerOther}})$. This is exploited in~\eqref{eq_singletonsReturningAllNonReturning}.

\end{Rem}
\begin{Prop}\label{prop_schemeReturning}
The map
\begin{equation}\label{eq_schemeReturning}
 \SchemeReturning:
 \quad
 \SpanningTrees{G}\to\SpanningSubgraphs{G}
 \quad
 \Tree\mapsto\SchemeReturning(\Tree)
 :=(I,\Edges{\Tree}\uplus\AdmissibleEdgesTree{\SchemeReturning})
\end{equation}
is a partition scheme of $G$ with $[\Tree,\SchemeReturning(\Tree)]=\ExplorationAlgorithm{\SchemeReturning}^{-1}(\Tree)$.
\end{Prop}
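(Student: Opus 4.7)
The plan is to deduce Proposition~\ref{prop_schemeReturning} from the generic Proposition~\ref{prop_schemeGeneric}. That is, I would verify that the explicit partition $\Edges{G}\setminus\Edges{\Tree}=\AdmissibleEdgesTree{\SchemeReturning}\uplus\ConflictingEdgesTree{\SchemeReturning}$ given by Algorithm~\ref{alg_treeEdgeComplementPartitionReturning} satisfies the compatibility conditions~\eqref{eq_adm_gen} and~\eqref{eq_conf_gen} with respect to $\ExplorationAlgorithm{\SchemeReturning}$ from Algorithm~\ref{alg_explorationReturning}. Once this is done, Proposition~\ref{prop_schemeGeneric} immediately yields both the partition scheme property and the identification $[\Tree,\SchemeReturning(\Tree)]=\ExplorationAlgorithm{\SchemeReturning}^{-1}(\Tree)$.

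First I would establish the following structural key lemma: for $H\in\ExplorationAlgorithm{\SchemeReturning}^{-1}(\Tree)$ with exploration sequence $(H_k,T_k,S_k,B_k)_{k\ge 0}$, the levels of $\Tree$ are exactly the $S_k$, and the equivalence classes $\ReturningClass{i}{k}$ from~\eqref{eq_equivalenceRelationReturning} are precisely the intersections of the connected components of $H_k|_{U_k}$ with $P_k$, enriched with their $\ReturningSame$/$\ReturningDifferent$ type. This identification is the bridge between the static edge classification~\eqref{eq_conf_ret}--\eqref{eq_adm_ret} (formulated in terms of $\Tree$ alone) and the dynamic description of the exploration. It follows by induction on $k$, using the invariants~\eqref{eq_genericInvariants} and the fact that a $\ReturningDifferent$ component splits off its $\ReturningSame$ sub-part via \StepRetDiffIgnored{}, while the edges selected in \StepRetDiffParent{}/\StepRetSameParent{} respect the $\min$-tiebreak.

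Next I verify~\eqref{eq_adm_gen} by a case-by-case check across~\eqref{eq_adm_ret_equalClass}--\eqref{eq_adm_ret_uncleSame}. In each case, for $e=(i,j)\in\AdmissibleEdgesTree{\SchemeReturning}\setminus\Edges{H}$, one must check that adding $e$ to $H$ does not alter the exploration: edges in the same class (cousins,~\eqref{eq_adm_ret_equalClass}) are removed by \StepRetSameCousins{}/\StepRetDiffCousins{}; edges to a strictly greater parent of the same type (uncles,~\eqref{eq_adm_ret_differentUncleDifferent} and~\eqref{eq_adm_ret_uncleSame}) are removed by \StepRetDiffUnclesDifferent{}/\StepRetSameUncles{} because the $\min$ tiebreaker still selects $\Parent{i}$; the $\ReturningSame$ uncle of a $\ReturningDifferent$ child~\eqref{eq_adm_ret_sameUncleDifferent} is removed by \StepRetDiffUnclesSame{}; and for the deep edges~\eqref{eq_adm_ret_classAncestor} the transitivity $\PolymerOther_i=\PolymerOther_j$ together with the $\ReturningDifferent$ status along the intermediate class guarantees that adding $e$ changes neither the connected components nor the set of $\ReturningDifferent$ vertices in $P_k$, so it is discarded in \StepRetDiffIgnored{} at the appropriate step.

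Finally I verify~\eqref{eq_conf_gen}, which by the remark after Proposition~\ref{prop_schemeGeneric} reduces to single edges. For~\eqref{eq_conf_ret_smallUncleDifferent}--\eqref{eq_conf_ret_smallUncleSame} the single additional edge $e=(i,j)$ redirects the $\min$-tiebreak or the $\ReturningSame$/$\ReturningDifferent$ classification at step $l-1$, so $\ExplorationAlgorithm{\SchemeReturning}$ picks a different parent for $i$. The harder cases are~\eqref{eq_conf_ret_notClassPath}--\eqref{eq_conf_ret_classAncestorSame}: here I would show that adding the deep edge $e$ either merges two connected components of $H_k|_{U_k}$ that were separate in $H$ (altering the $S_k$-selection at some intermediate step), or changes the $\ReturningSame$/$\ReturningDifferent$ type of a potential vertex. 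The main technical obstacle is precisely~\eqref{eq_conf_ret_classAncestorSame}: one has to argue that when a same-labelled deep edge meets a returning path whose intermediate class is $\ReturningSame$, then closing the path with $e$ transfers some vertex from $\ReturningDifferent$ to $\ReturningSame$ or vice-versa in the exploration of $H\cup\{e\}$, which in turn prevents $\Tree$ from being reconstructed. This requires carefully tracking how the addition of a single edge propagates through~\eqref{eq_genericInvariants} and~\eqref{eq_equivalenceClassTreeStructureReturning}; here the cluster assumption $G=\Cluster{\vec{\PolymerOther}}$ enters in the form that equality of polymer labels is an equivalence relation compatible with the incompatibility graph.
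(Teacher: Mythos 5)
Your overall plan is correct and matches the paper's: reduce to Proposition~\ref{prop_schemeGeneric} and verify the compatibility conditions \eqref{eq_adm_gen} and \eqref{eq_conf_gen} for $\ExplorationAlgorithm{\SchemeReturning}$ and the partition from Algorithm~\ref{alg_treeEdgeComplementPartitionReturning} by a case-by-case analysis. There are two substantive gaps, however.

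First, your ``structural key lemma'' is false as stated. The equivalence classes $\ReturningClass{i}{k}$ of \eqref{eq_equivalenceRelationReturning} are not, in general, the intersections of connected components of $H_{k-1}|_{U_{k-1}}$ with $P_{k-1}$: two vertices lying in distinct connected components whose root-paths carry identical $\ReturningSame$/$\ReturningDifferent$ histories belong to the same equivalence class. For a counterexample take $H=G=\Tree$ to be two disjoint paths $o-a_1-b_1$ and $o-a_2-b_2$ hung from the root, with all polymer labels distinct; then $a_1\ReturningEquivalent{1}a_2$, yet $a_1$ and $a_2$ sit in distinct connected components of $H_0|_{U_0}$. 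The paper instead uses the weaker, correct implication: vertices in the same class at level $n+1$ that are selected at step $n$ lie in connected components of $H_n|_{U_n}$ of the same $\ReturningSame$/$\ReturningDifferent$ type, which is all that is needed for the exploration to branch identically on $H$ and $\tilde H$.

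Second, and more importantly, you do not pin down the mechanism that makes the case analysis rigorous. The paper introduces the \emph{influence level} $m(e)$ of an edge $e$ -- the level of the confluent $\ReturningClass{j}{k}\Confluent\ReturningClass{i}{l}$ in the tree of equivalence classes -- and establishes the no-influence lemma \eqref{eq_returningNoInfluence}: if every edge in $F$ has $m(e)\ge N$, then the partial explorations $(H_n|_{T_n})_{n\le N}$ and $(\tilde H_n|_{\tilde T_n})_{n\le N}$ coincide. Admissible edges have $m(e)=k$ and are shown to be removed during iteration $k-1$ or $k$ by the appropriate step; for conflicting $F$, one picks an edge with minimal influence level $N$ and shows the explorations agree up to level $N$ and then provably diverge at $N+1$. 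Without this device, your argument that ``adding $e$ redirects the tiebreak or the classification at step $l-1$'' does not establish that the divergence survives the earlier iterations, nor does it handle $F$ containing several conflicting edges at different levels. Once the no-influence lemma is in place, the case-by-case checks you sketch -- including the hard case \eqref{eq_conf_ret_classAncestorSame}, which you correctly identify as the crux -- reduce to short local arguments, which is precisely why the paper organises the proof around that lemma.
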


\begin{proof}
If we admit that the partition in algorithm~\ref{alg_treeEdgeComplementPartitionReturning} satisfies the compatibility condition~\eqref{eq_treeEdgeComplementPartitionCompatibility}, then proposition~\ref{prop_schemeReturning} is a direct consequence of proposition~\ref{prop_schemeGeneric}. Thus, we show~\eqref{eq_treeEdgeComplementPartitionCompatibility} for $\ExplorationAlgorithm{\SchemeReturning}$ and $\AdmissibleEdgesTree{\SchemeReturning}$.\\

Fix $\Tree\in\SpanningTrees{G}$ and $H\in\ExplorationAlgorithm{\SchemeReturning}^{-1}(\Tree)$. Let $0\le k\le l$, $j\in L_k$, $i\in L_l$ and $e:=(i,j)\in E\setminus\Edges{\Tree}$. Let $(H_n)_{n\in\NatNumZero}$ and $(T_n)_{n\in\NatNumZero}$ be the sequences associated with $H$ from algorithm~\ref{alg_explorationReturning}. For $\emptyset\not=F\subseteq E\setminus\Edges{H}$, set $\tilde{H}:=(I,\Edges{H}\uplus F)$ and let $(\tilde{H}_n)_{n\in\NatNumZero}$ and $(\tilde{T}_n)_{n\in\NatNumZero}$ be its associated sequences from algorithm~\ref{alg_explorationReturning}.\\

The \emph{influence level} $m(e)$ of $e$ is the level of the confluent (last common ancestor) of the equivalence classes of its endpoints in the tree formed by the equivalence classes: $\ReturningClass{j}{k}\Confluent\ReturningClass{i}{l}\subseteq L_{m(e)}$. We claim that, for each $N\in\NatNum$ \AlsoArxiv{(compare with~\eqref{eq_penroseNoInfluence})}:
\begin{equation}\label{eq_returningNoInfluence}
 \left(\ForAll e\in F:\,m(e)\ge N\right)
 \Then
 (H_n|_{T_n})_{n=0}^N = (\tilde{H}_n|_{\tilde{T}_n})_{n=0}^N\,.
\end{equation}
To show~\eqref{eq_returningNoInfluence}, we proceed by induction over $n$, for $0\le n\le N$. By definition $H_0|_{T_0}=\tilde{H}_0|_{\tilde{T}_0}$. For the induction step from $n<N$ to $n+1$, we show that algorithm~\ref{alg_explorationReturning} is not influenced by the presence of such an $e\in F$. The addition of $e$ does not change $\tilde{P}_n$ nor the $\ReturningStatus$ classification of its vertices compared to $P_n$. Let $\mathfrak{i}$ and $\mathfrak{j}$ be the ancestors of $i$ and $j$ at level $n+1$ of $\Tree$ respectively. They are both in $P_n$ and $\tilde{P}_n$. As $\ReturningClass{\mathfrak{i}}{n+1}=\ReturningClass{\mathfrak{j}}{n+1}$ in $\Tree$, we have two possibilities in  $H_n|_{U_n}$: both $\mathfrak{i}$ and $\mathfrak{j}$ are classified $\ReturningSame$ and in an $\ReturningSame$ connected component of $H_n|_{U_n}$ or both $\mathfrak{i}$ and $\mathfrak{j}$ are classified $\ReturningDifferent$ and in a $\ReturningDifferent$ connected component of $H_n|_{U_n}$. In both cases the
presence of $e$ in $\tilde{H}$ could merge the connected components of $i$ and $j$ in $H_n|_{U_n}$ respectively into one connected component of $\tilde{H}_n|_{\tilde{U}_n}$, but only of the same classification. Therefore, all vertices in $\tilde{P}_n=P_n$ end in connected components of $H_n|_{U_n}$ and $\tilde{H}_n|_{\tilde{U}_n}$ of the same classification respectively. Thus, $S_n=\tilde{S_n}$ and $T_n=\tilde{T_n}$. Finally, the selection of the parent in \StepRetSameBoundary{} and \StepRetDiffBoundary{} is independent of $H_n|_{U_n}$ and $\tilde{H}_n|_{\tilde{U}_n}$ in all possible combinations. We conclude that $H_{n+1}|_{T_{n+1}}=\tilde{H}_{n+1}|_{\tilde{T}_{n+1}}$.\\

To show~\eqref{eq_adm_gen}, we assume that $F\subseteq\AdmissibleEdgesTree{\SchemeReturning}$. If $e\in\AdmissibleEdgesTree{\SchemeReturning}$, then by~\eqref{eq_adm_ret} $e$ has influence level $m(e)=k$. We go through all the cases of~\eqref{eq_adm_ret} and show that $e$ is always removed. If $e$ is of type~\eqref{eq_adm_ret_equalClass}, then it is removed by \StepRetSameCousins{} or \StepRetDiffCousins{} during iteration $(k-1)$. If $e$ is of type~\eqref{eq_adm_ret_classAncestor}, then it is removed by \StepRetDiffIgnored{} during iteration $k$. If $e$ is of type~\eqref{eq_adm_ret_differentUncleDifferent}, then it is removed by \StepRetDiffUnclesDifferent{} during iteration $k$. If $e$ is of type~\eqref{eq_adm_ret_uncleSame}, then it is removed by \StepRetSameUncles{} during iteration $k$. If $e$ is of type~\eqref{eq_adm_ret_sameUncleDifferent}, then it is removed by \StepRetDiffUnclesSame{} during iteration $k$.\\

To show~\eqref{eq_conf_gen}, we assume that $F':=F\cap\ConflictingEdgesTree{\SchemePenrose}\not=\emptyset$. Choose $e\in F$ with $m(e)=N:=\min\Set{m(f): f\in F'}$ minimal. We demonstrate that the presence of $e$ causes $(H_n|_{T_n})_{n\in\NatNumZero}$ to diverge from $(\tilde{H}_n|_{\tilde{T}_n})_{n\in\NatNumZero}$ exactly at level $N+1$, that is, \eqref{eq_returningNoInfluence} holds and $H_{N+1}|_{T_{N+1}}\not=\tilde{H}_{N+1}|_{\tilde{T}_{N+1}}$. We go through all the cases of~\eqref{eq_conf_ret}:\\

Case $e$ of type~\eqref{eq_conf_ret_notClassPath}: It is evident that $N < k\lor l$.  Hence, there exist ancestors $\mathfrak{i}$ and $\mathfrak{j}$ of $i$ and $j$ in $L_{N+1}$ respectively with $\ReturningClass{\Parent{\mathfrak{i}}}{N}=\ReturningClass{\Parent{\mathfrak{j}}}{N}=\ReturningClass{j}{k}\Confluent\ReturningClass{i}{l}\subseteq L_N$. If $l=N$ or $k=N$, then $i=\mathfrak{i}$ or $j=\mathfrak{j}$ respectively. It also follows from~\eqref{eq_conf_ret_notClassPath} that $\ReturningClass{\mathfrak{i}}{N+1}\not=\ReturningClass{\mathfrak{j}}{N+1}$. Hence, during iteration $N$ of algorithm~\ref{alg_explorationReturning} and without loss of generality, $\mathfrak{i}$ is classified as $\ReturningSame$ in an $\ReturningSame$ connected component of $H_N|_{U_N}$ and $\mathfrak{j}$ is classified as $\ReturningDifferent$ in a $\ReturningDifferent$ connected component of $H_N|_{U_N}$.
The addition of $e$ in $\tilde{H}$ places $\mathfrak{i}$ and $\mathfrak{j}$ in the same connected component of $\tilde{H}_N|_{\tilde{U}_N}$, via the path $\mathfrak{i}\ConnectedTo i\ConnectedTo j\ConnectedTo\mathfrak{j}$. Hence, $\mathfrak{i}$ is an $\ReturningSame$ vertex in a $\ReturningDifferent$ connected component of $\tilde{H}_N|_{\tilde{U}_N}$ and is not selected into $\tilde{S_N}$ by \StepRetDiffBoundary{}. Thus, $T_{N+1}\not=\tilde{T}_{N+1}$.\\

Case $e$ of type~\eqref{eq_conf_ret_classAncestorDifferent}: Here $N=k$. Let $\mathfrak{i}$ be the ancestor of $i$ with $\Parent{\ReturningClass{\mathfrak{i}}{N+1}}=\ReturningClass{j}{N}$. As $\PolymerOther_i\not=\PolymerOther_j$ the addition of $e$ in $\tilde{H}$ classifies $i$ as $\ReturningDifferent$ in step $N$. Therefore $i\in\tilde{S}_N$ by \StepRetDiffBoundary{}, but $i\not\in T_N$. Thus, $T_{N+1}\not=\tilde{T}_{N+1}$.\\

Case $e$ of type~\eqref{eq_conf_ret_classAncestorSame}: Here $N=k$. Let $\mathfrak{i}$ be the ancestor of $i$ with $\Parent{\ReturningClass{\mathfrak{i}}{N+1}}=\ReturningClass{j}{N}$. As $\PolymerOther_i=\PolymerOther_j$ the addition of $e$ in $\tilde{H}$ classifies $i$ as $\ReturningSame$ in step $N$. As $i\ConnectedTo\mathfrak{i}$ in $H_N|_{U_N}$ and also in $\tilde{H}_N|_{\tilde{U}_N}$, we know that $i$ is in a $\ReturningDifferent$ connected component of $\tilde{H}_N|_{\tilde{U}_N}$, namely the one of $\mathfrak{i}$. Therefore, $i\in\tilde{S}_N$ by \StepRetSameBoundary{}, but $i\not\in T_N$. Thus, $T_{N+1}\not=\tilde{T}_{N+1}$.\\

Case $e$ of type~\eqref{eq_conf_ret_smallUncleDifferent}: Here $N=k$. Let $\mathfrak{i}$ be the parent of $i$ in $\ReturningClass{j}{N}$. The addition of $e$ in $\tilde{H}$ lets \StepRetDiffParent{} select $j$ to be the parent of $i$ in $\tilde{T}_{N+1}|_{\tilde{T}_{N+1}}$. Thus, $T_{N+1}\not=\tilde{T}_{N+1}$.\\

Case $e$ of type~\eqref{eq_conf_ret_differentUncleSame}: Here $N=k$. Let $\mathfrak{i}$ be the parent of $i$ in $\ReturningClass{j}{N}$. The addition of $e$ in $\tilde{H}$ classifies $i$ as $\ReturningDifferent$ during step $N$ instead of $\ReturningSame$. Therefore, its parent in $\tilde{T}_{N+1}|_{\tilde{T}_{N+1}}$ is chosen by \StepRetDiffParent{} instead of \StepRetSameParent{} and is not $\mathfrak{i}$ any more. Thus, $T_{N+1}\not=\tilde{T}_{N+1}$.\\

Case $e$ of type~\eqref{eq_conf_ret_smallUncleSame}: Here $N=k$. Let $\mathfrak{i}$ be the parent of $i$ in $\ReturningClass{j}{N}$. The addition of $e$ in $\tilde{H}$ lets \StepRetSameParent{} select $j$ to be the parent of $i$ in $\tilde{T}_{N+1}|_{\tilde{T}_{N+1}}$. Thus, $T_{N+1}\not=\tilde{T}_{N+1}$.
\end{proof}

\begin{Prop}[Properties of $\SingletonTrees{\SchemeReturning}(\Cluster{\vec{\PolymerOther}})$]
\label{prop_singletonsReturning}
Let $\Tree\in\SingletonTrees{\SchemeReturning}(\Cluster{\vec{\PolymerOther}})$ and let $C_i$ be the set of children of $i$ in $\Tree$. Then
\begin{subequations}\label{eq_singletonsReturning}
\begin{gather}
 \label{eq_singletonsReturningCardinality}
 \Cardinality{C_i}=\Cardinality{\Support\vec{\PolymerOther}_{C_i}}\\
 \label{eq_singletonsReturningCompatibilityChildren}
 (\Support\vec{\PolymerOther}_{C_i})\setminus\Set{\PolymerOther_i}
 \text{ is a compatible subset of }\IncompatibleOtherPolymers{\PolymerOther_i}\\
 \label{eq_singletonsReturningCompatibilityClass}
 \ForAll k\in\NatNumZero, i\in L_k:\quad
 \Support\vec{\PolymerOther}_{\ReturningClass{i}{k}}\text{ is a compatible subset of }\PolymerSet\\
 \label{eq_singletonsReturningAllNonReturning}
 \ForAll i\in I\setminus\Set{o}:\quad
  i\text{ is }\ReturningDifferent
  \Then
  \PolymerOther_i\not\in\Support\vec{\PolymerOther}_{\TreePath{o}{\Parent{i}}}\,.
\end{gather}
\end{subequations}
\end{Prop}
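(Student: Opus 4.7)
The plan is to unfold the hypothesis $\Tree\in\SingletonTrees{\SchemeReturning}(\Cluster{\vec{\PolymerOther}})$, which by~\eqref{eq_singletonTrees} means $\AdmissibleEdgesTree{\SchemeReturning}=\emptyset$, and then to establish each of the four claims by contradiction: assume the negation, exhibit an edge of $\Cluster{\vec{\PolymerOther}}$ that algorithm~\ref{alg_treeEdgeComplementPartitionReturning} places into $\AdmissibleEdgesTree{\SchemeReturning}$, and contradict $\SchemeReturning(\Tree)=\Tree$ via proposition~\ref{prop_schemeReturning}.

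First I would treat~\eqref{eq_singletonsReturningCardinality} and~\eqref{eq_singletonsReturningCompatibilityChildren} together. Take distinct children $i_1,i_2\in C_i$. They are siblings in $\Tree$, so $\ReturningStatus(i_j)$ depends only on whether $\PolymerOther_{i_j}$ equals $\PolymerOther_i$; consequently two siblings whose labels are both equal to or both different from $\PolymerOther_i$ share a status and hence an equivalence class. If $\PolymerOther_{i_1}=\PolymerOther_{i_2}$ (violating~\eqref{eq_singletonsReturningCardinality}), reflexivity of $\IncompatibleTo$ puts $(i_1,i_2)$ in $\Edges{\Cluster{\vec{\PolymerOther}}}$ and both children share a status. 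If $\PolymerOther_{i_1},\PolymerOther_{i_2}\in\IncompatibleOtherPolymers{\PolymerOther_i}$ are distinct and incompatible (violating~\eqref{eq_singletonsReturningCompatibilityChildren}), then again $(i_1,i_2)\in\Edges{\Cluster{\vec{\PolymerOther}}}$ and both children are $\ReturningDifferent$. In either case the edge lies at equal levels inside a common equivalence class, so~\eqref{eq_adm_ret_equalClass} classifies it as admissible, a contradiction. The inclusion $(\Support\vec{\PolymerOther}_{C_i})\setminus\Set{\PolymerOther_i}\subseteq\IncompatibleOtherPolymers{\PolymerOther_i}$ is immediate, because a child of $i$ in $\Tree$ is by definition a $\Cluster{\vec{\PolymerOther}}$-neighbour of $i$.

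Claim~\eqref{eq_singletonsReturningCompatibilityClass} follows by the same template: if $i,j\in L_k$ satisfy $i\ReturningEquivalentK j$ and $\PolymerOther_i\IncompatibleTo\PolymerOther_j$, then $(i,j)\in\Edges{\Cluster{\vec{\PolymerOther}}}$ lies at equal levels inside a common class, and~\eqref{eq_adm_ret_equalClass} once more produces an admissible edge.

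The main obstacle is the final claim~\eqref{eq_singletonsReturningAllNonReturning}, which is precisely what the deep admissible edges~\eqref{eq_adm_ret_classAncestor} were designed to enforce. Suppose $i$ at level $l$ is $\ReturningDifferent$ and $\PolymerOther_i=\PolymerOther_a$ for some $a\in\TreePath{o}{\Parent{i}}$ at level $k$. Because $\PolymerOther_i\neq\PolymerOther_{\Parent{i}}$ we have $a\neq\Parent{i}$, hence $k\le l-2$ and $a$ is an ancestor of $\Parent{\Parent{i}}$. Enumerate the tree-path $a=v_0,v_1,\dotsc,v_r=\Parent{i}$ and let $m\ge 1$ be the smallest index with $\PolymerOther_{v_m}\neq\PolymerOther_a$; such an $m$ exists because $\PolymerOther_{v_r}=\PolymerOther_{\Parent{i}}\neq\PolymerOther_a$. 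Consider the edge $e:=(v_{m-1},i)$, present in $\Cluster{\vec{\PolymerOther}}$ by reflexivity since $\PolymerOther_{v_{m-1}}=\PolymerOther_a=\PolymerOther_i$. The vertex $v_{m-1}$ is an ancestor of $\Parent{\Parent{i}}$, so $\ReturningClass{v_{m-1}}{k+m-1}$ lies in $\TreePath{\ReturningClass{o}{0}}{\ReturningClass{\Parent{\Parent{i}}}{l-2}}$, and the unique class $C$ immediately below it on the class-path to $\ReturningClass{\Parent{i}}{l-1}$ is $\ReturningClass{v_m}{k+m}$, whose status is $\ReturningDifferent$ by the minimality of $m$. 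Hence~\eqref{eq_adm_ret_classAncestor} applies and classifies $e$ as admissible, the desired contradiction. The delicate point is to pick the deepest ancestor $v_{m-1}$ of $i$ still carrying the label $\PolymerOther_a$, rather than $a$ itself, so that the first step off the class-path is forced to be $\ReturningDifferent$ and the conflicting case~\eqref{eq_conf_ret_classAncestorSame} cannot intervene.
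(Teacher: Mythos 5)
Your proof is correct and follows essentially the same strategy as the paper's: negate each claim, exhibit an edge of the cluster, show it falls into $\AdmissibleEdgesTree{\SchemeReturning}$ by~\eqref{eq_adm_ret_equalClass} or~\eqref{eq_adm_ret_classAncestor}, and conclude $\SchemeReturning(\Tree)\neq\Tree$. Two small differences of exposition: the paper derives~\eqref{eq_singletonsReturningCardinality} and~\eqref{eq_singletonsReturningCompatibilityChildren} as stated consequences of~\eqref{eq_singletonsReturningCompatibilityClass}, while you argue each directly at the level of siblings — arguably cleaner, since~\eqref{eq_singletonsReturningCompatibilityChildren} does not literally entail~\eqref{eq_singletonsReturningCardinality} (supports forget multiplicity) and one must anyway reuse the reflexivity-plus-same-class mechanism, which you make explicit. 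For~\eqref{eq_singletonsReturningAllNonReturning}, the paper takes the shallowest vertex $j$ with $\PolymerOther_j=\PolymerOther_i$ whose path-child carries a different label, whereas you take the deepest vertex of a run starting at an arbitrary $a$; both choices force the child class $C$ to have status $\ReturningDifferent$ and hence land in case~\eqref{eq_adm_ret_classAncestor} rather than~\eqref{eq_conf_ret_classAncestorSame}, which is exactly the point you flag as delicate. The case $\Parent{i}=o$, which the paper dispatches separately, is implicitly handled in your argument because $a\neq\Parent{i}$ leaves no candidate $a$ when $\TreePath{o}{\Parent{i}}=\Set{o}$.
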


\begin{proof}
Fix $k$ and let $i,j\in L_{k+1}$ with $i\ReturningEquivalentKPlusOne j$. If $\PolymerOther_i\IncompatibleTo\PolymerOther_j$, then $e:=(i,j)\in E$ as $G=\Cluster{\vec{\PolymerOther}}$~\eqref{eq_cluster}. By~\eqref{eq_adm_ret_equalClass}, $e\in\Edges{\SchemeReturning(\Tree)}$ and $\SchemeReturning(\Tree)\not=\Tree$. This shows~\eqref{eq_singletonsReturningCompatibilityClass}, which implies~\eqref{eq_singletonsReturningCompatibilityChildren}, which in turn implies~\eqref{eq_singletonsReturningCardinality}.\\

Suppose there exists a vertex $i$ violating~\eqref{eq_singletonsReturningAllNonReturning}. If $\Parent{i}=o$, then this contradicts the $\ReturningDifferent$ classification of $i$. If $\Parent{i}\not=o$, then there exist $j,\mathfrak{j}\in\TreePath{o}{\Parent{i}}$ with $\GraphMetricOf{\Tree}{o}{j}+1=\GraphMetricOf{\Tree}{o}{\mathfrak{j}}<\GraphMetricOf{\Tree}{o}{i}$ and $\PolymerOther_i=\PolymerOther_j\not=\PolymerOther_\mathfrak{j}$. Take such a $j$ minimal with respect to $\GraphMetricOf{\Tree}{o}{j}$. Consider the edge $e:=(j,i)$, which is in $E$ due to the fact that $G=\Cluster{\vec{\PolymerOther}}$~\eqref{eq_cluster}. It is admissible with respect to $\Tree$ of type~\eqref{eq_adm_ret_classAncestor}. Hence, $e\in\Edges{\SchemeReturning(\Tree)}$ and $\SchemeReturning(\Tree)\not=\Tree$.
\end{proof}
\subsection{The synthetic scheme}
\label{sec_schemeSynthetic}
The synthetic scheme $\SchemeSynthetic$ is an explorative partition scheme which interpolates between the behaviour of the greedy scheme $\SchemePenrose$ in algorithm~\ref{alg_explorationGreedy}, represented by $\BehaviourGreedy$, and the returning scheme $\SchemeReturning$, represented by $\BehaviourReturning$, in algorithm~\ref{alg_explorationReturning} along each edge. The static information comprises the total order on $I$, a behaviour vector $\SyntheticVector\in\BehaviourSetIndexedBy{\EscapePairSet}$, the cluster structure of $\Cluster{\vec{\PolymerOther}}$, with $\vec{\PolymerOther}\in\PolymerSet^I$ and the choice of the root $o\in I$.\\

We only state the exploration algorithm. We omit the edge partition for reconstruction from a tree and the correctness proof, as they mirror the one for the returning scheme in section~\ref{sec_schemeReturning}, with greedy and returning classes taking the role of same and different classes respectively. These greedy and returning classes are encoded by $\SyntheticVector$: the greedy class contains all children with the same polymer label or with $g_{(\PolymerOther_v,\PolymerOther_{\Parent{v}})} = \BehaviourGreedy{}$, whereas the returning class those with $g_{(\PolymerOther_v,\PolymerOther_{\Parent{v}})} = \BehaviourReturning{}$.

\begin{Alg}[$\SchemeSynthetic(\SyntheticVector)$ exploration]\label{alg_explorationSynthetic}
Let $H\in\SpanningSubgraphs{G}$. For every $k$, let $H_k$, $T_k$, $U_k$, $B_k$ and $P_k$ be as in algorithm~\ref{alg_explorationGeneric}. The steps to construct $H_{k+1}$ from $H_k$ on a connected component $C$ of $H_k|_{U_k}$ are:\\

Call an edge $(i,j)\in\Edges{C\cap P_k,B_k}\cap\Edges{H}$ \emph{greedy} (or $\BehaviourGreedy$), if $b_{\PolymerOther_i,\PolymerOther_j}=\BehaviourGreedy$ and \emph{returning} (or $\BehaviourReturning$), if $b_{\PolymerOther_i,\PolymerOther_j}=\BehaviourReturning$. Likewise call a vertex $i\in P_k$ greedy, if all such $(i,j)$ are greedy and returning, if there exists such a non-same $(i,j)$. Finally, we say that a connected component $C$ of $H_k|_{U_k}$ is greedy, if all vertices in $C\cap P_k$ are greedy, and returning, if $C\cap P_k$ contains at least one returning vertex.\\

If $C$ is an $\BehaviourGreedy$ connected component of $H_k|_{U_k}$:
\begin{description}
 \item[\StepSynGreedyBoundary{}] \ActionSelect{} $C\cap S_k:= C\cap P_k$.
 \item[\StepSynGreedyIgnored{}] As $C\cap I_k=\emptyset$ \ActionRemove{} nothing.
 \item[\StepSynGreedyParent{}] For each $i\in C\cap S_k$, let $j_i:=\min\Set{j\in B_k: (i,j)\in \Edges{H_k}}$. \ActionSelect{} $(i,j_i)$.
 \item[\StepSynGreedyUncles{}] For each $i\in C\cap S_k$, \ActionRemove{} all $(i,j)\in\Edges{H_k}$ with $j_i\not=j\in B_k$.
 \item[\StepSynGreedyCousins{}] \ActionRemove{} all of $\Edges{C\cap S_k}\cap\Edges{H_k}$.
\end{description}

If $C$ is a $\BehaviourReturning$ connected component of $H_k|_{U_k}$:
\begin{description}
 \item[\StepSynRetBoundary{}] \ActionSelect{} $C\cap S_k:= \Set{i\in C\cap P_k: i\text{ is }\BehaviourReturning}$.
 \item[\StepSynRetIgnored{}] $C\cap I_k=\Set{i\in C\cap P_k: i\text{ is }\BehaviourGreedy}$. \ActionRemove{} all of $\Edges{B_k,(C\cap I_k)}\cap\Edges{H_k}$.
 \item[\StepSynRetParent{}] For each $i\in C\cap S_k$, let
 $j_i:=\min\Set{j\in B_k: (i,j)\in \Edges{H_k}\text{ is }\BehaviourReturning}$.
  \ActionSelect{} $(i,j_i)$.
 \item[\StepSynRetUnclesReturning{}] For each $i\in C\cap S_k$, \ActionRemove{} every $\BehaviourReturning$ $(i,j)\in\Edges{H_k}$ with $j_i\not=j\in B_k$.
 \item[\StepSynRetUnclesGreedy{}] For each $i\in C\cap S_k$, \ActionRemove{} every $\BehaviourGreedy$ $(i,j)\in\Edges{C\cap S_k,B_k}\cap\Edges{H_k}$.
 \item[\StepSynRetCousins{}] \ActionRemove{} all of $\Edges{C\cap S_k}\cap\Edges{H_k}$.
\end{description}
\end{Alg}
\section{Proof of the SCUBs}
\label{sec_proof}
This section contains the proofs of the SCUBs. The proof of the returning SCUB~\eqref{eq_scub_inhom_ret} is in section~\ref{sec_proof_ret}, where we combine the tree-operator framework with the returning partition scheme. Section~\ref{sec_proof_red} proves the reduced SCUB~\eqref{eq_scub_inhom_red}, by showing that it is a relaxation of the tree-operator underlying the returning SCUB. Section~\ref{sec_proof_synthetic} introduces the synthetic SCUB, building on the synthetic schemes. It is the same as the mixing SCUB~\eqref{eq_scub_mixing}, but in a form more amenable to proof and showing the optimality of the mixing SCUB within natural constraints.
\subsection{The returning SCUB}
\label{sec_proof_ret}
In this section we prove the returning SCUB~\eqref{eq_scub_inhom_ret}. In light of proposition~\ref{prop_multiplexing}, we only have to show a local control à la~\eqref{eq_multiplexing_localControl}, so that the corresponding form~\eqref{eq_multiplexing_supremum} equals $\LeopRet{}$.\\

\begin{subequations}
The returning scheme delivers such a control. Let $\RTVarLabel\in\SingletonTrees{\SchemeReturning}(\Cluster{\vec{\PolymerOther}})$. Property~\eqref{eq_singletonsReturningAllNonReturning} is equivalent to:
\begin{equation*}
 \ForAll i\in I\setminus\Set{o}:
 \NExists j,\mathfrak{j}\in\TreePath{o}{\Parent{i}}:
 \quad
  \GraphMetricOf{\RTVarLabel}{o}{j}
  <\GraphMetricOf{\RTVarLabel}{o}{\mathfrak{j}}
  <\GraphMetricOf{\RTVarLabel}{o}{i}
 \,\land\,
 \PolymerOther_i=\PolymerOther_j\not=\PolymerOther_\mathfrak{j}\,.
\end{equation*}
Therefore the polymer labels along the path $\TreePath{o}{i}$ form a \emph{lazy self-avoiding walk} on $\Support\vec{\PolymerOther}$. By~\eqref{eq_pinnedSeriesEscaping}, we only regard $\vec{\PolymerOther}$ with  $\Support\vec{\PolymerOther}\subseteq\PolymersMinusEscapeSet$. Hence, for $i\in I$, the \emph{avoided polymers} are a $\PolymerSet$-valued sequence starting with $\EscapePolymer$ and adding $\PolymerOther_j$ if we use a $\ReturningDifferent$ edge after $j$ on the path $\TreePath{o}{i}$. The sequence is always non-empty. Denote by $\EscapePolymer_i$ the last polymer in the sequence of vertex $i$. We have $\PolymerOther_i\IncompatibleTo\EscapePolymer_i$ by construction, whence $(\PolymerOther_i,\EscapePolymer_i)_{i\in I}\in\EscapePairSet^I$.\\

We focus on a vertex $i$ and its children in the tree $\RTVarLabel$. Property~\eqref{eq_singletonsReturningCompatibilityChildren} implies that the children's polymer labels form a compatible set. If we have an $\ReturningSame$ $i$, then $(\PolymerOther_i,\EscapePolymer_i)=(\PolymerOther_{\Parent{i}},\EscapePolymer_{\Parent{i}})$, while if we have a $\ReturningDifferent$ $i$, then $\EscapePolymer_i=\PolymerOther_{\Parent{i}}$. The extended labels encode the constraints to apply. Set $I:=\IntRange{0,n}$ with $o:=0$. We drop all other constraints on $\RTVarLabel$ and get
\begin{equation}\label{eq_multiplexEstimate}
  \Iverson{
   \RTVarSize\in
   \SingletonTrees{\SchemeReturning}(\Cluster{\vec{\PolymerOther}})
  }
 \le
  \prod_{i=0}^n
   c_{s_i}((\PolymerOther_i,\EscapePolymer_i),(\PolymerOther_{i_1},\EscapePolymer_{i_1}),\dotsc,(\PolymerOther_{i_{s_i}},\EscapePolymer_{i_{s_i}}))\,,
\end{equation}
where the $(s_i)_{i=0}^n$ denote the number of children of $i$ in $\RTVarSize$ and
\begin{multline}\label{eq_multiplexTerm}
 c_n((\PolymerOther_0,\EscapePolymer_0),\dotsc,(\PolymerOther_n,\EscapePolymer_n))
 \\:=
 \sum_{A\subseteq\IntRange{n}}
 \left(
 \prod_{i\in A}
  \Iverson{(\PolymerOther_i,\EscapePolymer_i)=(\PolymerOther_0,\EscapePolymer_0)}
 \prod_{i\not=j\in A}
  \Iverson{\PolymerOther_j\CompatibleTo\PolymerOther_i}
 \right)
 \qquad\qquad\qquad
 \\\times\left(
 \prod_{i\in\IntRange{n}\setminus A}
  \Iverson{\PolymerOther_i\in\IncompatibleOtherPolymers{\PolymerOther_0}\setminus\Set{\EscapePolymer_0},\EscapePolymer_i=\PolymerOther_0}
 \prod_{i\not=j\in\IntRange{n}\setminus A}
  \Iverson{\PolymerOther_j\CompatibleTo \PolymerOther_i}
 \right)\,.
\end{multline}
The $(c_n)_{n\in\NatNumZero}$ are the star-invariant functions needed in~\eqref{eq_multiplexing_localControl}.\\

Let $Y:=[0,\infty]^{\EscapePairSet}$. Denote by $\EscapeInjection$ the injection from $X$ into $Y$, \emph{multiplexing} values by ignoring the escape coordinate in $\EscapePairSet$. Define the operator $\LeopRetMulti{}: Y\to Y$ by
\begin{equation*}
 \LeopRetMulti{\EscapePair}(\vec{u})
 :=
 \sum_{n\ge 0}
  \frac{1}{\Factorial{n}}
 \sum_{(\vec{\PolymerOther},\vec{\EscapePolymer})\in\Set{\EscapePair}\times\EscapePairSet^n}
  c_n(\vec{\PolymerOther},\vec{\EscapePolymer})
 \prod_{i=1}^n u_{(\PolymerOther_i,\EscapePolymer_i)}\,.
\end{equation*}
We see that $c_n(\vec{\PolymerOther},\vec{\EscapePolymer})$ is the coefficient of $\prod_{i=1}^n u_{(\PolymerOther_i,\EscapePolymer_i)}$ in the product of the following terms:
\begin{multline*}
 (1+u_{\EscapePair})\\
 =
 \sum_{n\ge 0}
  \frac{1}{\Factorial{n}}
 \sum_{(\vec{\PolymerOther},\vec{\EscapePolymer})\in\Set{\EscapePair}\times\EscapePairSet^n}
  \prod_{i=1}^n
   \Iverson{(\PolymerOther_i,\EscapePolymer_i)=\EscapePair}
  \prod_{i\not=j=1}^n
   \Iverson{\PolymerOther_j\CompatibleTo\PolymerOther_i}
 \prod_{i=1}^n u_{(\PolymerOther_i,\EscapePolymer_i)}
\end{multline*}
and
\begin{multline*}
 \PartitionFunction{\IncompatibleOtherPolymers{\Polymer}\setminus\Set{\EscapePolymer}}(\vec{u}_{\Set{(\PolymerOther,\Polymer):\, \PolymerOther\in\IncompatibleOtherPolymers{\Polymer}\setminus\Set{\EscapePolymer}}})\\
 =
 \sum_{n\ge 0}
  \frac{1}{\Factorial{n}}
 \sum_{(\vec{\PolymerOther},\vec{\EscapePolymer})\in\Set{\EscapePair}\times\EscapePairSet^n}
  \prod_{i=1}^n
   \Iverson{\PolymerOther_i\in\IncompatibleOtherPolymers{\Polymer}\setminus\Set{\EscapePolymer},\EscapePolymer_i=\Polymer}
  \prod_{i\not=j=1}^n
   \Iverson{\PolymerOther_j\CompatibleTo\PolymerOther_i}
 \prod_{i=1}^n u_{(\PolymerOther_i,\EscapePolymer_i)}\,.
\end{multline*}
Therefore,
\begin{equation*}
 \LeopRetMulti{\EscapePair}(\vec{u})
 =(1+u_{\EscapePair})\,
 \PartitionFunction%
  {\IncompatibleOtherPolymers{\Polymer}\setminus\Set{\EscapePolymer}}
  (\vec{u}_{\Set{(\PolymerOther,\Polymer):\, \PolymerOther\in
   \IncompatibleOtherPolymers{\Polymer}\setminus\Set{\EscapePolymer}}}
  )\,.
\end{equation*}

Taking the supremum over $\EscapePolymer\in\IncompatibleOtherPolymers{\Polymer}$ as in~\eqref{eq_multiplexing_supremum} yields $\LeopRet{}$~\eqref{eq_scub_inhom_ret}.
\end{subequations}
\subsection{The reduced SCUB}
\label{sec_proof_red}
It is evident that the returning SCUB~\eqref{eq_scub_inhom_ret} is stronger than the reduced SCUB~\eqref{eq_scub_inhom_red}, by the inequality $\LeopRet{}(\vec{\mu})\le\LeopRed{}(\vec{\mu})$. Although one can establish~\eqref{eq_scub_inhom_red} without using cluster expansion by an inductive proof based on the fundamental identity~\eqref{eq_fi}, we want to show that $\LeopRed{}(\vec{\mu})$ comes from a tree-operator, too. If we relax~\eqref{eq_multiplexTerm} to
\begin{multline*}
 c_n((\PolymerOther_0,\EscapePolymer_0),\dotsc,(\PolymerOther_n,\EscapePolymer_n))
 \\\le
 \sum_{A\subseteq\IntRange{n}}
 \left(
 \prod_{i\in A}
  \Iverson{(\PolymerOther_i,\EscapePolymer_i)=(\PolymerOther_0,\EscapePolymer_0)}
 \right)\left(
 \prod_{i\in\IntRange{n}\setminus A}
  \Iverson{\PolymerOther_i\in\IncompatibleOtherPolymers{\PolymerOther_0}\setminus\Set{\EscapePolymer_0},\EscapePolymer_i=\PolymerOther_0}
 \right)\,,
\end{multline*}
then we see that $c_n(\vec{\PolymerOther},\vec{\EscapePolymer})$ is the coefficient of $\prod_{i=1}^n u_{(\PolymerOther_i,\EscapePolymer_i)}$ in the product of the terms $(1+u_{\EscapePair})$ and $\prod_{\PolymerOther\in\IncompatibleOtherPolymers{\Polymer}\setminus\Set{\EscapePolymer}}(1+u_{(\PolymerOther,\Polymer)})$. Thus, we have
\begin{equation*}
 \LeopRetMulti{\EscapePair}(\vec{u})
 \le(1+u_{\EscapePair})
 \prod_{
  \PolymerOther\in
  \IncompatibleOtherPolymers{\Polymer}\setminus\Set{\EscapePolymer}
  }
  (1+u_{(\PolymerOther,\Polymer)})\,.
\end{equation*}

Taking the supremum over $\EscapePolymer\in\IncompatibleOtherPolymers{\Polymer}$ as in~\eqref{eq_multiplexing_supremum} yields $\LeopRed{}$~\eqref{eq_scub_inhom_red}.

\subsection{The synthetic SCUB}
\label{sec_proof_synthetic}

This section presents the synthetic SCUB. It builds upon the synthetic scheme, whence it interpolates between the FP SCUB and the returning SCUB. The proof of the synthetic SCUB follows the same reasoning as the proof of the returning SCUB~\eqref{eq_scub_inhom_ret} in section~\ref{sec_proof_ret}, whence we omit it. Section~\ref{sec_proof_mixing} shows that the synthetic SCUB equals the much simpler mixing SCUB, whence the synthetic SCUB has no use outside the proofs.\\

We interpolate along each incompatible polymer pair between the greedy behaviour of $\LeopFP{}$, marked by $\BehaviourGreedy$, and the returning behaviour of $\LeopRet{}$, marked by $\BehaviourReturning$. For $\SyntheticVector\in\BehaviourSetIndexedBy{\EscapePairSet}$ and $\Polymer\in\PolymerSet$, let the \emph{returning incoming/outgoing incompatibles} be
\begin{equation}\label{eq_behaviourIncompatibles}
 \IncompatiblesBehaviourReturningIn{\SyntheticVector}{\Polymer}
 :=\Set{
  \EscapePolymer\in\IncompatibleOtherPolymers{\Polymer}:
  \,
  g_{(\EscapePolymer,\Polymer)}=\BehaviourReturning}
 \,\,\,\text{and}\,\,
 \IncompatiblesBehaviourReturningOut{\SyntheticVector}{\Polymer}
 :=\Set{
  \EscapePolymer\in\IncompatibleOtherPolymers{\Polymer}:
  \,
  g_{(\Polymer,\EscapePolymer)}=\BehaviourReturning}\,.
\end{equation}
For a fixed choice of behaviour $\SyntheticVector\in\BehaviourSetIndexedBy{\EscapePairSet}$, we get a SCUB of shape
\begin{multline}\label{eq_leop_synthetic}
 \LeopSynthetic{\SyntheticVector}{\Polymer}(\vec{\mu})
 :=\max\Bigl\{
 \Iverson{\IncompatiblesBehaviourReturningOut{\SyntheticVector}{\Polymer}\not=\IncompatibleOtherPolymers{\Polymer}}
 \PartitionFunction{\IncompatiblesBehaviourReturningOut{\SyntheticVector}{\Polymer}}
 (\vec{\mu})
 \PartitionFunction{
  \IncompatiblePolymers{\Polymer}
  \setminus\IncompatiblesBehaviourReturningOut{\SyntheticVector}{\Polymer}
 }
 (\vec{\mu})\,,
 \\
 \Iverson{\IncompatiblesBehaviourReturningOut{\SyntheticVector}{\Polymer}\not=\emptyset}
 (1+\mu_\Polymer)\sup_{\EscapePolymer\in\IncompatiblesBehaviourReturningIn{\SyntheticVector}{\Polymer}}
  \PartitionFunction{
   \IncompatiblesBehaviourReturningOut{\SyntheticVector}{\Polymer}
   \setminus\Set{\EscapePolymer}
  }
  (\vec{\mu})
  \PartitionFunction{
   \IncompatibleOtherPolymers{\Polymer}
   \setminus(\IncompatiblesBehaviourReturningOut{\SyntheticVector}{\Polymer}
   \cup\Set{\EscapePolymer})
  }(\vec{\mu})
 \Bigr\}\,.
\end{multline}
The \emph{synthetic} SCUB is the supremum over all choices of $\SyntheticVector$, with shape:
\begin{equation}\label{eq_scub_synthetic}
 \Bigl(
  \Exists\SyntheticVector\in\BehaviourSetIndexedBy{\EscapePairSet},
  \Exists\vec{\mu}\in[0,\infty[^\EscapePairSet:
  \quad
  \vec{\rho}\,\LeopSynthetic{\SyntheticVector}{}(\vec{\mu})
  <\vec{\mu}
 \Bigr)
 \Then\text{\eqref{eq_onePolymerPartitionRatioLimit} holds.}
\end{equation}

\subsection{The mixing SCUB}
\label{sec_proof_mixing}

We show that the mixing SCUB equals the synthetic SCUB from section~\ref{sec_proof_synthetic}. The interpolation between the greedy and returning behaviours in the synthetic scheme and SCUB is too fine. The optimal case in the synthetic SCUB is when the local outgoing behaviour is either only greedy or only returning. This proves the mixing SCUB via the tree-operator interpretation of the synthetic SCUB. A direct tree-operator interpretation of the mixing SCUB follows from a specialisation of the tree-operator underlying the synthetic SCUB. There seems to be no simpler partition scheme than the synthetic scheme, though, as the interactions between greedy and returning behaviour must always be taken into account.\\

This section's arguments suggest that the mixing SCUB is optimal among SCUBs building on tree-operators indexed by $\EscapePairSet$ and using only the general properties of a polymer system. In fact, $\EscapePair$ tells us that we are at a vertex labelled with $\Polymer$, with $\EscapePolymer$ as additional information. Using this extra information to remove $\EscapePolymer$ for consideration as a child's label seems the only possible thing to do to reduce the number of admissible offspring labellings.

\begin{Prop}\label{prop_synthetic_improvement}
Let $\mu\in[0,\infty[^\PolymerSet$ and $\SyntheticVector\in\BehaviourSetIndexedBy{\EscapePairSet}$. Choose $\Polymer\in\PolymerSet$ and suppose that $\IncompatiblesBehaviourReturningOut{\SyntheticVector}{\Polymer}\not=\IncompatibleOtherPolymers{\Polymer}$. Let $\SyntheticVector'$ equal $\SyntheticVector$ except all the values indexed by $\Set{(\Polymer,\EscapePolymer): \EscapePolymer\in\IncompatibleOtherPolymers{\Polymer}}$ changed to $\BehaviourGreedy$. We have
\begin{equation}\label{eq_synthetic_improvement}
 \LeopSynthetic{\SyntheticVector}{}(\vec{\mu})
 \ge
 \LeopSynthetic{\SyntheticVector'}{}(\vec{\mu})\,.
\end{equation}
\end{Prop}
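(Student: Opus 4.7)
The strategy is to compare $\LeopSynthetic{\SyntheticVector}{}(\vec{\mu})$ and $\LeopSynthetic{\SyntheticVector'}{}(\vec{\mu})$ coordinate by coordinate in $\PolymerSet$, exploiting the fact that the modification $\SyntheticVector\mapsto\SyntheticVector'$ only touches the behaviour of edges outgoing from $\Polymer$. First I would record the effect on the auxiliary sets~\eqref{eq_behaviourIncompatibles}: the outgoing returning set $\IncompatiblesBehaviourReturningOut{\SyntheticVector'}{\Polymer}$ becomes empty; every other outgoing returning set is unchanged; the incoming returning set at a $\PolymerOther'\in\IncompatibleOtherPolymers{\Polymer}$ may lose the single element $\Polymer$, i.e., $\IncompatiblesBehaviourReturningIn{\SyntheticVector'}{\PolymerOther'}\subseteq\IncompatiblesBehaviourReturningIn{\SyntheticVector}{\PolymerOther'}$; and every other incoming returning set is identical under $\SyntheticVector$ and $\SyntheticVector'$.

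For the coordinate $\Polymer$, the hypothesis $\IncompatiblesBehaviourReturningOut{\SyntheticVector}{\Polymer}\not=\IncompatibleOtherPolymers{\Polymer}$ forces in particular $\IncompatibleOtherPolymers{\Polymer}\not=\emptyset$, so the first indicator in~\eqref{eq_leop_synthetic} stays alive under $\SyntheticVector'$ while the second one is killed by the emptied outgoing set. Hence $\LeopSynthetic{\SyntheticVector'}{\Polymer}(\vec{\mu})=\PartitionFunction{\emptyset}(\vec{\mu})\,\PartitionFunction{\IncompatiblePolymers{\Polymer}}(\vec{\mu})=\LeopFP{\Polymer}(\vec{\mu})$. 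The key step is then to bound the first term of $\LeopSynthetic{\SyntheticVector}{\Polymer}(\vec{\mu})$ from below by $\LeopFP{\Polymer}(\vec{\mu})$. Writing $A:=\IncompatiblesBehaviourReturningOut{\SyntheticVector}{\Polymer}$ and $B:=\IncompatiblePolymers{\Polymer}\setminus A$, one needs the super-multiplicativity under disjoint unions $\PartitionFunction{A}(\vec{\mu})\,\PartitionFunction{B}(\vec{\mu})\ge\PartitionFunction{A\uplus B}(\vec{\mu})$. This drops out of the combinatorial interpretation of~\eqref{eq_partitionFunction}: every compatible subset of $A\uplus B$ factors uniquely as a compatible subset of $A$ times one of $B$, while the product on the left additionally enumerates pairs whose union fails cross-compatibility, and all summands are non-negative since $\vec{\mu}\ge\vec{0}$. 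This is the one genuinely non-routine ingredient, and it is elementary.

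For a coordinate $\PolymerOther'\not=\Polymer$, the first term of $\LeopSynthetic{\cdot}{\PolymerOther'}(\vec{\mu})$ is unaffected, because the outgoing returning set at $\PolymerOther'$ does not change. The supremum in the second term is taken over the possibly smaller set $\IncompatiblesBehaviourReturningIn{\SyntheticVector'}{\PolymerOther'}$, hence weakly decreases; the accompanying indicator on the outgoing set is likewise unchanged. Taking the maximum preserves the inequality, yielding $\LeopSynthetic{\SyntheticVector'}{\PolymerOther'}(\vec{\mu})\le\LeopSynthetic{\SyntheticVector}{\PolymerOther'}(\vec{\mu})$, and in combination with the case $\Polymer$ this gives~\eqref{eq_synthetic_improvement}. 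The only real obstacle is the super-multiplicativity inequality invoked at $\Polymer$; every other step is bookkeeping of which sets get smaller and which sets stay the same.
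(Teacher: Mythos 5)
Your proof is correct and follows essentially the same strategy as the paper: a coordinate-wise comparison that invokes the submultiplicativity $\PartitionFunction{A}(\vec{\mu})\,\PartitionFunction{B}(\vec{\mu})\ge\PartitionFunction{A\uplus B}(\vec{\mu})$ (the paper's Lemma~\ref{lem_submultiplicativity}) at coordinate $\Polymer$, and mere bookkeeping of the incoming/outgoing returning sets at all other coordinates. Your explicit identification $\LeopSynthetic{\SyntheticVector'}{\Polymer}(\vec{\mu})=\LeopFP{\Polymer}(\vec{\mu})$ after the second term's indicator vanishes is a small streamlining of the paper's term-by-term comparison, but it is the same argument.
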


\begin{Lem}[{\cite[(8.21)]{Scott_Sokal__TheRepulsiveLatticeGasTheIndependentSetPolynomialAndTheLovaszLocalLemma__JSP_2005}}]
\label{lem_submultiplicativity}
The function $\PartitionFunction{}$ is submultiplicative in space for non-negative real fugacities:
\begin{equation}\label{eq_submultiplicativity}
 \forall\FiniteVolume_1,\FiniteVolume_2\FiniteSubsetOf\PolymerSet,
 \forall\vec{\mu}\in[0,\infty[^\EscapePairSet:
 \quad
 \PartitionFunction{\FiniteVolume_1}(\vec{\mu})
 \PartitionFunction{\FiniteVolume_2}(\vec{\mu})
 \ge\PartitionFunction{\FiniteVolume_1\uplus\FiniteVolume_2}(\vec{\mu})\,,
\end{equation}
with equality, iff $\FiniteVolume_i=\emptyset$ or $\vec{\mu}_{\FiniteVolume_i}=\vec{0}$, for $i\in\Set{1,2}$.
\end{Lem}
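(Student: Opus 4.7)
The plan is to prove the inequality by direct combinatorial comparison of both sides as weighted generating functions over compatible (independent) subsets. First, I would unfold~\eqref{eq_partitionFunction}: since $\IncompatibleTo$ is reflexive, any tuple $\vec{\PolymerOther}$ with a repeated entry contributes zero, so dividing the sum over ordered $n$-tuples of distinct mutually compatible polymers by $n!$ identifies
\begin{equation*}
 \PartitionFunction{\FiniteVolume}(\vec{\mu})
 =\sum_{A\subseteq\FiniteVolume,\, A\text{ compatible}}
   \prod_{\PolymerOther\in A}\mu_\PolymerOther\,,
\end{equation*}
a sum of non-negative terms when $\vec{\mu}\ge\vec{0}$, which is the key structural input.

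Second, since $\uplus$ denotes disjoint union, distributing the product yields
\begin{equation*}
 \PartitionFunction{\FiniteVolume_1}(\vec{\mu})\,\PartitionFunction{\FiniteVolume_2}(\vec{\mu})
 =\sum_{\substack{A_1\subseteq\FiniteVolume_1\text{ compat}\\ A_2\subseteq\FiniteVolume_2\text{ compat}}}
   \prod_{\PolymerOther\in A_1\uplus A_2}\mu_\PolymerOther\,,
\end{equation*}
whereas $\PartitionFunction{\FiniteVolume_1\uplus\FiniteVolume_2}(\vec{\mu})$ is indexed by compatible subsets $B$ of the union. The splitting $B\mapsto(B\cap\FiniteVolume_1,B\cap\FiniteVolume_2)$ is a weight-preserving injection of the second index set into the first, because every subset of a compatible set is compatible. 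Its image consists of exactly those pairs with $A_1\uplus A_2$ compatible; the omitted pairs, namely those containing a cross-incompatibility $\PolymerOther_1\IncompatibleTo\PolymerOther_2$ with $\PolymerOther_i\in A_i$, contribute non-negative extra mass on the left, giving~\eqref{eq_submultiplicativity}.

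For the equality characterisation, the easy direction reduces to two trivialities. If $\FiniteVolume_i=\emptyset$, then $\PartitionFunction{\FiniteVolume_i}(\vec{\mu})=1$ and $\FiniteVolume_1\uplus\FiniteVolume_2=\FiniteVolume_{3-i}$, making both sides equal. If $\vec{\mu}_{\FiniteVolume_i}=\vec{0}$, then only the empty subset contributes on $\FiniteVolume_i$ on both sides, so they again collapse to $\PartitionFunction{\FiniteVolume_{3-i}}(\vec{\mu})$. The main obstacle is the converse: assuming $\FiniteVolume_1,\FiniteVolume_2$ non-empty and both $\vec{\mu}_{\FiniteVolume_i}\not=\vec{0}$, I must produce at least one extra pair $(A_1,A_2)$ of strictly positive weight with $A_1\uplus A_2$ incompatible. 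This demands a cross-incompatibility $\PolymerOther_1\IncompatibleTo\PolymerOther_2$ between $\FiniteVolume_1$ and $\FiniteVolume_2$, combined with $\mu_{\PolymerOther_i}>0$; in the paper's intended application within proposition~\ref{prop_synthetic_improvement}, where $\FiniteVolume_1$ and $\FiniteVolume_2$ partition a non-trivial incompatibility neighbourhood of a polymer $\Polymer$, one uses the ambient cluster structure to locate such a cross-incompatibility with non-zero fugacity.
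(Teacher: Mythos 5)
Your proof of the inequality is correct and is the standard one: using reflexivity of $\IncompatibleTo$ to kill repeated entries in~\eqref{eq_partitionFunction}, rewrite $\PartitionFunction{\FiniteVolume}(\vec{\mu})$ as the generating polynomial of compatible subsets, expand the product, and note that $B\mapsto(B\cap\FiniteVolume_1,B\cap\FiniteVolume_2)$ injects the index set of $\PartitionFunction{\FiniteVolume_1\uplus\FiniteVolume_2}$ into that of $\PartitionFunction{\FiniteVolume_1}\PartitionFunction{\FiniteVolume_2}$, the omitted pairs contributing non-negative extra mass. The paper itself offers no proof of this lemma --- it is recalled by citation to Scott--Sokal --- so there is no internal argument to compare against, but your derivation of the inequality is exactly what one would expect.

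Your hesitation about the converse of the equality claim is well-founded, and it deserves a sharper conclusion than a deferral to ``the intended application.'' Your own computation shows that the surplus on the left is the sum of $\prod\mu$ over pairs $(A_1,A_2)$ of compatible subsets whose union carries a cross-incompatibility, so equality holds if and only if there is no pair $\PolymerOther_1\in\FiniteVolume_1$, $\PolymerOther_2\in\FiniteVolume_2$ with $\PolymerOther_1\IncompatibleTo\PolymerOther_2$ and $\mu_{\PolymerOther_1}\mu_{\PolymerOther_2}>0$. The condition displayed in the lemma is strictly stronger and the stated ``iff'' is in fact false: take $\FiniteVolume_1=\Set{\Polymer}$, $\FiniteVolume_2=\Set{\PolymerOther}$ disjoint with $\Polymer\CompatibleTo\PolymerOther$ and both fugacities positive; then $\PartitionFunction{\FiniteVolume_1}\PartitionFunction{\FiniteVolume_2}=(1+\mu_\Polymer)(1+\mu_\PolymerOther)=\PartitionFunction{\FiniteVolume_1\uplus\FiniteVolume_2}$ even though neither alternative of the lemma is met. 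You should say this plainly rather than trying to rescue the converse via the application context. Since proposition~\ref{prop_synthetic_improvement} only invokes the inequality, the misstated equality clause is harmless downstream, but it should be corrected to the cross-incompatibility criterion you already identified.
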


\begin{proof}[Proof of proposition~\ref{prop_synthetic_improvement}.]
A change between $\LeopSynthetic{\SyntheticVector}{}(\vec{\mu})$ and $\LeopSynthetic{\SyntheticVector'}{}(\vec{\mu})$ may only occur on $\IncompatiblePolymers{\Polymer}$. If $\IncompatiblesBehaviourReturningOut{\SyntheticVector}{\Polymer}=\IncompatibleOtherPolymers{\Polymer}$, then $\SyntheticVector=\SyntheticVector'$ and nothing changes. Assume that $\IncompatiblesBehaviourReturningOut{\SyntheticVector}{\Polymer}\not=\IncompatibleOtherPolymers{\Polymer}$. In the definition of the synthetic SCUB~\eqref{eq_leop_synthetic}, let $A(\SyntheticVector,\Polymer,\vec{\mu})$ be the lefthand side of the maximum and let $B(\SyntheticVector,\Polymer,\EscapePolymer,\vec{\mu})$ denote the term in the supremum. There are two cases:\\

Case $\Polymer$: We have
$\IncompatiblesBehaviourReturningOut{\SyntheticVector}{\Polymer}\supseteq\IncompatiblesBehaviourReturningOut{\SyntheticVector'}{\Polymer}=\emptyset$ and $\IncompatiblesBehaviourReturningIn{\SyntheticVector}{\Polymer}=\IncompatiblesBehaviourReturningIn{\SyntheticVector'}{\Polymer}$. Submultiplicaticity~\eqref{eq_submultiplicativity} implies that
\begin{equation*}
 A(\SyntheticVector,\Polymer,\vec{\mu})
 \ge A(\SyntheticVector',\Polymer,\vec{\mu})
 \quad\text{ and }\quad
 B(\SyntheticVector,\Polymer,\EscapePolymer,\vec{\mu})
 \ge B(\SyntheticVector',\Polymer,\EscapePolymer,\vec{\mu})\,.
\end{equation*}
Thus, $\LeopSynthetic{\SyntheticVector}{\Polymer}(\vec{\mu})\ge\LeopSynthetic{\SyntheticVector'}{\Polymer}(\vec{\mu})$.\\

Case $\PolymerOther\in\IncompatibleOtherPolymers{\Polymer}$: We have $\IncompatiblesBehaviourReturningOut{\SyntheticVector}{\Polymer}=\IncompatiblesBehaviourReturningOut{\SyntheticVector'}{\Polymer}$ and $\IncompatiblesBehaviourReturningIn{\SyntheticVector}{\PolymerOther}\supseteq\IncompatiblesBehaviourReturningIn{\SyntheticVector'}{\PolymerOther}$. Hence, only the supremum over the $B(\SyntheticVector,\PolymerOther,\EscapePolymer,\vec{\mu})$ can decrease and $\LeopSynthetic{\SyntheticVector}{\PolymerOther}(\vec{\mu})\ge\LeopSynthetic{\SyntheticVector'}{\PolymerOther}(\vec{\mu})$.
\end{proof}

\begin{Prop}\label{prop_synthetic_equals_mixing}
The synthetic and the mixing SCUB are the same.
\end{Prop}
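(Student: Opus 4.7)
The plan is to prove the equality of the admissible $\vec{\rho}$-domains of the two SCUBs by matching synthetic witnesses with mixing witnesses in both directions.

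For the inclusion ``mixing $\Rightarrow$ synthetic'', given a mixing witness $(\MixingVector,\vec{\mu})$ I would define the canonical lift $\SyntheticVector_{\EscapePair}:=\MixingVector_\Polymer$ on every incompatible pair. Under this lift, $\IncompatiblesBehaviourReturningOut{\SyntheticVector}{\Polymer}$ equals $\IncompatibleOtherPolymers{\Polymer}$ when $\MixingVector_\Polymer=\BehaviourReturning$ and $\emptyset$ when $\MixingVector_\Polymer=\BehaviourGreedy$. Computing the two terms in the maximum defining $\LeopSynthetic{\SyntheticVector}{\Polymer}(\vec{\mu})$---one annihilated by the Iverson bracket in each case---collapses the expression to $\LeopFP{\Polymer}(\vec{\mu})$ or $\LeopRet{\Polymer}(\vec{\mu})$ respectively, matching $\LeopMixing{\MixingVector}{\Polymer}(\vec{\mu})$, so the synthetic condition follows with the same $\vec{\mu}$.

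For the reverse inclusion, starting from a synthetic witness $(\SyntheticVector,\vec{\mu})$, I would iteratively apply Proposition~\ref{prop_synthetic_improvement} at every polymer $\Polymer$ whose outgoing behavior $\IncompatiblesBehaviourReturningOut{\SyntheticVector}{\Polymer}$ is a proper nonempty subset of $\IncompatibleOtherPolymers{\Polymer}$, switching all its outgoing labels to $\BehaviourGreedy$. Each such step only decreases $\LeopSynthetic{\SyntheticVector}{}(\vec{\mu})$ pointwise, preserving the strict inequality defining the synthetic witness condition. The resulting configuration $\SyntheticVector^*$ has uniform outgoing behavior at every polymer and therefore agrees with the canonical lift of the mixing vector defined by $\MixingVector_\Polymer:=\BehaviourReturning$ iff $\IncompatiblesBehaviourReturningOut{\SyntheticVector^*}{\Polymer}=\IncompatibleOtherPolymers{\Polymer}$. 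The identity established in the first direction then yields $\LeopSynthetic{\SyntheticVector^*}{\Polymer}(\vec{\mu})=\LeopMixing{\MixingVector}{\Polymer}(\vec{\mu})$, so $(\MixingVector,\vec{\mu})$ witnesses the mixing SCUB.

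The main obstacle I anticipate is the returning-branch of the ``mixing $\Rightarrow$ synthetic'' identity: confirming that for the canonical lift $\LeopSynthetic{\SyntheticVector}{\Polymer}(\vec{\mu})$ equals $\LeopRet{\Polymer}(\vec{\mu})$ when $\MixingVector_\Polymer=\BehaviourReturning$ requires unpacking how the supremum over $\IncompatiblesBehaviourReturningIn{\SyntheticVector}{\Polymer}$ matches the supremum over $\IncompatibleOtherPolymers{\Polymer}$, and showing that the product $\PartitionFunction{\IncompatiblesBehaviourReturningOut{\SyntheticVector}{\Polymer}\setminus\Set{\EscapePolymer}}(\vec{\mu})\,\PartitionFunction{\IncompatibleOtherPolymers{\Polymer}\setminus(\IncompatiblesBehaviourReturningOut{\SyntheticVector}{\Polymer}\cup\Set{\EscapePolymer})}(\vec{\mu})$ collapses to the single factor $\PartitionFunction{\IncompatibleOtherPolymers{\Polymer}\setminus\Set{\EscapePolymer}}(\vec{\mu})$ appearing in $\LeopRet{\Polymer}$. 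This bookkeeping, together with a monotone pointwise-limit argument needed to close the iterative reduction via Proposition~\ref{prop_synthetic_improvement} in locally infinite polymer systems, forms the technical core of the proof.
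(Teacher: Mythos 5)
Your proposal is essentially the paper's own argument: the forward direction uses the same canonical lift $\EscapeInjection(\MixingVector)_{\EscapePair}:=\MixingVector_\Polymer$ to identify $\LeopSynthetic{\EscapeInjection(\MixingVector)}{}$ with $\LeopMixing{\MixingVector}{}$, and the reverse direction iterates Proposition~\ref{prop_synthetic_improvement} over $\PolymerSet$ to reach a uniform-outgoing configuration, then appeals to the forward identity via the projection $\MixingVector_\Polymer:=\BehaviourReturning$ iff $\IncompatiblesBehaviourReturningOut{\SyntheticVector^*}{\Polymer}=\IncompatibleOtherPolymers{\Polymer}$. You additionally flag two technical points the paper treats tersely (matching the supremum over $\IncompatiblesBehaviourReturningIn{\SyntheticVector}{\Polymer}$ with the supremum over $\IncompatibleOtherPolymers{\Polymer}$, and a pointwise-limit argument for locally infinite polymer systems in place of the paper's ``induction over $\PolymerSet$''), but the underlying strategy is identical.
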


\begin{proof}
Let $\EscapeInjection: \BehaviourSetIndexedBy{\PolymerSet}\to\BehaviourSetIndexedBy{\EscapePairSet}$, with $\EscapeInjection(\MixingVector)_{\EscapePair}:=\MixingVector_\Polymer$. If $\vec{\rho}$ fulfils the mixing SCUB, then it fulfils the synthetic SCUB: $\LeopSynthetic{\EscapeInjection(\MixingVector)}{} = \LeopMixing{\MixingVector}{}$.\\

Let $u:\BehaviourSetIndexedBy{\EscapePairSet}\to\BehaviourSetIndexedBy{\PolymerSet}$, with $u(\SyntheticVector)_\Polymer = \BehaviourGreedy$, if $\IncompatiblesBehaviourReturningOut{\SyntheticVector}{\Polymer}\not=\IncompatibleOtherPolymers{\Polymer}$ and $u(\SyntheticVector)_\Polymer = \BehaviourReturning$, if $\IncompatiblesBehaviourReturningOut{\SyntheticVector}{\Polymer}=\IncompatibleOtherPolymers{\Polymer}$. Proposition~\ref{prop_synthetic_improvement} and induction over $\PolymerSet$ shows that $\LeopSynthetic{\SyntheticVector}{} \ge \LeopMixing{u(\SyntheticVector)}{}$.\\

Thus, the optimal interpolation between greedy and returning behaviour in the synthetic scheme only happens on $\EscapeInjection(\BehaviourSetIndexedBy{\PolymerSet})$.
\end{proof}
\section*{Acknowledgements}
\label{sec_acknowledgements}
I am grateful to Roberto Fern\'{a}ndez for the time he took to explain me his work and encourage my attempts at extending it. The visits at the University of Utrecht have been supported by the RGLIS short visit grants 4076 and 4446 from the European Science Foundation (ESF). This work has been also supported by the Austrian Science Fund (FWF), project W1230-N13. I also want to thank Maria Eichlseder for help with the graphics. I am thankful of the comments of my referees, pushing me towards an improved exposition of the motivation and high-level overviews.

\section{Additional material}
\label{sec_additional}
This section contains proofs and statements not included in the official paper.

\subsection{Graphical explanations II}
\label{sec_graphical_explanations_II}

See figure~\ref{fig_explanation}.

\begin{figure}
\subfloat[][The polymer $\Polymer$ and the graph of its incompatible neighbourhood $(\IncompatiblePolymers{\Polymer},\IncompatibleTo)$.]{
\label{fig_explanation_incompatible}
\begin{tikzpicture}[scale=1.8]
  \node[vertex] (gam) at (0,0) {$\gamma$};
  \node[vertex] (x1) at (1,1) {$\xi_1$};
  \node[vertex] (x2) at (1,1/3) {$\xi_2$};
  \node[vertex] (x3) at (1,-1/3) {$\xi_3$};
  \node[vertex] (eps) at (1,-1) {$\varepsilon$};

  \draw (x1) -- (x2) -- (gam) -- (eps) -- (x3) -- (gam) -- (x1);

  \selfedge{gam}{180};
  \foreach \x in {x1, x2, x3, eps} {\selfedge{\x}{0};}
\end{tikzpicture}
}
\subfloat[][Constraints in $\displaystyle\LeopKP{\Polymer}(\vec{\mu})$.]{
\label{fig_explanation_kp}
\begin{tikzpicture}[scale=1.8]
  \node[vertex] (gam) at (0,0) {$\gamma$};
  \node[vertex] (x1) at (1,1) {$\xi_1$};
  \node[vertex] (x2) at (1,1/3) {$\xi_2$};
  \node[vertex] (x3) at (1,-1/3) {$\xi_3$};
  \node[vertex] (eps) at (1,-1) {$\varepsilon$};

  \draw (x1) -- (x2) -- (gam) -- (eps) -- (x3) -- (gam) -- (x1);

  \selfedge{gam}{180};
  \foreach \x in {x1, x2, x3, eps} {\selfedge{\x}{0};}
\end{tikzpicture}
}
\subfloat[][Constraints in $\displaystyle\LeopDob{\Polymer}(\vec{\mu})$.]{
\label{fig_explanation_dob}
\begin{tikzpicture}[scale=1.8]
  \node[vertex] (gam) at (0,0) {$\gamma$};
  \node[vertex] (x1) at (1,1) {$\xi_1$};
  \node[vertex] (x2) at (1,1/3) {$\xi_2$};
  \node[vertex] (x3) at (1,-1/3) {$\xi_3$};
  \node[vertex] (eps) at (1,-1) {$\varepsilon$};

  \draw (x1) -- (x2) -- (gam) -- (eps) -- (x3) -- (gam) -- (x1);

  \selfedgemark{gam}{180};
  \foreach \x in {x1, x2, x3, eps} {\selfedgemark{\x}{0};}
\end{tikzpicture}
}
\\
\subfloat[][Constraints in $\displaystyle\LeopFP{\Polymer}(\vec{\mu})$.]{
\label{fig_explanation_fp}
\begin{tikzpicture}[scale=1.8]
  \node[vertex] (gam) at (0,0) {$\gamma$};
  \node[vertex] (x1) at (1,1) {$\xi_1$};
  \node[vertex] (x2) at (1,1/3) {$\xi_2$};
  \node[vertex] (x3) at (1,-1/3) {$\xi_3$};
  \node[vertex] (eps) at (1,-1) {$\varepsilon$};

  \draw[marked] (x1) -- (x2) -- (gam) -- (eps) -- (x3) -- (gam) -- (x1);

  \selfedgemark{gam}{180};
  \foreach \x in {x1, x2, x3, eps} {\selfedgemark{\x}{0};}
\end{tikzpicture}
}
\subfloat[][Constraints in $\displaystyle\LeopRed{\Polymer}(\vec{\mu})$.]{
\label{fig_explanation_red}
\begin{tikzpicture}[scale=1.8]
  \draw[marked] (1,-1) +(-.15,-.15) -- +(.15,.15)
                      +(-.15,.15) -- +(.15,-.15);
  \draw (1,-1) node [circle, fill=white, inner sep=2.5pt] {};

  \node[vertex] (gam) at (0,0) {$\gamma$};
  \node[vertex] (x1) at (1,1) {$\xi_1$};
  \node[vertex] (x2) at (1,1/3) {$\xi_2$};
  \node[vertex] (x3) at (1,-1/3) {$\xi_3$};
  \node[vertex] (eps) at (1,-1) {$\varepsilon$};
  
  \draw (gam) -- (x1);
  \draw (gam) -- (x2);
  \draw (gam) -- (x3);
  
  \draw (x1) -- (x2);
  \draw[marked] (gam) -- (eps) -- (x3);

  \selfedgemark{gam}{180};
  \foreach \x in {x1, x2, x3, eps} {\selfedgemark{\x}{0};}
\end{tikzpicture}
}
\subfloat[][Constraints in $\displaystyle\LeopRet{\Polymer}(\vec{\mu})$.]{
\label{fig_explanation_ret}
\begin{tikzpicture}[scale=1.8]
  \draw[marked] (1,-1) +(-.15,-.15) -- +(.15,.15)
                      +(-.15,.15) -- +(.15,-.15);
  \draw (1,-1) node [circle, fill=white, inner sep=2.5pt] {};
  
  \node[vertex] (gam) at (0,0) {$\gamma$};
  \node[vertex] (x1) at (1,1) {$\xi_1$};
  \node[vertex] (x2) at (1,1/3) {$\xi_2$};
  \node[vertex] (x3) at (1,-1/3) {$\xi_3$};
  \node[vertex] (eps) at (1,-1) {$\varepsilon$};

  \draw (gam) -- (x1);
  \draw (gam) -- (x2);
  \draw (gam) -- (x3);

  \draw[marked] (x1) -- (x2);
  \draw[marked] (gam) -- (eps) -- (x3);

  \selfedgemark{gam}{180};
  \foreach \x in {x1, x2, x3, eps} {\selfedgemark{\x}{0};}
\end{tikzpicture}
}
\caption[Constraints in the SCUBs]{(Colour online) Constraints in the different SCUBs at a leaf labelled with $\Polymer$. We assume the incompatible neighbourhood of $\Polymer$ to look as in~\subref{fig_explanation_incompatible}. The incompatibilities taken into account by the SCUB are marked by dotted red lines. In figures~\subref{fig_explanation_red} and~\subref{fig_explanation_ret} the polymer $\EscapePolymer$ is the polymer forbidden at the current step. The SCUB sums over all weighted depth $1$ trees rooted at $\Polymer$, which do not contain two vertices labelled with polymers with a dotted red line in the corresponding diagram. More dotted red lines imply less trees, a more relaxed SCUB and bigger admissible $\vec{\rho}$ .}
\label{fig_explanation}
\end{figure}
\subsection{Proof of Penrose's theorem}
\label{sec_penroseProof}
\begin{proof}[Proof of theorem~\ref{thm_penroseIdentity}]
Let $(x_e)_{e\in E}\in\ComplexNum^E$. Then
\begin{align*}
 \sum_{H\in\SpanningSubgraphs{G}} \prod_{e\in\Edges{H}} x_e
 &={} \sum_{\Tree\in\SpanningTrees{G}} \prod_{e\in\Edges{\Tree}} x_e
  \sum_{F\subseteq\Edges{\SchemeAnonymous(\Tree)}\setminus\Edges{\Tree}}
  \prod_{f\in F} x_f\\
 &={} \sum_{\Tree\in\SpanningTrees{G}} \prod_{e\in\Edges{\Tree}} x_e
  \prod_{f\in\Edges{\SchemeAnonymous(\Tree)}\setminus\Edges{\Tree}} (1 + x_f)\,.
\end{align*}
If we choose $\vec{x}=-\vec{1}$, then in the all the contributions from trees with $\Tree\not=\SchemeAnonymous(\Tree)$ cancel and, for every $\Tree\in\SingletonTrees{\SchemeAnonymous}(G)$, the contribution is $(-1)^{\Cardinality{V}-1}$.
\end{proof}
\subsection{On the impossibility of globally excluding a fixed neighbour}
\label{sec_globallyExcludedNeighbour}
The holy grail would be a scheme excluding a globally fixed neighbour of each vertex. This is impossible, though. The counterexample is a polymer system $(\PolymerSet,\IncompatibleTo)$ isomorph (ignoring loops) to a large circle of size $N$. Take the cluster formed by $\vec{\PolymerOther}\in\PolymerSet^N$ with $\Support\vec{\PolymerOther}=\PolymerSet$, which is again isomorph to the circle of size $N$. It has $N$ spanning trees and only one of them can be expanded to the full circle. This means, that every partition scheme has $N-1$ singleton trees on $\Cluster{\vec{\PolymerOther}}$. Fix the root and the globally forbidden neighbours. Then every finite family approximation not returning to forbidden neighbours contains at most one of those $N-1$ trees. Hence no partition scheme with globally excluded neighbours exists.
\subsection{A fixpoint theorem}
\label{sec_fixPoint}
In this section we present an adaption of a well known fixpoint theorem on lattices by Tarski~\cite{Tarski__ALatticeTheoreticalFixpointTheoremAndItsApplications__PacJM_1955}. For some countable set of labels $\LabelSpace$, let $X := [0,\infty]^\LabelSpace$ be the \emph{lattice} with \emph{partial order} $\vec{x}\le\vec{y}$, that is coordinate-wise comparison of the vectors: $\ForAll l\in\LabelSpace: x_l\le y_l$.\\

The \emph{supremum} and \emph{infimum} of a subset $A$ of $X$ are defined by $\sup A:=(\sup a_l: \vec{a}\in A)_{l\in\LabelSpace}$ and $\inf A:=(\inf a_l: \vec{a}\in A)_{l\in\LabelSpace}$ respectively.

We say that a function $\phi_X\to X$ \emph{preserves the order}, if
\begin{equation}\label{eq_preserveOrder}
 \ForAll \vec{x},\vec{y}\in X:\qquad
 \vec{x}\le\vec{y}\Then \phi(\vec{x})\le\phi(\vec{y})\,.
\end{equation}
The set of \emph{decreasing points} of $\phi$ is
\begin{equation}\label{eq_phi_decreasing}
 D_\phi:=\Set{\vec{y}\in X:\phi(\vec{y})\le\vec{y})}\,.
\end{equation}
A sequence $(\vec{y}^{(n)})_{n\in\NatNum}$ of elements of $X$ is said to be \emph{non-decreasing} and \emph{non-increasing}, if $\ForAll n\in\NatNum: \vec{y}^{(n)}\le\vec{y}^{(n+1)}$ and $\vec{y}^{(n)}\ge\vec{y}^{(n+1)}$ respectively.

\begin{Prop}[after {\cite[proposition 8]{Fernandez_Procacci__ClusterExpansionForAbstractPolymerModels_NewBoundsFromAnOldApproach__CMP_2007}}]
\label{prop_treeConditionLatticeLimit}
Let $\phi: X\to X$ be order-pre\-serving. If there is a $\vec{\mu}\in D_\phi$, then $(\phi^n(\vec{0}))_{n\in\NatNumZero}$ is non-decreasing with limit $\vec{\rho}^\star$, $(\phi^n(\vec{\mu}))_{n\in\NatNumZero}$ is non-increasing with limit $\vec{\mu}^\star$ and we have
\begin{equation}\label{eq_treeConditionLatticeLimit}
 \phi(\vec{\rho}^\star)
 =\vec{\rho}^\star
 \le\vec{\mu}^\star
 = \phi(\vec{\mu}^\star)\,.
\end{equation}
In particular, $\vec{\rho}^\star\in[0,\infty[^\LabelSpace$, iff $D_\phi\cap[0,\infty[^\LabelSpace\not=\emptyset$.
\end{Prop}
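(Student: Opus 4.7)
The plan is to construct the two limits by a monotone iteration argument on the lattice $X=[0,\infty]^{\LabelSpace}$, identify them as fixpoints of $\phi$, and then derive the iff statement by comparing with $\vec{\mu}$.

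First I would establish monotonicity of the two iterated sequences. Since $\vec{0}$ is the minimum of $X$, trivially $\vec{0}\le\phi(\vec{0})$, and an induction using order-preservation~\eqref{eq_preserveOrder} yields $\phi^n(\vec{0})\le\phi^{n+1}(\vec{0})$ for every $n\in\NatNumZero$. Dually, $\vec{\mu}\in D_\phi$ gives $\phi(\vec{\mu})\le\vec{\mu}$, and the same inductive argument produces $\phi^{n+1}(\vec{\mu})\le\phi^n(\vec{\mu})$. Combining $\vec{0}\le\vec{\mu}$ with iterated order-preservation yields the interlacing $\phi^n(\vec{0})\le\phi^n(\vec{\mu})\le\vec{\mu}$ for all $n$. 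Coordinate-wise monotone convergence in $[0,\infty]$ then guarantees the pointwise limits $\vec{\rho}^\star:=\sup_n\phi^n(\vec{0})$ and $\vec{\mu}^\star:=\inf_n\phi^n(\vec{\mu})$, together with the chain of inequalities $\vec{\rho}^\star\le\vec{\mu}^\star\le\vec{\mu}$.

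Next I would identify $\vec{\rho}^\star$ and $\vec{\mu}^\star$ as fixpoints of $\phi$. Monotonicity alone already delivers one half of each identity: from $\phi^n(\vec{0})\le\vec{\rho}^\star$ and order-preservation one gets $\phi^{n+1}(\vec{0})\le\phi(\vec{\rho}^\star)$ for every $n$, hence $\vec{\rho}^\star\le\phi(\vec{\rho}^\star)$, and symmetrically $\phi(\vec{\mu}^\star)\le\vec{\mu}^\star$. The reverse inequalities are the main obstacle at this level of abstraction: they require $\phi$ to commute with the monotone limits, that is, $\phi\bigl(\sup_n\vec{y}^{(n)}\bigr)=\sup_n\phi(\vec{y}^{(n)})$ for non-decreasing sequences and the analogue for non-increasing ones bounded above by some finite vector. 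I would invoke this as a Scott-type continuity hypothesis implicit in the Tarski-style adaptation being cited; in the intended application $\phi$ is the tree-operator~\eqref{eq_depthOneTreeOperator_Definition}, a power series in $\vec{\mu}$ with non-negative coefficients, so monotone convergence term by term delivers continuity from below, while dominance by $\phi(\vec{\mu})<\infty$ together with coordinate-wise continuity of the monomials delivers continuity from above. This yields $\phi(\vec{\rho}^\star)=\vec{\rho}^\star$ and $\phi(\vec{\mu}^\star)=\vec{\mu}^\star$.

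For the final iff statement, if $\vec{\rho}^\star\in[0,\infty[^{\LabelSpace}$, then $\vec{\rho}^\star$ itself is a finite element of $D_\phi$ because $\phi(\vec{\rho}^\star)=\vec{\rho}^\star\le\vec{\rho}^\star$, so $D_\phi\cap[0,\infty[^{\LabelSpace}\ne\emptyset$. Conversely, given any $\vec{\mu}\in D_\phi\cap[0,\infty[^{\LabelSpace}$, the interlacing derived above forces $\vec{\rho}^\star\le\vec{\mu}^\star\le\vec{\mu}<\infty$ coordinate-wise, so $\vec{\rho}^\star\in[0,\infty[^{\LabelSpace}$, closing the equivalence.
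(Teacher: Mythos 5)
Your reconstruction is correct and takes the natural Kleene-iteration route: establish monotonicity of both orbits, interlace them as $\phi^n(\vec 0)\le\phi^n(\vec\mu)\le\vec\mu$, obtain the coordinate-wise monotone limits $\vec\rho^\star$ and $\vec\mu^\star$ in the complete lattice $[0,\infty]^{\LabelSpace}$, push order-preservation through the limit to get the one-sided inequalities $\vec\rho^\star\le\phi(\vec\rho^\star)$ and $\phi(\vec\mu^\star)\le\vec\mu^\star$, and finally read off the equivalence from finiteness. The paper cites Fern\'andez--Procacci for this result and gives no proof of its own, so there is nothing to compare against directly, but your argument is exactly what the citation stands for.

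You are right to single out the step that needs more than order-preservation: without some form of continuity the iteration limit need not be a fixpoint, and the proposition is in fact false as literally stated. A one-dimensional counterexample on $X=[0,\infty]$: let $\phi(x)=(x+1)/2$ for $x<1$ and $\phi(x)=\infty$ for $x\ge 1$. This $\phi$ is order-preserving and $D_\phi=\{\infty\}$ is nonempty, so the hypothesis holds; yet $\phi^n(0)\uparrow 1=\vec\rho^\star$ while $\phi(\vec\rho^\star)=\infty\ne\vec\rho^\star$, and the ``iff'' also fails since $\vec\rho^\star\in[0,\infty[$ but $D_\phi\cap[0,\infty[=\emptyset$. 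So the missing hypothesis is $\omega$-chain continuity from below, which you correctly identify as automatic for the tree-operators~\eqref{eq_depthOneTreeOperator_Definition} (power series with non-negative coefficients, monotone convergence term by term). One small caution on your continuity-from-above remark: when $\vec\mu$ has infinite coordinates, $\phi(\vec\mu)<\infty$ may fail and the asserted dominance argument does not apply; but since the downstream uses in the paper — $\vec\rho^\star$ being the least fixpoint and the finiteness equivalence — rely only on continuity from below, your repair is the right one for what is actually needed.
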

\subsection{Proof of the greedy scheme}
\label{sec_schemeGreedyProof}
\begin{proof}[Proof of proposition~\ref{prop_schemePenrose}]
If we admit that the partition in algorithm~\ref{alg_treeEdgeComplementPartitionPenrose} satisfies the compatibility condition~\eqref{eq_treeEdgeComplementPartitionCompatibility}, then proposition~\ref{prop_schemePenrose} is a direct consequence of proposition~\ref{prop_schemeGeneric}. Thus, we show~\eqref{eq_treeEdgeComplementPartitionCompatibility} for $\ExplorationAlgorithm{\SchemePenrose}$ and $\AdmissibleEdgesTree{\SchemePenrose}$.\\

Fix $\Tree\in\SpanningTrees{G}$ and $H\in\ExplorationAlgorithm{\SchemeReturning}^{-1}(\Tree)$. Let $0\le k\le l$, $j\in L_k$, $i\in L_l$ and $e:=(i,j)\in E\setminus\Edges{\Tree}$. Let $(H_n)_{n\in\NatNumZero}$ and $(T_n)_{n\in\NatNumZero}$ be the  sequences associated with $H$ from algorithm~\ref{alg_explorationReturning}. For $\emptyset\not=F\subseteq E\setminus\Edges{H}$ set $\tilde{H}:=(I,\Edges{H}\uplus F)$ and let $(\tilde{H}_n)_{n\in\NatNumZero}$ and $(\tilde{T}_n)_{n\in\NatNumZero}$ be its associated sequences from algorithm~\ref{alg_explorationReturning}.\\

For $e\in F$, let $m(e)$ be the level of $i\Confluent j$. We claim that for each $N\in\NatNum$:
\begin{equation}\label{eq_penroseNoInfluence}
 \left(\ForAll e\in F:\,m(e)\ge N\right)
 \Then
 (H_n|_{T_n})_{n=0}^N = (\tilde{H}_n|_{\tilde{T}_n})_{n=0}^N\,.
\end{equation}

We show~\eqref{eq_penroseNoInfluence} by induction over $n$. By definition $H_0|_{T_0}=\tilde{H}_0|_{\tilde{T}_0}$. For the induction step from $n<N$ to $n+1$, we show that algorithm~\ref{alg_explorationGreedy} is not influenced by the presence of such an $e\in F$. The presence of $e$ does not change $\tilde{P}_n$ compared to $P_n$ in \StepGreedyBoundary{}.\\

Observe that if $e\in\AdmissibleEdgesTree{\SchemeReturning}$, then by~\eqref{eq_adm_pen} $m(e)=k$. To show~\eqref{eq_adm_gen} we assume that $F\subseteq\AdmissibleEdgesTree{\SchemeReturning}$. If $e$ is of type~\eqref{eq_adm_pen_sameLevel}, then it is removed during iteration $(k-1)$ by \StepGreedyCousins{}. If $e$ is of type~\eqref{eq_adm_pen_uncle}, then it is removed by \StepGreedyUncles{} during iteration $k$.\\

To show~\eqref{eq_conf_gen} we assume that $F':=F\cap\ConflictingEdgesTree{\SchemePenrose}\not=\emptyset$. Choose $e\in F$ with $m(e)=N:=\min\Set{m(f): f\in F'}$. We demonstrate that the presence of $e$ causes $(H_n|_{T_n})_{n\in\NatNumZero}$ to diverge from $(\tilde{H}_n|_{\tilde{T}_n})_{n\in\NatNumZero}$ exactly at level $N+1$, that is~\eqref{eq_penroseNoInfluence} holds and $H_{N+1}|_{T_{N+1}}\not=\tilde{H}_{N+1}|_{\tilde{T}_{N+1}}$. We go through all the cases of~\eqref{eq_conf_pen}:\\

Case $e$ of type~\eqref{eq_conf_pen_diffTwo}: Here $N=k$. The presence of $e$ lets \StepGreedyBoundary{} \ActionSelect{} $i\in\tilde{P_k}$.\\

Case $e$ of type~\eqref{eq_conf_pen_uncle}: Here $N=k$. The presence of $e$ lets \StepGreedyParent{} \ActionSelect{} $i$ as the parent of $i$ in $\tilde{T}_{N+1}$ instead of $\Parent{i}$.
\end{proof}
\subsection{Pinned connected function vs one polymer partition ratios}
\label{sec_pinnedVsOPPR}
The relations in this section are either to interpreted formally or if a SCUB holds. The pinned connected function is a product of one polymer partition ratios~\cite[(3.8)]{Scott_Sokal__TheRepulsiveLatticeGasTheIndependentSetPolynomialAndTheLovaszLocalLemma__JSP_2005}:
\begin{equation}\label{eq_product_finite}
 \PinnedConnectedExpression{\FiniteVolume}{\Polymer}(\vec{z})
 = \frac{\PartitionFunction{\FiniteVolume\setminus\IncompatiblePolymers{\Polymer}}(\vec{z})}
  {\PartitionFunction{\FiniteVolume}(\vec{z})}
 = \frac{1}{\OPPR{\FiniteVolume}{\Polymer}(\vec{z})}
 \prod_{i=1}^m \frac{1}{
  \OPPR{\FiniteVolume\setminus\Set{\Polymer,\PolymerOther_1,\dotsc,\PolymerOther_{i-1}}}
  {\PolymerOther_i}(\vec{z})
 }\,,
\end{equation}
where $\Set{\PolymerOther_1,\dotsc,\PolymerOther_m} := \FiniteVolume\cap\IncompatibleOtherPolymers{\Polymer}$. On the other hand, the logarithm of the reduced correlations is an integral over the pinned connected function~\cite[(A.3)]{Bissacot_Fernandez_Proccaci_Scoppola__AnImprovementOfTheLovaszLocalLemmaViaClusterExpansion__CPC_2011}:
\begin{equation}\label{eq_integration_finite}
 \log\OPPR{\FiniteVolume}{\Polymer}(\vec{z})
 = z_\Polymer \int_0^1
  \PinnedConnectedExpression{\FiniteVolume}{\Polymer}(\vec{z}(\alpha)) \,d\alpha
\end{equation}
with
\begin{equation*}
 \vec{z}(\alpha):=\begin{cases}
  z_\PolymerOther&\text{if }\PolymerOther\not=\Polymer\\
  \alpha z_{\Polymer}&\text{if }\PolymerOther=\Polymer\,.
 \end{cases}
\end{equation*}
Taking the limit $\FiniteVolume\Exhausts\PolymerSet$, we have the following relations between the pinned series and the one polymer partition ratios in the classic case:
\begin{equation}\label{eq_product_classic}
 \PinnedSeries{\Polymer}(\vec{\rho})
 \le \frac{1}{\OPPR{\PolymerSet}{\Polymer}(-\vec{\rho})}
 \prod_{\PolymerOther\in\IncompatibleOtherPolymers{\Polymer}}
  \frac{1}{\OPPR{\PolymerSet}{\PolymerOther_i}(-\vec{\rho})}\,,
\end{equation}
Thus the finiteness of $\PinnedSeries{}(\vec{\rho})$ is equivalent to the positivity of $\OPPR{\PolymerSet}{}$. The converse relation follows from~\eqref{eq_integration_finite}:
\begin{equation}\label{eq_integration_classic}
 \OPPR{\PolymerSet}{\Polymer}(-\vec{\rho})
 \ge\exp\left(-\rho_\Polymer \PinnedSeries{\Polymer}(\vec{\rho})\right)\,.
\end{equation}

In the escaping case we have
\begin{equation}\label{eq_product_escaping}
 \PinnedSeriesEscaping{\EscapePair}(\EscapeInjection(\vec{\rho}))
 \le\frac{1}{\OPPR{\PolymersMinusEscapeSet}{\Polymer}(-\vec{\rho})}
 \prod_{\PolymerOther\in\IncompatibleOtherPolymers{\Polymer}}
  \frac{1}{\OPPR{\PolymerSet\setminus\Set{\Polymer}}{\PolymerOther_i}(-\vec{\rho})}
\end{equation}
and
\begin{equation}\label{eq_integration_escaping}
  \OPPR{\PolymersMinusEscapeSet}{\Polymer}(-\vec{\rho})
 \ge\exp\left( -\rho_\Polymer
  \PinnedSeriesEscaping{\EscapePair}(\EscapeInjection(\vec{\rho})))
 \right)\,.
\end{equation}

The relationship between classic and escaping quantities is given by the following inequalities and comparisons. The pinned series is exactly

\begin{equation}\label{eq_relation_exactPinnedSeries}
 \PinnedSeries{\Polymer}(\vec{\rho})
 = \frac{1}{\OPPR{\PolymerSet}{\Polymer}(-\vec{\rho})}
 \prod_{\PolymerOther\in\IncompatibleOtherPolymers{\Polymer}}
  \frac{1}{\OPPR{\PolymerSet\setminus\Set{\Polymer}}{\PolymerOther_i}(-\vec{\rho})}\,.
\end{equation}
The one polymer partition ratios compare as
\begin{equation}\label{eq_relation_OPPR}
 \OPPR{\PolymerSet}{\Polymer}(-\vec{\rho})
 =
 1 - \frac{\rho_\Polymer}{
  \prod_{i=1}^m
   \OPPR{\PolymerSet\setminus
      \Set{\Polymer,\PolymerOther_{i+1},\dotsc,\PolymerOther_m}
   }{\PolymerOther_i}(-\vec{\rho})
 }
 \ge 1 - \frac{\rho_\Polymer}{
  \prod_{i=1}^m
   \OPPR{\PolymerSet\setminus\Set{\Polymer}}
    {\PolymerOther_i}(-\vec{\rho})
 }\,,
\end{equation}
where $\Set{\PolymerOther_1,\dotsc,\PolymerOther_m} := \IncompatibleOtherPolymers{\Polymer}$. The pinned series fulfil the identity
\begin{equation}\label{eq_relation_pinnedSeries}
 \PinnedSeries{\Polymer}(\vec{\rho})
 = \Bigl(
  1+\rho_\Polymer
  \PinnedSeries{\Polymer}(\vec{\rho})
 \Bigr)
 \PartitionFunction{\IncompatibleOtherPolymers{\Polymer}}
  \Bigl(
   \EscapeInjection(\vec{\rho})\,
   \PinnedSeriesEscaping{(.,\Polymer)}
    (\EscapeInjection(\vec{\rho}))
  \Bigr)\,.
\end{equation}
\subsection{Depth k tree-operators}
\label{sec_depthKTreeOperators}
The aim of this section is to show a generic result for depth $k$ recursive constructions of weighted, labelled, finite trees and convergence of certain series of those trees. It generalizes~\cite[proposition $8$ and parts of proposition $7$]{Fernandez_Procacci__ClusterExpansionForAbstractPolymerModels_NewBoundsFromAnOldApproach__CMP_2007}.\\

Let $\RTsDepth{n}$ be the \emph{set of rooted, finite trees of depth $n$}, $\RTsDepthLessThan{n} := \biguplus_{m\le n} \RTsDepth{m}$ the set of rooted, finite trees of depth at most $n$ and $\RTsDepthFinite := \biguplus_{m\in\NatNumZero} \RTsDepth{m}$ the set of rooted, finite trees. Denote by $\RTPermutationCount{t}$ the number of rooted automorphisms of $t$ (they fix the root). Let $\LabelSpace$ be a countable set of labels. A function $c:\LabelSpace^{t}\to [0,\infty[$ is \emph{$t$-invariant}, if it is invariant under all rooted automorphism of $t$. We denote by $\RTLevel{t}{i}$ the \emph{$i$-th level of $t$} and by $\RTNonRootVertices{t}$ the non-root vertices of $t$. Finally, $\RTKSubtreeAt{t}{v}$ is the \emph{rooted subtree} of $t$ with root $v$ and depth $k$.\\

Let $\RTsSize{n}$ be the set of trees with vertex set $\IntRange{0,n}$ and root $0$. Consequently $\RTsSizeFinite := \biguplus_{n\ge 0} \RTsSize{n}$. For $\RTVarSize\in\RTsSizeFinite$, we denote its unlabelled version by $t(\RTVarSize)\in\RTsDepthFinite$.\\

\begin{Prop}\label{prop_depthKTreeOperator}
Fix $k\in\NatNum$. Let $(c_t)_{t\in\RTsDepthLessThan{k}}$ be a collection of non-negative functions, with each $c_t$ being $t$-invariant. Let $X:=[0,\infty]^{\LabelSpace}$ and $\vec{\rho}\in X$. Consider the operator $T_{\vec{\rho}}: X\to X$, defined $\ForAll l\in\LabelSpace$ by
\begin{subequations}\label{eq_depthKTreeOperator}
\begin{equation}\label{eq_depthKTreeOperatorDefinition}
 \vec{\mu}\mapsto
 [T_{\vec{\rho}}(\vec{\mu})]_l
 :=
  \sum_{t\in\RTsDepthLessThan{k}}
   \frac{1}{\RTPermutationCount{t}}
  \sum_{\vec{\lambda}\in \Set{l}\times\LabelSpace^{\Cardinality{\RTNonRootVertices{t}}}}
   c_t(\vec{\lambda})
  \prod_{\substack{v\in\RTLevel{t}{i}\\0\le i\le k-1}}
   \rho_{\lambda_v}
  \prod_{w\in\RTLevel{t}{k}}
   \mu_{\lambda_v}\,.
\end{equation}
If there exists $\vec{\mu}\in X$ such that
\begin{equation}\label{eq_depthKTreeOperatorVectorCondition}
 T_{\vec{\rho}}(\vec{\mu})\le\vec{\mu}\,,
\end{equation}
then the family of series, indexed by $\LabelSpace$,
\begin{equation}\label{eq_depthKTreeOperatorSeriesDefinition}
 R_l(\vec{\rho}):=
 \sum_{n\ge 0}
  \frac{1}{\Factorial{n}}
 \sum_{\vec{\lambda}\in\Set{l}\times\LabelSpace^n}
 \sum_{\RTVarSize\in\RTsSize{n}}
 \prod_{\substack{i\in\IntRange{0,n}\\d_\RTVarSize(0,i)=0\operatorname{mod}k}}
  c_{\RTVarDepth(\RTKSubtreeAt{\RTVarSize}{i})}
   (\vec{\lambda}_{\RTKSubtreeAt{\RTVarSize}{i}})
 \prod_{j=0}^n \rho_{\lambda_j}
\end{equation}
converges uniformly, as
\begin{equation}\label{eq_depthKTreeOperatorConvergenceAndMajoration}
 R(\vec{\rho})
 =\lim_{n\to\infty} T_{\vec{\rho}}^n(\vec{\rho})
 =T_{\vec{\rho}}(R(\vec{\rho}))
 \le \vec{\mu}\,.
\end{equation}
\end{subequations}
\end{Prop}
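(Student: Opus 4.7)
First I would establish that the operator $T_{\vec{\rho}}$ is order-preserving on $X$: since every factor $c_t(\vec{\lambda})$, $\rho_{\lambda_v}$, $\mu_{\lambda_w}$ is non-negative, the map $\vec{\mu}\mapsto T_{\vec{\rho}}(\vec{\mu})$ is coordinate-wise monotone in $\vec{\mu}$. The depth-$0$ tree contributes $c_{\bullet}(l)\,\rho_l$ to $[T_{\vec{\rho}}(\vec{0})]_l$ (under the standard normalisation $c_{\bullet}\equiv 1$), and more generally non-negative contributions from depth $<k$ trees give $\vec{\rho}\le T_{\vec{\rho}}(\vec{\rho})$. Together with the hypothesis $T_{\vec{\rho}}(\vec{\mu})\le\vec{\mu}$, iterated monotonicity yields that $(T_{\vec{\rho}}^n(\vec{\rho}))_{n\in\NatNumZero}$ is non-decreasing and bounded above by $\vec{\mu}$. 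Hence $\vec{r}^\star:=\lim_n T_{\vec{\rho}}^n(\vec{\rho})\in X$ exists, satisfies $\vec{r}^\star\le\vec{\mu}$, and is a fixpoint of $T_{\vec{\rho}}$ by monotone continuity. This part is a direct generalisation of proposition~\ref{prop_treeConditionLatticeLimit}.

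The core step is the identification of $T_{\vec{\rho}}^n(\vec{\rho})_l$ with the partial sum of the series $R_l(\vec{\rho})$ ranging over $\RTVarSize\in\RTsSizeFinite$ of depth at most $nk$. I would proceed by induction on $n$. The base case $n=1$ is~\eqref{eq_depthKTreeOperatorDefinition}, after translating between the sum over unlabelled rooted trees $t\in\RTsDepthLessThan{k}$ weighted by $1/\RTPermutationCount{t}$ and the sum over vertex-labelled trees $\RTVarSize\in\RTsSize{m}$ weighted by $1/\Factorial{m}$; the equivalence rests on the standard orbit-counting identity that the number of distinct labellings of the unlabelled shape $t$ with vertex set $\IntegerSet{0,m}$ rooted at $0$ equals $\Factorial{m}/\RTPermutationCount{t}$. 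For the induction step I would use $T_{\vec{\rho}}^{n+1}(\vec{\rho})_l=T_{\vec{\rho}}\bigl(T_{\vec{\rho}}^n(\vec{\rho})\bigr)_l$: this corresponds to grafting onto every depth-$k$ vertex of an outer depth-$\le k$ tree one of the depth-$\le nk$ trees enumerated by the inductive hypothesis, producing a depth-$\le(n+1)k$ tree whose natural decomposition into depth-$k$ slabs matches the layering along $d_\RTVarSize(0,\cdot)\equiv 0\!\pmod k$ used in~\eqref{eq_depthKTreeOperatorSeriesDefinition}. Multiplicativity of the $c_{\RTVarDepth(\RTKSubtreeAt{\RTVarSize}{i})}$-factors across disjoint depth-$k$ slabs, together with the independence of label choices along the grafting, ensures that the coefficient obtained by the grafting operation coincides with the coefficient attached to the grafted tree in~\eqref{eq_depthKTreeOperatorSeriesDefinition}.

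Once the identification is established, $\lim_n T_{\vec{\rho}}^n(\vec{\rho})_l=R_l(\vec{\rho})$ follows from monotone convergence of non-negative series, since both sides are suprema of the same non-decreasing family of finite partial sums. Combined with the first paragraph this gives $R(\vec{\rho})=\vec{r}^\star\le\vec{\mu}$ and $T_{\vec{\rho}}(R(\vec{\rho}))=R(\vec{\rho})$, i.e.~\eqref{eq_depthKTreeOperatorConvergenceAndMajoration}. Uniform convergence on compact subsets of the admissible parameter region follows from Dini's theorem applied to the monotone pointwise limit.

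The main obstacle will be the combinatorial bookkeeping of the symmetry factors in the grafting step: when several depth-$k$ slabs with possibly isomorphic shapes are attached to the same parent, the automorphism group of the composed tree may properly contain the product of the automorphism groups of its slabs, and the discrepancy must be compensated exactly by the multinomial coefficient distributing the vertex labels in $\IntegerSet{0,N}$ among the slabs. The structural analogue of this calculation has been carried out for $k=1$ in proposition~\ref{prop_depthOneTreeOperator}, and the depth-$k$ extension is essentially an iterated bookkeeping on top of that, layered by the depth-$k$ slab structure of $\RTVarSize$.
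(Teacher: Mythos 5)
Your proof follows essentially the same route as the paper's: order-preservation of $T_{\vec{\rho}}$, identification of the iterates with partial sums of $R$, a monotone limit, and an appeal to the Tarski-style fixpoint result (proposition~\ref{prop_treeConditionLatticeLimit}). The two cosmetic differences worth noting: first, the paper iterates from $\vec{0}$ rather than from $\vec{\rho}$, so that $[T_{\vec{\rho}}^n(\vec{0})]_l$ matches exactly the partial sum of $R_l$ over trees of depth at most $nk-1$ with no hypothesis on the $c$-functions, while your iteration from $\vec{\rho}$ forces you to invoke the (unstated, though implicitly satisfied in the application) normalisation $c_{\bullet}\equiv 1$ in order to get $\vec{\rho}\le T_{\vec{\rho}}(\vec{\rho})$ and to interpret $T_{\vec{\rho}}^n(\vec{\rho})$ as a partial sum of $R$; second, your closing invocation of Dini's theorem for ``uniform convergence'' misreads the statement --- ``uniformly'' there refers to the family being bounded uniformly in the label $l$ by $\vec{\mu}$, encoded directly in~\eqref{eq_depthKTreeOperatorConvergenceAndMajoration}, not to uniform convergence over a parameter set, so the Dini step is superfluous. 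Like the paper, you leave the genuinely tedious part --- matching $1/\RTPermutationCount{t}$ against $1/\Factorial{n}$ and controlling automorphisms under grafting of depth-$k$ slabs --- as an announced but uncarried-out computation (the paper delegates it to~\cite[section 5.4.5]{Temmel_thesis}), so neither proof is self-contained there.
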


\begin{proof}
Omitting some tedious rewriting of $R(\vec{\rho})$, which can be found in~\cite[section 5.4.5]{Temmel__PropertiesAndApplicationsOfBernoulliRandomFieldsWithStrongDependencyGraphs__TUG_2012}, we obtain that $\ForAll l\in\LabelSpace$:
\begin{equation*}
 [T_{\vec{\rho}}^n(\vec{0})]_l
 =
 \sum_{\RTVarDepth\in\RTsDepthLessThan{nk-1}}
 \sum_{\vec{\lambda}\in\Set{l}\times\LabelSpace^{\RTNonRootVertices{\RTVarDepth}}}
 \sum_{m=0}^{n-1}
 \sum_{v\in\RTLevel{\RTVarDepth}{mk}}
  \frac{
   c_{\RTKSubtreeAt{\RTVarDepth}{v}}
   (\vec{\lambda}_{\RTNonRootVertices{\RTKSubtreeAt{\RTVarDepth}{v}}})
   }
   {\RTPermutationCount{\RTKSubtreeAt{\RTVarDepth}{v}}}
 \prod_{v\in\RTVarDepth} \rho_{\lambda_v}\,.
\end{equation*}
Taking the monotone increasing limit, we see that
\begin{equation*}
 R(\vec{\rho})
 = \lim_{n\to\infty} [T_{\vec{\rho}}^n(\vec{0})]\,.
\end{equation*}
The operator $T_{\vec{\rho}}$ has only non-negative coefficients, hence is order preserving. Apply proposition~\ref{prop_treeConditionLatticeLimit} together with condition~\eqref{eq_depthKTreeOperatorVectorCondition} to see that $R(\vec{\rho})$ is the least fixpoint of $T_{\vec{\rho}}$ and that~\eqref{eq_depthKTreeOperatorConvergenceAndMajoration} holds.
\end{proof}

The following proposition is an extension of arguments in~\cite[appendix]{Bissacot_Fernandez_Proccaci_Scoppola__AnImprovementOfTheLovaszLocalLemmaViaClusterExpansion__CPC_2011}:

\begin{Prop}
\label{prop_depthKTreeOperatorSeriesBounds}
Suppose that $T_{\vec{\rho}}\ge\vec{\rho}\,(\vec{1}+\vec{\mu})$ and~\eqref{eq_depthKTreeOperatorVectorCondition} holds. Decompose $T_{\vec{\rho}}=:\vec{\rho}\,(\Identity{X}+S_{\vec{\rho}})$ by splitting off the root of each tree. Likewise decompose $R(\vec{\rho})=:\vec{\rho}\,Q(\vec{\rho})$. Then $S_{\vec{\rho}}\ge\vec{0}$ is order-preserving, $\vec{\rho}<\vec{1}$ and we refine~\eqref{eq_depthKTreeOperatorConvergenceAndMajoration} to
\begin{equation}\label{eq_depthKTreeOperatorSeriesBounds}
 \ForAll l\in\LabelSpace:
 \quad[Q(\vec{\rho})]_l
 \le\frac{[S_{\vec{\rho}}(\vec{\mu})]_l}{1-\rho_l}
 <\infty\,.
\end{equation}
\end{Prop}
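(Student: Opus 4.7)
The plan is to combine the fixed-point identity $R(\vec{\rho}) = T_{\vec{\rho}}(R(\vec{\rho}))$ from~\eqref{eq_depthKTreeOperatorConvergenceAndMajoration} with the additive decomposition $T_{\vec{\rho}} = \vec{\rho}(\Identity{X} + S_{\vec{\rho}})$ to derive a closed-form expression for $Q(\vec{\rho})$, and then majorize it using $R \le \vec{\mu}$ together with order-preservation of $S_{\vec{\rho}}$.

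First I would dispatch the auxiliary claims. Reading~\eqref{eq_depthKTreeOperatorDefinition}, $T_{\vec{\rho}}$ is a formal sum over labelled trees with non-negative coefficients in its argument; the decomposition peels off one specific monomial, of the form $\rho_l \mu_l$ (encoding a minimal one-step extension of the root with matching label), so the remainder $S_{\vec{\rho}}$ is again a power series in $\vec{\mu}$ with non-negative coefficients, hence $S_{\vec{\rho}} \ge \vec{0}$ and $S_{\vec{\rho}}$ preserves the lattice order. The strict bound $\vec{\rho} < \vec{1}$ follows by chaining $\vec{\rho}(\vec{1} + \vec{\mu}) \le T_{\vec{\rho}}(\vec{\mu}) \le \vec{\mu}$ from the two standing hypotheses: pointwise this gives $\rho_l(1 + \mu_l) \le \mu_l$, hence $\rho_l \le \mu_l/(1 + \mu_l) < 1$ for finite $\mu_l$.

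Second I would substitute the decomposition into the fixed-point equation. Component-wise this reads $R_l = \rho_l R_l + \rho_l [S_{\vec{\rho}}(R)]_l$, which rearranges to $R_l(1 - \rho_l) = \rho_l [S_{\vec{\rho}}(R)]_l$. Dividing by $\rho_l(1 - \rho_l) > 0$ (legal by the previous step) and using the definitional relation $R = \vec{\rho}\,Q$ yields
\[
 [Q(\vec{\rho})]_l = \frac{[S_{\vec{\rho}}(R(\vec{\rho}))]_l}{1 - \rho_l}\,.
\]
Invoking the majorization $R \le \vec{\mu}$ from~\eqref{eq_depthKTreeOperatorConvergenceAndMajoration} and monotonicity of $S_{\vec{\rho}}$ gives $[S_{\vec{\rho}}(R)]_l \le [S_{\vec{\rho}}(\vec{\mu})]_l$, producing the claimed inequality. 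Finiteness of the numerator is immediate from $\rho_l [S_{\vec{\rho}}(\vec{\mu})]_l = [T_{\vec{\rho}}(\vec{\mu})]_l - \rho_l \mu_l \le \mu_l - \rho_l \mu_l < \infty$.

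The main obstacle is conceptual rather than computational: one must verify that the decomposition $T_{\vec{\rho}} = \vec{\rho}(\Identity{X} + S_{\vec{\rho}})$ is actually well-posed, i.e., that $T_{\vec{\rho}}$ genuinely contains the $\rho_l \mu_l$ monomial needed to make $S_{\vec{\rho}}$ non-negative. This is exactly what the lower-bound hypothesis $T_{\vec{\rho}} \ge \vec{\rho}(\vec{1} + \vec{\mu})$ guarantees—it ensures both the identity contribution $\vec{\rho}\vec{\mu}$ and a non-negative constant $\vec{\rho}$ left inside $\vec{\rho}\,S_{\vec{\rho}}$. Once that structural point is granted, the rest of the argument collapses to the short chain of manipulations sketched above, and the geometric factor $(1 - \rho_l)^{-1}$ emerges naturally from pinning the root label in the fixed-point equation.
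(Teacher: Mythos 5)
Your proof is correct and follows essentially the same route as the paper: factor out the root weight via $T_{\vec{\rho}}=\vec{\rho}(\Identity{X}+S_{\vec{\rho}})$, plug into the fixpoint identity $R=T_{\vec{\rho}}(R)$, use $R\le\vec{\mu}$ together with monotonicity of $S_{\vec{\rho}}$, and solve the resulting linear relation in $Q$. The only organizational difference is cosmetic: you first extract the exact identity $[Q(\vec{\rho})]_l=[S_{\vec{\rho}}(R(\vec{\rho}))]_l/(1-\rho_l)$ and then majorize, whereas the paper majorizes $Q\le\vec{\rho}\,Q+S_{\vec{\rho}}(\vec{\mu})$ first and rearranges afterwards; both yield \eqref{eq_depthKTreeOperatorSeriesBounds}. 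One small point to tighten: your step ``Dividing by $\rho_l(1-\rho_l)>0$ (legal by the previous step)'' uses $\rho_l>0$, but the previous step only gave $\rho_l<1$. The paper handles this explicitly by restricting to a polymer system without the zero coordinates of $\vec{\rho}$; when $\rho_l=0$ one has $R_l=0$ and $Q_l$ is only defined modulo the convention $R=\vec{\rho}\,Q$, so that coordinate of \eqref{eq_depthKTreeOperatorSeriesBounds} is vacuous. Worth a sentence, but not a gap in the argument.
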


\begin{proof}
As $R(\vec{\rho})$ and $Q(\vec{\rho})$ increase in $\vec{\rho}$, we assume without loss of generality that $\vec{\mu}\ge\vec{\rho}>0$ or fall back on a smaller polymer system omitting the $0$ entries of $\vec{\rho}$. We  have
\begin{equation*}
 \vec{\rho}\,(\vec{1}+\vec{\mu})
 \le T_{\vec{\rho}}(\vec{\mu})\le\vec{\mu}\,,
\end{equation*}
whereby
\begin{equation*}
 \ForAll l\in\LabelSpace:\quad
 \rho_l\le\frac{\mu_l}{1+\mu_l}< 1\,.
\end{equation*}
Thus
\begin{equation*}
 \vec{\rho}\,Q(\vec{\rho})
 = R(\vec{\rho})
 = T_{\vec{\rho}}(R(\vec{\rho}))
 = \vec{\rho}\,(\Identity{X}+S_{\vec{\rho}})(R(\vec{\rho}))
 = \vec{\rho}\,R(\vec{\rho})+S_{\vec{\rho}}(R(\vec{\rho}))\,,
\end{equation*}
whence, ignoring $0$ entries of $\vec{\rho}$ and as $S_{\vec{\rho}}$ is order-preserving,
\begin{equation*}
 Q(\vec{\rho})
 = \vec{\rho}\,Q(\vec{\rho})+S_{\vec{\rho}}(R(\vec{\rho}))
 \le \vec{\rho}\,Q(\vec{\rho})+S_{\vec{\rho}}(\vec{\mu})\,.
\end{equation*}
\end{proof}
\subsection{Pinned series depth k approximation}
\label{sec_depthKTreeApproximation}
This section extends~\cite[proposition 7]{Fernandez_Procacci__ClusterExpansionForAbstractPolymerModels_NewBoundsFromAnOldApproach__CMP_2007} to level $k$ approximations. By~\eqref{eq_product_classic} an upper bound on the pinned series~\eqref{eq_pinnedSeries} is equivalent to a lower bound on the one polymer partition ratios. For basic notation see section~\ref{sec_depthKTreeOperators}.\\

\begin{Prop}\label{prop_depthKTreeApproximation}
Fix $k\in\NatNum$. Suppose that we have a family of $t$-invariant functions $(c_t)_{t\in\RTsDepthLessThan{k}}$ such that
\begin{subequations}\label{eq_depthKTreeApproximation}
\begin{equation}\label{eq_depthKTreeApproximationMajoration}
 \ForAll\vec{\PolymerOther}\in\PolymerSet^{n+1}:\quad
 \Modulus{\Ursell{\PolymerOther_0,\dotsc,\PolymerOther_n}}\le
 \sum_{\RTVarSize\in\RTsSize{n}}
 \prod_{\substack{i\in\IntRange{0,n}\\d_\RTVarSize(0,i)=0\operatorname{mod}k}}
  c_{t(\RTKSubtreeAt{\RTVarSize}{i})}
   (\vec{\PolymerOther}_{\RTKSubtreeAt{\RTVarSize}{i}})\,.
\end{equation}
Let $X:=[0,\infty]^{\PolymerSet}$ and $\vec{\rho}\in X$. Consider the operator $T_{\vec{\rho}}: X\to X$ defined $\ForAll\Polymer\in\PolymerSet$ by
\begin{equation}\label{eq_depthKTreeApproximationOperator}
 \vec{\mu}\mapsto
 [T_{\vec{\rho}}(\vec{\mu})]_\Polymer
 :=
  \sum_{t\in\RTsDepthLessThan{k}}
   \frac{1}{\RTPermutationCount{t}}
  \sum_{\vec{\PolymerOther}\in \Set{\Polymer}\times\PolymerSet^{\Cardinality{\RTNonRootVertices{t}}}}
   c_t(\vec{\PolymerOther})
  \prod_{\substack{v\in\RTLevel{t}{i}\\0\le i\le k-1}}
   \rho_{\PolymerOther_v}
  \prod_{w\in\RTLevel{t}{k}}
   \mu_{\PolymerOther_v}\,.
\end{equation}
If there exists $\vec{\mu}\in X$ such that
\begin{equation}\label{eq_depthKTreeApproximationVectorCondition}
 T_{\vec{\rho}}(\vec{\mu})\le\vec{\mu}\,,
\end{equation}
then $\vec{\rho}<\vec{1}$ and
\begin{equation}\label{eq_depthKTreeApproximationConvergence}
 \vec{\rho}\,\PinnedSeries{}(\vec{\rho})\le\vec{\mu}\,.
\end{equation}
This implies that $\vec{\rho}\in\InteriorOf\PDPS$ and $\vec{\rho}<\vec{1}$.
\end{subequations}
\end{Prop}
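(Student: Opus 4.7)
My plan is to recognise the series $R(\vec{\rho})$ of Proposition~\ref{prop_depthKTreeOperator} as a pointwise majorant of the pinned series~\eqref{eq_pinnedSeries}, and to read off all conclusions from that Proposition combined with the standard dictionary between the pinned series and the one polymer partition ratios recalled in section~\ref{sec_pinnedVsOPPR}.

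First, I would substitute the bound~\eqref{eq_depthKTreeApproximationMajoration} term by term into the expansion~\eqref{eq_pinnedSeries} of $\rho_\Polymer\PinnedSeries{\Polymer}(\vec{\rho})$ and absorb the prefactor $\rho_\Polymer$ into the product $\prod_{j=0}^n \rho_{\PolymerOther_j}$. With the identifications $\LabelSpace=\PolymerSet$ and $l=\Polymer$, each resulting term matches the corresponding term of $R_\Polymer(\vec{\rho})$ from~\eqref{eq_depthKTreeOperatorSeriesDefinition}, yielding $\rho_\Polymer\PinnedSeries{\Polymer}(\vec{\rho})\le R_\Polymer(\vec{\rho})$. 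Because the operator~\eqref{eq_depthKTreeApproximationOperator} is precisely~\eqref{eq_depthKTreeOperatorDefinition} in this setting, hypothesis~\eqref{eq_depthKTreeApproximationVectorCondition} is exactly~\eqref{eq_depthKTreeOperatorVectorCondition}, so Proposition~\ref{prop_depthKTreeOperator} applies and delivers $R(\vec{\rho})\le\vec{\mu}$, which establishes~\eqref{eq_depthKTreeApproximationConvergence}.

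Second, for $\vec{\rho}<\vec{1}$: the finite upper bound $\rho_\Polymer\PinnedSeries{\Polymer}(\vec{\rho})\le\mu_\Polymer$ inserted into the classical estimate~\eqref{eq_integration_classic} gives $\OPPR{\PolymerSet}{\Polymer}(-\vec{\rho})\ge\exp(-\mu_\Polymer)>0$ for every $\Polymer$. The telescoping identity~\eqref{eq_telescoping} then shows $\PartitionFunction{\FiniteVolume}(-\vec{\rho})>0$ for every $\FiniteVolume\FiniteSubsetOf\PolymerSet$, so $\vec{\rho}\in\PDPS$; the inclusion $\PDPS\subseteq[0,1[^\PolymerSet$ then yields $\vec{\rho}<\vec{1}$.

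Third, for $\vec{\rho}\in\InteriorOf\PDPS$, I would bootstrap by a small perturbation. Since $T_{\vec{\rho}}(\vec{\mu})$ is a power series in $(\vec{\rho},\vec{\mu})$ with non-negative coefficients, replacing $\vec{\mu}$ by $(1+\epsilon)\vec{\mu}$ and choosing $\vec{\rho}'>\vec{\rho}$ sufficiently close to $\vec{\rho}$ preserves $T_{\vec{\rho}'}((1+\epsilon)\vec{\mu})\le(1+\epsilon)\vec{\mu}$ coordinate-wise, so reapplying the first two steps places $\vec{\rho}'\in\PDPS$ and hence $\vec{\rho}\in\InteriorOf\PDPS$ by~\eqref{eq_parameterDiscInterior}. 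The main obstacle is guaranteeing uniform slack across $\PolymerSet$, which may be locally infinite; the cleanest fix is to verify the auxiliary lower bound $T_\vec{\rho}\ge\vec{\rho}(\vec{1}+\vec{\mu})$ of Proposition~\ref{prop_depthKTreeOperatorSeriesBounds} using the reflexivity of $\IncompatibleTo$ and $\Modulus{\Ursell{\Polymer,\Polymer}}=1$, which then provides the refined estimate $[R(\vec{\rho})]_\Polymer\le\rho_\Polymer[S_\vec{\rho}(\vec{\mu})]_\Polymer/(1-\rho_\Polymer)$ with explicit slack sufficient to propagate the perturbation uniformly.
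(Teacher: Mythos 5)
Your reduction of $\rho_\Polymer\PinnedSeries{\Polymer}(\vec{\rho})$ to $R_\Polymer(\vec{\rho})$ via the term-wise majoration~\eqref{eq_depthKTreeApproximationMajoration} and the subsequent invocation of Proposition~\ref{prop_depthKTreeOperator} is exactly the argument in the paper, and it is correct. For the conclusion $\vec{\rho}<\vec{1}$ you take a slightly different route: you invoke the known inclusion $\PDPS\subseteq[0,1[^\PolymerSet$ once $\vec{\rho}\in\PDPS$ is established, whereas the paper argues directly that $\rho_\Polymer\ge 1$ would force $\OPPR{\PolymerSet}{\Polymer}(-\vec{\rho})\le \OPPR{\Set{\Polymer}}{\Polymer}(-\vec{\rho})=1-\rho_\Polymer\le 0$ via~\eqref{eq_oppr_mon_volume} and the fundamental identity~\eqref{eq_fi_oppr}. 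Both arguments are valid; the paper's is self-contained, yours offloads the work to the cited property of $\PDPS$.

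On the third point, you are right to flag that $\vec{\rho}\in\InteriorOf\PDPS$ requires an argument beyond positivity of the limits $\OPPR{\PolymerSet}{\Polymer}(-\vec{\rho})$: positivity only places $\vec{\rho}$ in $\PDPS$, and $\InteriorOf\PDPS$ as defined in~\eqref{eq_parameterDiscInterior} demands some $\vec{\nu}>\vec{\rho}$ still in $\PDPS$. The paper's proof is terse here and simply asserts this implication from~\eqref{eq_integration_classic} without elaboration. However, the fix you propose does not close the gap as stated: the auxiliary inequality $T_{\vec{\rho}}\ge\vec{\rho}(\vec{1}+\vec{\mu})$ is an \emph{additional hypothesis} of Proposition~\ref{prop_depthKTreeOperatorSeriesBounds}, not a consequence of the majoration~\eqref{eq_depthKTreeApproximationMajoration}. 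The majoration with $n=0$ does give $c_{t_0}(\Polymer)\ge 1$ for the trivial tree, which yields $[T_{\vec{\rho}}(\vec{\mu})]_\Polymer\ge\rho_\Polymer$, but the term $\rho_\Polymer\mu_\Polymer$ requires a depth-$1$ contribution with $c_{t_1}(\Polymer,\Polymer)\ge 1$; from $\Modulus{\Ursell{\Polymer,\Polymer}}=1$ one only extracts the product bound $c_{t_0}(\Polymer)\,c_{t_1}(\Polymer,\Polymer)\ge 1$, which does not imply $c_{t_1}(\Polymer,\Polymer)\ge 1$ if $c_{t_0}(\Polymer)>1$, and for $k\ge 2$ the operator~\eqref{eq_depthKTreeApproximationOperator} does not even contain a $\rho_\Polymer\mu_\Polymer$ term (a path of length $k$ contributes $\rho_\Polymer^k\mu_\Polymer$ instead). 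So your perturbation strategy would need a genuinely different source of slack, or an explicit extra hypothesis on $(c_t)$, to go through. In short: you have correctly identified a gap in the paper's reasoning, but your proposed repair introduces an unverified assumption of its own.
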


\begin{proof}
Setting $\LabelSpace:=\PolymerSet$ in~\eqref{eq_depthKTreeOperatorSeriesDefinition}, condition~\eqref{eq_depthKTreeApproximationMajoration} implies that
\begin{multline*}
 [\vec{\rho}\,\PinnedSeries{}(\vec{\rho})]_{\Polymer}
 =\sum_{n\ge 0}
  \frac{1}{\Factorial{n}}
 \sum_{\vec{\PolymerOther}\in\Set{\Polymer}\times\PolymerSet^n}
  \Modulus{\Ursell{\vec{\PolymerOther}}}
 \prod_{i=0}^n \rho_{\PolymerOther_i}\\
 \le\sum_{n\ge 0}
  \frac{1}{\Factorial{n}}
 \sum_{\vec{\PolymerOther}\in\Set{\Polymer}\times\PolymerSet^n}
 \sum_{\RTVarSize\in\RTsSize{n}}
 \prod_{\substack{i\in\IntRange{0,n}\\d_\RTVarSize(0,i)=0\operatorname{mod}k}}
  c_{\RTVarDepth(\RTKSubtreeAt{\RTVarSize}{i})}
   (\vec{\PolymerOther}_{\RTKSubtreeAt{\RTVarSize}{i}})
 \prod_{j=0}^n \rho_{\PolymerOther_j}
 = [R(\vec{\rho})]_{\Polymer}\,.
\end{multline*}
Then~\eqref{eq_depthKTreeApproximationConvergence} follows from~\eqref{eq_depthKTreeOperatorConvergenceAndMajoration}. The relation~\eqref{eq_integration_classic} implies that $\vec{\rho}\in\InteriorOf\PDPS$. Finally, if $\rho_\Polymer\ge 1$, then~\eqref{eq_oppr_mon_volume} and the fundamental identity~\eqref{eq_fi_oppr} imply that $\OPPR{\PolymerSet}{\Polymer}(-\vec{\rho})\le 0$, a contradiction to $\vec{\rho}\in\InteriorOf\PDPS$.
\end{proof}
\subsection{One polymer partition ratios on a homogeneous tree}
\label{sec_oppr_homTree}
We give an example of a polymer system, for which the analyticity of the one polymer partition ratios breaks down in the limit at a boundary point of $\PDPS$, at a non-physical negative real fugacity.\\

\begin{Exam}\label{exam_unboundednessOnTheHomogeneousTree}
Let $(\PolymerSet,\IncompatibleTo)$ be isomorph to an infinite $D$-regular tree, modulo the loops at each vertex. Fix an end $E$, that is an equivalence class of rays, of the tree. For $\Polymer\in\PolymerSet$, let $F_\Polymer$ be the set of polymers further away from $E$ than $\Polymer$. These are the polymers, for which the ray starting at the polymer and being equivalent to $E$, passes through $\Polymer$. We have the decomposition
\begin{subequations}\label{eq_unboundednessOnTheHomogeneousTree}
\begin{equation}
 F_\Polymer :=
 \Set{\Polymer}\uplus
 \biguplus_{\PolymerOther\in\IncompatibleOtherPolymers{\Polymer}\setminus\Set{\EscapePolymer}}
 F_{\PolymerOther}\,,
\end{equation}
where $\EscapePolymer$ is the unique polymer incompatible with $\Polymer$ which is closer to $E$ than $\Polymer$. Homogeneous fugacity $\rho\vec{1}$ and the transitivity of $(\PolymerSet,\IncompatibleTo)$ imply that
\begin{equation}
 \alpha(\rho):=
 \OPPR{F_\Polymer}{\Polymer}(-\rho\vec{1})
\end{equation}
is well-defined for $\rho\le\PowerFracDualDMinusOne:=\rho^\star$~\cite{Shearer__OnAProblemOfSpencer__Comb_1985}. The fundamental identity~\eqref{eq_fi_oppr} yields
\begin{equation}
 \alpha(\rho) = 1 - \frac{\rho}{\alpha(\rho)^{D-1}}\,.
\end{equation}
By~\eqref{eq_oppr_mon_volume} we have the limit
\begin{equation}
 \lim_{\rho\to\rho^\star} \alpha(\rho) = \alpha(\rho^\star) = \frac{D-1}{D}\,.
\end{equation}
But the derivative is
\begin{equation}
 \left(\frac{\partial\alpha}{\partial z}\right)(\rho)
 = \frac{1}{\alpha(\rho)^{D-2} D[\alpha(\rho^\star)-\alpha(\rho)]}
\end{equation}
and diverges as we approach $\rho^\star$
\begin{equation}
 \lim_{\rho\to\rho^\star}
 \left(\frac{\partial\alpha}{\partial z}\right)(\rho)
 = \infty\,.
\end{equation}
\end{subequations}
\end{Exam}

\bibliographystyle{plain}
\bibliography{ref}

\begin{thebibliography}{10}

\bibitem{Bissacot_Fernandez_Proccaci__OnTheConvergenceOfClusterExpansionsForPolymerGases__JSP_2010}
Rodrigo Bissacot, Roberto Fern{\'a}ndez, and Aldo Procacci.
\newblock On the convergence of cluster expansions for polymer gases.
\newblock {\em J. Stat. Phys.}, 139(4):598--617, 2010.

\bibitem{Bissacot_Fernandez_Proccaci_Scoppola__AnImprovementOfTheLovaszLocalLemmaViaClusterExpansion__CPC_2011}
Rodrigo Bissacot, Roberto Fern{\'a}ndez, Aldo Procacci, and Benedetto Scoppola.
\newblock An improvement of the {L}ov\'asz local lemma via cluster expansion.
\newblock {\em Combin. Probab. Comput.}, 20(5):709--719, 2011.

\bibitem{Dijkstra__ANoteOnTwoProblemsInConnexionWithGraphs__NM_1959}
E.~W. Dijkstra.
\newblock A note on two problems in connexion with graphs.
\newblock {\em Numer. Math.}, 1:269--271, 1959.

\bibitem{Dobrushin__EstimatesOfSemiInvariantsForTheIsingModelAtLowTemperatures__AMST_1996}
R.~L. Dobrushin.
\newblock Estimates of semi-invariants for the {I}sing model at low
  temperatures.
\newblock In {\em Topics in statistical and theoretical physics}, volume 177 of
  {\em Amer. Math. Soc. Transl. Ser. 2}, pages 59--81. Amer. Math. Soc.,
  Providence, RI, 1996.

\bibitem{Erdos_Lovasz__ProblemsAndResultsOn3ChromaticHypergraphsAndSomeRelatedQuestions__CMSJB_1975}
P.~Erd{\H{o}}s and L.~Lov{\'a}sz.
\newblock Problems and results on {$3$}-chromatic hypergraphs and some related
  questions.
\newblock In {\em Infinite and finite sets ({C}olloq., {K}eszthely, 1973;
  dedicated to {P}. {E}rd{\H o}s on his 60th birthday), {V}ol. {II}}, pages
  609--627. Colloquia Mathematica Societatis J\'anos Bolyai, Vol. 10.
  North-Holland, Amsterdam, 1975.

\bibitem{Faris__CombinatoricsAndClusterExpansions__ProbSur_2010}
William~G. Faris.
\newblock Combinatorics and cluster expansions.
\newblock {\em Probab. Surv.}, 7:157--206, 2010.

\bibitem{Fernandez_Procacci__ClusterExpansionForAbstractPolymerModels_NewBoundsFromAnOldApproach__CMP_2007}
Roberto Fern{\'a}ndez and Aldo Procacci.
\newblock Cluster expansion for abstract polymer models. {N}ew bounds from an
  old approach.
\newblock {\em Comm. Math. Phys.}, 274(1):123--140, 2007.

\bibitem{Gruber_Kunz__GeneralPropertiesOfPolymerSystems__CMP_1971}
C.~Gruber and H.~Kunz.
\newblock General properties of polymer systems.
\newblock {\em Comm. Math. Phys.}, 22:133--161, 1971.

\bibitem{Korte_Vygen__CombinatorialOptimization_3rd__Springer_2006}
Bernhard Korte and Jens Vygen.
\newblock {\em Combinatorial optimization}, volume~21 of {\em Algorithms and
  Combinatorics}.
\newblock Springer-Verlag, Berlin, third edition, 2006.
\newblock Theory and algorithms.

\bibitem{Kotecky_Preiss__ClusterExpansionForAbstractPolymerModels__CMP_1986}
R.~Koteck\'{y} and D.~Preiss.
\newblock Cluster expansion for abstract polymer models.
\newblock {\em Comm. Math. Phys.}, 103(3):491--498, 1986.

\bibitem{MiracleSole__OnTheTheoryOfClusterExpansions__MPRF_2010}
S.~Miracle-Sol\'{e}.
\newblock On the theory of cluster expansions.
\newblock {\em Markov Process. Related Fields}, 16(2):287--294, 2010.

\bibitem{Penrose__ConvergenceOfFugacityExpansionsForClassicalSystems__SMFA_1967}
O.~Penrose.
\newblock Convergence of fugacity expansions for classical systems.
\newblock In T.~A. Bak, editor, {\em Statistical Mechanics: Foundations and
  Applications}, pages 101--, 1967.

\bibitem{Scott_Sokal__TheRepulsiveLatticeGasTheIndependentSetPolynomialAndTheLovaszLocalLemma__JSP_2005}
Alexander~D. Scott and Alan~D. Sokal.
\newblock The repulsive lattice gas, the independent-set polynomial, and the
  {L}ov\'asz {L}ocal {L}emma.
\newblock {\em J. Stat. Phys.}, 118(5-6):1151--1261, 2005.

\bibitem{Shearer__OnAProblemOfSpencer__Comb_1985}
J.~B. Shearer.
\newblock On a problem of {S}pencer.
\newblock {\em Combinatorica}, 5(3):241--245, 1985.

\bibitem{Tarski__ALatticeTheoreticalFixpointTheoremAndItsApplications__PacJM_1955}
Alfred Tarski.
\newblock A lattice-theoretical fixpoint theorem and its applications.
\newblock {\em Pacific J. Math.}, 5:285--309, 1955.

\bibitem{Temmel__PropertiesAndApplicationsOfBernoulliRandomFieldsWithStrongDependencyGraphs__TUG_2012}
Christoph Temmel.
\newblock {\em Properties and applications of {B}ernoulli random fields with
  strong dependency graphs}.
\newblock PhD thesis, Institut f\"{u}r Mathematische Strukturtheorie, TU Graz,
  2012.

\bibitem{Temmel__SufficientConditionsForUniformBoundsInAbstractPolymerSystemsAndExplorativePartitionSchemes__JSP_2014}
Christoph Temmel.
\newblock Sufficient conditions for uniform bounds in abstract polymer systems
  and explorative partition schemes.
\newblock {\em J. Stat. Phys.}, 157(6):1225--1254, 2014.

\bibitem{Ursell__TheEvaluationOfGibbsPhaseIntegralForImperfectGases__MPCPS_1927}
H.~D. Ursell.
\newblock The evaluation of {G}ibbs' phase-integral for imperfect gases.
\newblock {\em Mathematical Proceedings of the Cambridge Philosophical
  Society}, 23(06):685--697, 1927.

\end{thebibliography}

\end{document}